\documentclass[a4paper, 12pt]{amsart}
\usepackage[utf8]{inputenc}
\usepackage[english]{babel}
\usepackage{amsmath, amsthm, amssymb}
\usepackage{graphicx}
\usepackage[dvipsnames]{xcolor}
\usepackage{caption}
\usepackage{subcaption}
\usepackage{mathtools}
\usepackage{enumerate}
\usepackage{tikz}
\usepackage{tikz-cd}
\usepackage{enumitem}
\usepackage{hyperref}
\hypersetup{ 
	colorlinks = true,
	allcolors = red,
}
\usepackage[capitalize,noabbrev]{cleveref}
\usepackage{ytableau}
\usepackage{faktor}

\usepackage[a4paper]{geometry}
\geometry{a4paper, margin=2.5cm}

\usepackage{titlesec}
\titleformat{\section}
{\normalfont\Large\bfseries}{\thesection}{1em}{}[{\titlerule[0.8pt]}]

\newtheorem{theorem}{Theorem}[section]
\newtheorem{proposition}[theorem]{Proposition}
\newtheorem{lemma}[theorem]{Lemma}
\newtheorem{corollary}[theorem]{Corollary}

\theoremstyle{definition}
\newtheorem{definition}[theorem]{Definition}

\newtheorem{remark}[theorem]{Remark}

\newtheorem{example}[theorem]{Example}

\newtheorem{question}[theorem]{Question}

\DeclareMathOperator{\dist}{dist}

\DeclareMathOperator{\mindist}{dist}
\DeclareMathOperator{\val}{val}

\DeclareMathOperator{\GL}{GL}
\DeclareMathOperator{\SL}{SL}

\DeclareMathOperator{\End}{\mathrm{End}}

\DeclareMathOperator{\B}{B}
\DeclareMathOperator{\rev}{rev}
\DeclareMathOperator{\Aut}{Aut}
\DeclareMathOperator{\Gr}{Gr}
\DeclareMathOperator{\card}{card}
\DeclareMathOperator{\Sym}{Sym}
\DeclareMathOperator{\Lcal}{\mathcal{L}}
\DeclareMathOperator{\Ccal}{\mathcal{C}}

\newcommand{\R}{\mathbb{R}}
\newcommand{\Q}{\mathbb{Q}}

\newcommand{\Z}{\mathbb{Z}}

\newcommand{\F}{\mathbb{F}}
\newcommand{\Ocal}{\mathcal{O}}

\newcommand{\Bcal}{\mathcal{B}}

\newcommand{\minplus}{\,\underline{\oplus}\,}
\newcommand{\maxplus}{\,\overline{\oplus}\,}

\newcommand{\bigminplus}{\,\underline{\bigoplus}\,}

\newcommand{\msc}[1]{\href{https://zbmath.org/classification/?q=#1}{#1}}
\newcommand{\gen}[1]{\langle #1\rangle}

\numberwithin{equation}{section}

\makeatletter
\renewcommand\subsection{\@startsection{subsection}{2}%
	\z@{.5\linespacing\@plus.7\linespacing}{-.5em}%
	{\normalfont \bfseries}}
\makeatother

\makeatletter
\@namedef{subjclassname@2020}{%
	\textup{2020} Mathematics Subject Classification}
\makeatother

\begin{document}
	
	\title{Submodule codes as spherical codes in buildings}
	
	\author[M.~Stanojkovski]{Mima Stanojkovski}
	\address[Mima Stanojkovski]{Universit\`a di Trento, Dipartimento di Matematica}
	\email{mima.stanojkovski@unitn.it}
	
	\subjclass[2020]{\msc{94B60}, \msc{94B65}, \msc{94B25},  \msc{20E42}, \msc{51E24}, \msc{52B20}}
	
	\keywords{Submodule codes, subspace codes, spherical codes, chain rings, Sperner codes,
	Bruhat--Tits buildings, Sperner property, balls in buildings} 
	
	\date{\today}

	\begin{abstract}
We give a generalization of subspace codes by means of codes of modules over finite commutative chain rings. We define a new class of Sperner codes and use results from extremal combinatorics to prove the optimality of such codes in different cases. Moreover, we explain the connection with Bruhat--Tits buildings and show how our codes are the buildings' analogue of spherical codes in the Euclidean sense.
\end{abstract}	
	
	\maketitle
	

\section{Introduction}

\noindent
The codes studied in this paper can be viewed as a bridge of generalization between two worlds, that of subspace codes and that of spherical codes. More specifically our codes consist of \emph{equivalence classes} of modules over finite commutative \emph{chain rings}, which can be interpreted at the same time as subsets of spheres in \emph{Bruhat-Tits buildings}. In this introduction we will take a first glance at these connections and present the main questions that will be addressed in this document. 

\subsection{Spherical codes in the Euclidean setting}

Spherical codes in $\R^d$, equipped with the usual distance, are finite subsets of the unit sphere 
\[
\B_1=\{x=(x_1,\ldots,x_d)\in \R^{d} \mid x_1^2+\ldots+x_d^2=1\}.
\]
In this context, spherical codes can be constructed from sphere packings \cite[Section~1.2.4]{CS93} and find numerous applications in the field of telecommunication. In view of the applications, it is desirable to produce sizable codes of large internal distance and small length. Optimal codes are thus codes with the ``best possible'' coexistence constraints on the last requirements. More precisely, it is greatly interesting to determine which spherical codes present the most favourable relationship between their \emph{length}, \emph{minimum distance}, and \emph{cardinality}. Already for the small length value $d=3$, however, the last problem turns out to be very hard and not all \emph{optimal} codes are classified; cf.\ \cite[Section~3.3]{EriZin/01}. For a broad overview of spherical codes in this setting we refer the interested reader to \cite{EriZin/01}.

\subsection{Chain rings in coding theory}

Let $R$ be a commutative ring, which we assume to be unital. The ring $R$ is said to be a \emph{chain ring} if all of its ideals form a chain, i.e.\ if $I$ and $J$ are ideals of $R$, then $I\subseteq J$ or $J\subseteq I$. In this paper, only the case of commutative chain rings will be considered, though their definition extends also to the non-commutative case; cf.\ \cite[Section 2]{HonLan00}. Examples of finite commutative chain rings include 
\begin{enumerate}[label=$(\arabic*)$]
    \item $\Z/p^r\Z$, where $p$ is a prime number and $r$ a positive integer, and
    \item $(\Z/p^m\Z)[x]/(f(x))$, where $p$ is a prime number, $m$ a positive integer, and $f$ a monic polynomial that is irreducible modulo $p$.
\end{enumerate}
For more on the classification of finite commutative chain rings we refer to \cite{AA/22,ClaLi73,Hou01}.
In the present paper, we are mostly interested in viewing $R$ as a quotient of a discrete valuation ring $\Ocal_K$ by a power $\mathfrak{m}_K^r$ of its unique maximal ideal $\mathfrak{m}_K$, e.g.\ $\Ocal_K$ equals the $p$-adic integers $\Z_p$ or the ring $\F_q[[t]]$ of formal power series with nonnegative integer exponents and coefficients in the field $\F_q$. As can be found for instance in \cite[Section 2]{HonLan00}, finite chain rings are local and their unique maximal ideal $\mathfrak{m}$ is principal. Moreover, if $\pi$ generates $\mathfrak{m}$, then every ideal of $R$ is generated by a nonnegative power of $\pi$. Since $R$ is finite, there exists a minimal positive integer $r$, called the nilpotency class of $\mathfrak{m}$, with the property that $\mathfrak{m}^r=0$, equivalently that $\pi^r=0$. In addition, an elementary divisor type theorem holds for finitely generated modules over chain rings. There are several applications of finite chain rings in coding theory including linear codes \cite{Bla75,HonLan00,LiuLiu15} and cyclic codes \cite{CalSlo95,DiLo04,Gre97,NoSu00}, though to this author's best knowledge the consideration of codes consisting of modules over finite chain rings does not appear anywhere in the literature.

\subsection{Spherical codes of modules}
Let $R$ be a finite commutative chain ring and let $r\geq 1$ be such that the unique maximal ideal $\mathfrak{m}$ of $R$ satisfies $\mathfrak{m}^{r-1}\neq0$ and $\mathfrak{m}^r=0$. Let $\pi\in R$ satisfy $\mathfrak{m}=R\pi$ and let $V_r$ be a free $R$-module of rank $d\geq 2$, that is $V_r\cong R^d$. Write $\mathcal{L}(V_r)$ for the set of all $R$-submodules of $V_r$ and $\partial\mathcal{L}(V_r)$ for the \emph{boundary} of $\mathcal{L}(V_r)$:
\[
\partial\mathcal{L}(V_r)=\{U\in \mathcal{L}(V_r) \mid \pi^{r-1}V_r\not\subseteq U\not\subseteq \pi V_r\}.
\]
Defining the map $\dist:\partial\mathcal{L}(V_r)\times \partial\mathcal{L}(V_r)\rightarrow\Z$ by 
\[
(U_1,U_2)\mapsto \dist(U_1,U_2)=\min\{m\in\Z_{\geq 0} \mid \pi^{m}U_1\subseteq U_2\}+\min\{n\in\Z_{\geq 0} \mid \pi^{n}U_2\subseteq U_1\}
\]
gives $\partial\mathcal{L}(V_r)$ the additional structure of a metric space. The last distance can be extended to the whole of $\mathcal{L}(V_r)$ modulo \emph{homothety}; cf.\ \cref{sec:module-distance}. Moreover, for $r=1$, 
one can see that $\dist$ does not coincide with the subspace metric or the injection metric on $\mathcal{L}(V_1)$; cf.\ \cite[Section~1]{KSK09}.  
A \emph{spherical code} in $V_r$ is then a subset $\Ccal$ of $\partial\mathcal{L}(V_r)$ of cardinality at least $2$ and its \emph{minimum distance} is 
\[
\dist(\Ccal)=\min\{\dist(U_1,U_2)\mid U_1,U_2\in\Ccal,\, U_1\neq U_2\}.
\]
Spherical codes in $V_r$ are natural generalizations of subspace codes, though the attribute ``spherical'' comes from interpreting $\partial\mathcal{L}(V_r)$ as a sphere of modules, cf.\ \cref{prop:ball-sphere-modules}. 
In this manuscript, we address and give answers to the following question:
\begin{center}
{\em
For a given integer $\psi$, what are the largest spherical codes $\mathcal{C}$ in $V_r$ 

with the property that $\mindist(\Ccal)\geq\psi$? 
}
\end{center}
The largest codes associated to a given minimum distance are called \emph{optimal}. 
If $\psi=1$, then there is a unique optimal code of minimum distance $1$, namely $\partial\mathcal{L}(V_r)$: we compute its cardinality in \cref{sec:counting}. In general, good candidates for optimal codes  are the \emph{Sperner codes} that we define in \cref{sec:Sperner} using Grassmannians of $R$-modules. Such codes are defined starting from the parameters $(d, R,\alpha)$ where $\psi=2\alpha$ is taken to be even.
In \cref{thm:sperner-codes}, we compute the cardinality and minimum distance of a Sperner code with parameters $(d,R,\alpha)$, yielding general bounds on the maximal size of codes of minimum distance $2\alpha$; cf.\ \cref{cor:general-bound}. In \cref{sec:extremal}, we use results from extremal combinatorics to prove that Sperner codes are optimal when $\alpha=r$ or $d=2$; cf.\ \cref{th:dist=2r} and \cref{th:d=2}. We move on to the construction, in \cref{sec:permutation}, of optimal codes in a subfamily of $\partial\mathcal{L}(V_r)$ indexed by tuples of positive integers.
More concretely, let $\partial\mathcal{L}_{\bf e}(V_r)$ denote the collection of boundary $R$-submodules of $V_r$ that can be generated \emph{compatibly with} a basis ${\bf e}=(e_1,\ldots,e_d)$ of $V_r$ over $R$, i.e.\ modules of the form
\[
U=R\pi^{\delta_1}e_1\oplus\ldots \oplus R\pi^{\delta_d}e_d, \textup{ where } 0\leq \delta_i\leq r,\ \{0,r\}\subseteq \{\delta_1,\ldots,\delta_d\}.
\]
Generalizing \cite[Chapter~4]{EriZin/01}, a \emph{permutation code} is a spherical code in $V_r$ that is contained in $\partial\mathcal{L}_{\bf e}(V_r)$ and whose elements form one orbit under the natural action of the symmetric group $\Sym(d)$ on $\partial\mathcal{L}_{\bf e}(V_r)$. In \cref{th:permutation} we give bounds on minimum distance and cardinality of a permutation code in terms of its defining parameters. 

\subsection{The connection to Bruhat-Tits buildings}

Write $R=\Ocal_K/\mathfrak{m}_K^r$ and consider the natural projection $\Ocal_K^d\rightarrow R^d\cong V_r$. Via the last map we identify every submodule of $V_r$ with the unique maximal free $\Ocal_K$-submodule of $\Ocal_K^d$ mapping to it. Such a module is called a \emph{lattice} in $K^d$. The collection of lattices in $K^d$, considered up to \emph{homothety}, forms the collection of $0$-simplices of the \emph{Bruhat-Tits Building} $\mathcal{B}_d(K)$ of $\SL_d(K)$. In this infinite simplicial complex, $s$-simplices are given by chains $L_1\supset L_2\supset\ldots\supset L_s\supset \pi L_1$ of lattices and maximal simplices all have size $d$. Transporting $\dist$ from the module setting to the buildings context (see \cref{sec:dist-balls-buildings}) via the above projection, one can then interpret $\partial\mathcal{L}(V_r)$ as 
a sphere $\partial\B_r$ in $\mathcal{B}_d(K)$, cf.\ \cref{th:isometry}, and thus spherical codes in $V_r$ as spherical codes in $\B_r$.
To the best of our knowledge, this is the very first instance in which spherical codes in \emph{affine} buildings are studied, adding  yet another item to the already long list of applications of buildings; cf.\  \cref{sec:spherical-in-BT}. In our closing \cref{sec:counting} we give formulas and asymptotics for the number of elements in a ball of radius $r$ in $\mathcal{B}_d(K)$; cf.\ \cref{th:asymptotics}. As a consequence, we derive densities of spherical codes in analogy to the ones found in 
 \cite{BHKW22} for linear codes over finite chain rings. 

\subsection{A note on the underlying geometry and combinatorics}

Contrarily to what happens in the Euclidean context, a sphere in $\mathcal{B}_d(K)$ is not a homogeneous space, but is rather to be thought of as the collection of boundary points of a lattice polytope and Sperner codes arise as strategically chosen subsets of the polytope's vertices. 
As we deal with a discrete set, it is interesting and important to understand how the number of elements of $\partial\B_r$ depends on the size $q$ of the residue field of $K$. This count and its asymptotic behaviour has been included in \cref{sec:counting} as it seemed not to be explicitly available in the literature already. The count is much easier and independent of $q$ when one restricts to the analogue $\partial\B_r\cap\mathcal{A}$ of $\mathcal{L}_{\bf e}(V_r)$ in the building. Indeed, in such case we are considering a slice of $\partial\B_r$ by an affine $d$-dimensional space resulting in a polytope that is both convex in the usual and in the tropical sense; cf.\ \cref{sec:permutation}.

\subsection{Notation}
Throughout the paper, let $d\geq 2$ and $r\geq 1$ denote two integers. 
Let $R$ be a finite commutative chain ring with maximal ideal $\mathfrak{m}$ generated by $\pi$ and such that $\pi^r=0$, but $\pi^{r-1}\neq 0$.  Write $q=|R/\mathfrak{m}|$ for the cardinality of the residue field of $R$. Let $V_r$ denote a free $R$-module of rank $d$ and fix ${\bf e}=(e_1,\ldots,e_d)$ to be a basis of $V_r$ over $R$. If $r=1$, then $R$ is a field and we simply write $V=V_1$. Let ${\bf 1}$ denote the vector $(1,\ldots,1)\in\Z^d$, let $\Sym(d)$ denote the symmetric group on $d$ letters, and let $J_d$ denote the integral $(d\times d)$-matrix with $0$'s
 on the diagonal and $1$'s elsewhere. 
 Set, additionally
	\begin{align*}
	    \mathcal{E}_r^{(d)} 
	    &=\{\varepsilon=(\varepsilon_1,\ldots,\varepsilon_{d-1},\varepsilon_d=0)\in\Z^d \mid r\geq \varepsilon_1\geq\ldots\geq \varepsilon_{d-1}\geq 0\}, \\
	    \partial \mathcal{E}_r^{(d)}
	    & =\{\varepsilon=(\varepsilon_1=r,\varepsilon_2,\ldots,\varepsilon_{d-1},\varepsilon_d=0)\in\Z^d \mid r\geq \varepsilon_2\geq\ldots\geq \varepsilon_{d-1}\geq 0\}.
	\end{align*}
 In conclusion, for an indeterminate $X$, integers $a\geq b\geq 0$, and  $I=\{i_1,\ldots,i_{\ell}\}\subseteq \Z_{\geq 0}$, put
	\begin{align*}
	    \binom{a}{b}_X  & = \prod_{i=0}^{b-1}\frac{1-X^{a-i}}{1-X^{b-i}} \ \ \textup{ and }\ \
	    \binom{d}{I}_X  = \binom{d}{i_{\ell}}_X\binom{d}{i_{\ell-1}}_X\cdots \binom{d}{i_{1}}_X.
	\end{align*}
 
\section{The module distance}\label{sec:module-distance}

\noindent
In this section we define an equivalence relation on the set $\mathcal{L}(V_r)$ of all $R$-submodules of $V_r$ and a distance on the collection of its equivalence classes. 

\begin{definition}
Let $U$ be an element of $\mathcal{L}(V_r)$. Then $m_U\in\{0,\ldots,r\}$ is defined as
\[
m_U=
\max\{0\leq m\leq r\mid U\subseteq \pi^mV_r\}.
\]
Moreover, $\tilde{U}\in\mathcal{L}(V_r)$ is defined to be the kernel of the map 
\[V_r\longrightarrow V_r/U, \quad x\longmapsto \pi^{m_U}x+U.\]
\end{definition}

\noindent 
Note that $\tilde{U}$ is the unique maximal $R$-submodule of $V_r$ with the property that $\pi^{m_U}\tilde{U}=U$. 
In particular, we have that $U\subseteq \tilde{U}$ and, moreover, $m_U=0$ if and only if $\tilde{U}=U$. As a consequence, we have that 
\begin{align}
    \partial\mathcal{L}(V_r) & = \{U\in \mathcal{L}(V_r) \mid \pi^{r-1}V_r\not\subseteq U\not\subseteq \pi V_r\}\label{eq:def-boundary} \\
    & \subseteq  \{ U\in \mathcal{L}(V_r) \mid \tilde{U}=U\}.\label{eq:inclusion-tilde}
\end{align}

\begin{definition}
Modules $U$ and $U'$ in $\mathcal{L}(V_r)$ are \emph{homothetic} whenever $\tilde{U}=\tilde{U'}$.
\end{definition}

\noindent
Homothety defines an equivalence relation $\sim$ on $\mathcal{L}(V_r)$ and we write
\[
\mathcal{L}^0(V_r)=\ \faktor{\mathcal{L}(V_r)}{\sim}\ = \{\,[U]=\{U'\in \mathcal{L}(V_r) \mid \tilde{U}=\tilde{U'}\} \mid U\in\mathcal{L}(V_r)\}
\]
for the collection of homothety classes of elements of $\mathcal{L}(V_r)$. Note that $[V_r]=[0]$ has cardinality $r+1$ and the cardinality of each $[U]\in\mathcal{L}^0(V_r)\setminus\{[0]\}$ is at most $r$. Moreover, it is not difficult to see that  $\partial\mathcal{L}(V_r)$ can be identified with the collection of equivalence classes in $\mathcal{L}^0(V_r)$ with exactly one element. With a slight abuse of notation, we thus write
\begin{equation}\label{eq:boundary-class}
\partial\mathcal{L}(V_r)=\{[U]\in\mathcal{L}^0(V_r) \mid [U]=\{U\}\}.
\end{equation}

\noindent
We define a metric on $\mathcal{L}^0(V_r)$, which does not generalize the subspace or the injection metric; cf.\ \cite[Section~1]{KSK09}.

\begin{definition}\label{def:dist-mod}
	Let $[U],[U_1],[U_2] \in {\mathcal L}^0(V_r)$ denote
	homothety classes of modules.
	Define 
	\begin{align*}
	n_{12} & =\min\{m\in\Z_{\geq 0} \mid \pi^{m}\tilde{U_1}\subseteq \tilde{U_2}\} \textup{ and }\\
	n_{21}&=\min\{n\in\Z_{\geq 0} \mid \pi^{n}\tilde{U_2}\subseteq \tilde{U_1}\}.
	\end{align*}
	Then the \emph{distance} between $[U_1]$ and $[U_2]$ is
$$\dist([U_1],[U_2])= n_{12}+n_{21} .$$
	For a subset ${\mathcal M}  \subseteq {\mathcal L}^0(V_r)$, put $\dist([U],{\mathcal M} ) = \min \{\dist ([U],[U']) \mid [U']\in\mathcal{M} \}$.
	\end{definition}


\noindent
The next result gives that $\mathcal{L}^0(V_r)$ equipped with $\dist$ is a metric space.

\begin{lemma}\label{lem:is-distance}
The map $\dist:\mathcal{L}^0(V_r)\times\mathcal{L}^0(V_r)\rightarrow\Z$ is a distance.
\end{lemma}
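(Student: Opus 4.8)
The plan is to verify the three axioms of a metric on $\mathcal{L}^0(V_r)$: non-negativity with the identity of indiscernibles, symmetry, and the triangle inequality. Throughout I would work with the canonical representatives $\tilde U$ of homothety classes, since $\dist$ is defined purely in terms of them, and I would record the elementary observation that for any submodules $A,B\subseteq V_r$ the quantity $\min\{m\in\Z_{\geq 0}\mid \pi^m A\subseteq B\}$ is well defined and lies in $\{0,\ldots,r\}$, because $\pi^r V_r=0\subseteq B$. Call this number $\nu(A,B)$, so that $\dist([U_1],[U_2])=\nu(\tilde U_1,\tilde U_2)+\nu(\tilde U_2,\tilde U_1)$.

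Symmetry is immediate from the definition, since the two summands simply swap. Non-negativity is clear as each summand is in $\Z_{\geq 0}$. For the identity of indiscernibles: if $[U_1]=[U_2]$ then $\tilde U_1=\tilde U_2$, so $\nu(\tilde U_1,\tilde U_2)=\nu(\tilde U_2,\tilde U_1)=0$ and the distance is $0$. Conversely, if $\dist([U_1],[U_2])=0$ then both $\nu(\tilde U_1,\tilde U_2)=0$ and $\nu(\tilde U_2,\tilde U_1)=0$, i.e.\ $\tilde U_1\subseteq\tilde U_2$ and $\tilde U_2\subseteq\tilde U_1$, hence $\tilde U_1=\tilde U_2$ and $[U_1]=[U_2]$.

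The triangle inequality is the substantive point, and it follows from a single sub-multiplicativity property of $\nu$: for submodules $A,B,C\subseteq V_r$ one has $\nu(A,C)\leq \nu(A,B)+\nu(B,C)$. Indeed, writing $a=\nu(A,B)$ and $b=\nu(B,C)$, we have $\pi^a A\subseteq B$, hence $\pi^{a+b}A=\pi^b(\pi^a A)\subseteq \pi^b B\subseteq C$, so $\nu(A,C)\leq a+b$ by minimality. Applying this with $(A,B,C)=(\tilde U_1,\tilde U_2,\tilde U_3)$ and with $(A,B,C)=(\tilde U_3,\tilde U_2,\tilde U_1)$ and adding the two inequalities gives
\[
\dist([U_1],[U_3]) = \nu(\tilde U_1,\tilde U_3)+\nu(\tilde U_3,\tilde U_1)\leq \bigl(\nu(\tilde U_1,\tilde U_2)+\nu(\tilde U_2,\tilde U_3)\bigr)+\bigl(\nu(\tilde U_3,\tilde U_2)+\nu(\tilde U_2,\tilde U_1)\bigr)=\dist([U_1],[U_2])+\dist([U_2],[U_3]),
\]
as desired.

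There is essentially no hard part here; the only thing to be careful about is that the statement is about homothety classes rather than modules, so one must check that everything is independent of the choice of representative. That is automatic because $\dist$ was defined via the distinguished representatives $\tilde U$, and $\tilde U$ depends only on $[U]$ by the definition of homothety. I would also note in passing that $\dist$ restricted to $\partial\mathcal{L}(V_r)$, viewed via \eqref{eq:boundary-class} as the classes $\{U\}$ with $\tilde U=U$, recovers the map $\dist$ from the introduction, so the lemma in particular justifies calling that a metric.
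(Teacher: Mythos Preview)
Your proof is correct and follows essentially the same approach as the paper: both reduce the triangle inequality to the elementary observation that $\nu(A,C)\leq\nu(A,B)+\nu(B,C)$ (equivalently $n_{12}\leq n_{13}+n_{32}$ in the paper's notation), proven by composing the inclusions $\pi^{\nu(A,B)}A\subseteq B$ and $\pi^{\nu(B,C)}B\subseteq C$. Your version is slightly more detailed in that you also spell out the identity of indiscernibles and well-definedness, which the paper dismisses as ``clear''.
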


\begin{proof}
We only show that the triangle inequality holds, as the other defining properties are clear. For this, let $[U_1],[U_2],[U_3]\in\mathcal{L}^0(V_r)$ and, for $i,j=1,2,3$, let $n_{ij}$ be as in \cref{def:dist-mod}. 
It follows from their definitions that 
\[
\pi^{n_{13}}\tilde{U_1}\subseteq \tilde{U_3}, \quad \pi^{n_{32}}\tilde{U_3}\subseteq \tilde{U_2}, \quad 
\pi^{n_{23}}\tilde{U_2}\subseteq \tilde{U_3}, \quad 
\pi^{n_{31}}\tilde{U_3}\subseteq \tilde{U_1},
\]
and so the minimalities of $n_{12}$ and $n_{21}$ yield 
\[
n_{12}\leq n_{13}+n_{32} \textup{ and } n_{21}\leq n_{23}+n_{31}.
\]
It follows from \cref{def:dist-mod} that $\dist([U_1],[U_2])\leq \dist([U_1],[U_3])+\dist([U_2],[U_3])$.
\end{proof}

\noindent
We remark that every element in $\mathcal{L}^0(V_r)$ has distance at most $r$ from $[V_r]$, equivalently the set $\mathcal{L}^0(V_r)$ can be interpreted as the \emph{ball of radius $r$ around $[V_r]$}:
\begin{equation}
    \mathcal{L}^0(V_r)=\B_r([V_r])=\{
    [U]\in\mathcal{L}^0(V_r) \mid \dist([U],[V_r])\leq r
    \}.
\end{equation}
In general, for each $\ell\in\{0,\ldots,r\}$, we set
\begin{align*}
    \B_\ell([V_r])&=\{[U]\in\mathcal{L}^0(V_r) \mid \dist([U],[V_r])\leq \ell\} \textup{ and }\\
    \partial\B_\ell([V_r])&=\{[U]\in\mathcal{L}^0(V_r) \mid \dist([U],[V_r])= \ell\}, 
\end{align*}
which we call the \emph{ball of radius $\ell$} and the \emph{sphere of radius $\ell$} around $[V_r]$, respectively.

\begin{example}\label{ex:d=2}
Assume that $R=\Z/32\Z$, in which case the maximal ideal of $R$ is generated by $\pi=2$ and $r=5$. \cref{fig:sperner-tree} illustrates the elements of $\mathcal{L}^0(V_5)$: in this picture two elements are joined by an edge if they have distance $1$. We look concretely at some of the elements of $\mathcal{L}^0(V_5)$ and at the distances between them.

If $U$ is the $R$-submodule generated by $4e_1$ and $8e_2$, then $m_U=2$ and $\tilde{U}$ is generated by $e_1$ and $2e_2$. Writing $\gen{X}$ for the $R$-submodule of $V_5$ generated by $X\subseteq V_5$, the tilde representatives of the classes in $\B_1(V_5)$ are
\[ V_5, \quad U_1=\gen{e_1,2e_2}, \quad U_2=\gen{2e_1,e_2}, \quad U_3=\gen{e_1+e_2, 2e_1}\]
while $\partial\B_1(V_5)=\{[U_1],[U_2],[U_3]\}$. Note that $\partial\B_1(V_5)$ is in $1$-to-$1$ correspondence with $\mathbb{P}(V_5/2V_5)$, i.e.\ the elements of $\partial\B_1(V_5)$ can be interpreted as lines in the $2$-dimensional vector space $V_5/2V_5$. Setting now $\ell=2$, we find the representatives of $\partial\B_2(V_5)$:
\[
U_{11}=\gen{e_1,4e_2}, \quad U_{12}=\gen{e_1+2e_2,4e_2}, \quad U_{21}=\gen{e_2,4e_1}, \quad U_{22}=\gen{2e_1+e_2,4e_1}, \] 
\[
U_{31}=\gen{e_1+e_2,4e_1}, \quad U_{32}=\gen{e_1+3e_2,4e_1}.
\]
In the following table we collect the distances within $\B_2(V_5)$:
\begin{center}
    \begin{tabular}{|c|c|c|c|c|c|c|c|c|c|}
    \hline
       & 1 & 2 & 3 & 11 & 12 & 21 & 22 & 31 & 32  \\
       \hline
    1 & 0 & 2 & 2 & 1 & 1 & 3 & 3 & 3 & 3 \\ \hline
     2 && 0 & 2 & 3 & 3 & 1 & 1 & 3 & 3 \\ \hline
    3 &&& 0 & 3 & 3 & 3 & 3 & 1 & 1 \\ \hline
  11 &&&& 0 & 2 & 4 & 4 & 4 & 4 \\ \hline
    12 &&&&& 0 & 4 & 4 & 4 & 4 \\ \hline
  21 &&&&&& 0 & 2 & 4 & 4 \\ \hline
22 &&&&&&& 0 & 4 & 4 \\ \hline
  31 &&&&&&&& 0 & 2 \\ \hline
    \end{tabular}
    \vspace{5pt}
\end{center}
The red dots in \cref{fig:sperner-tree} denote the elements of $\partial\B_3(V_5)$. Moreover, it turns out in this case that $\dist$ on $\mathcal{L}^0(V_5)$ coincides with the graph distance on \cref{fig:sperner-tree}. 
\end{example}

\begin{proposition}\label{prop:ball-sphere-modules}
For each $\ell\in\{0,\ldots,r\}$, the following hold:
\begin{enumerate}[label=$(\arabic*)$]
    \item $\partial\B_{\ell}([V_r])=\{[U]\in\mathcal{L}^0(V_r) \mid |\,[U]\,|= r-\ell+1\}$,
    \item $\B_{\ell}([V_r])=\{[U]\in\mathcal{L}^0(V_r) \mid |\,[U]\,|\geq r-\ell+1\}$.
\end{enumerate}
Moreover, one has $\partial\mathcal{L}(V_r)=\partial\B_r([V_r])$.
\end{proposition}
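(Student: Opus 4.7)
The plan is to attach to each homothety class $[U]\in\mathcal{L}^0(V_r)$ the single invariant $n_U=\min\{n\geq 0 \mid \pi^nV_r\subseteq\tilde U\}$ and to establish the two equalities $\dist([U],[V_r])=n_U$ and $|[U]|=r-n_U+1$ simultaneously. From these, item $(1)$ is immediate; item $(2)$ follows from the partition $\B_\ell([V_r])=\bigcup_{\ell'\leq\ell}\partial\B_{\ell'}([V_r])$; and the final sentence is the case $\ell=r$ of $(1)$ combined with the identification \eqref{eq:boundary-class} of $\partial\mathcal{L}(V_r)$ with the singleton classes.

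The first equality is essentially definitional: since $\tilde 0=V_r$ and $\tilde U\subseteq V_r$ always holds, the term $n_{12}$ in \cref{def:dist-mod} applied to $[U_2]=[V_r]$ vanishes, while $n_{21}$ is by construction $n_U$.

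For the cardinality formula I would show that $j\mapsto\pi^j\tilde U$ is a bijection between $\{0,1,\ldots,r-n_U\}$ and $[U]$. That every $M\in[U]$ is of this form is immediate from $M=\pi^{m_M}\tilde M=\pi^{m_M}\tilde U$. To decide when $\tilde{\pi^j\tilde U}=\tilde U$, I would first identify $\tilde{\pi^j\tilde U}$ as the maximal submodule $N\subseteq V_r$ satisfying $\pi^jN=\pi^j\tilde U$, and then use the elementary equivalence
\[\pi^jN\subseteq\pi^j\tilde U\ \Longleftrightarrow\ N\subseteq\tilde U+\pi^{r-j}V_r,\]
which rests on the fact that the annihilator of $\pi^j$ inside $V_r$ equals $\pi^{r-j}V_r$. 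Thus $\tilde{\pi^j\tilde U}=\tilde U+\pi^{r-j}V_r$, and this coincides with $\tilde U$ exactly when $\pi^{r-j}V_r\subseteq\tilde U$, i.e.\ when $j\leq r-n_U$. Pairwise distinctness of the modules $\pi^j\tilde U$ over this range is automatic from $m_{\pi^j\tilde U}=j$, itself a consequence of $\tilde U\not\subseteq\pi V_r$ (equivalently $m_{\tilde U}=0$).

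The main obstacle is the annihilator-based equivalence displayed above; everything else is straightforward bookkeeping. A little care is needed for the edge case $[U]=[V_r]$, where $n_U=0$ and the zero module $\pi^r\tilde U=0$ enters the class: one invokes the convention $m_0=r$ to keep $m_{\pi^j\tilde U}=j$ meaningful at $j=r$, and the formula correctly returns $|[V_r]|=r+1$, matching the observation made right after \cref{def:dist-mod}.
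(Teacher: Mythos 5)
Your proof is correct and follows the same route as the paper: reduce to the representative $\tilde U$, observe that $\dist([U],[V_r])=\min\{n\geq 0\mid\pi^nV_r\subseteq\tilde U\}$, and identify $[U]$ with $\{\pi^j\tilde U\mid 0\leq j\leq r-n_U\}$. The paper simply writes the chain of equivalences
\[
\dist([U],[V_r])=\ell \iff \ell=\min\{n\mid \pi^nV_r\subseteq U\} \iff [U]=\{\pi^jU\mid j=0,\ldots,r-\ell\} \iff |[U]|=r-\ell+1
\]
and leaves the middle step to the reader; you have supplied exactly the missing justification, namely that $\tilde{\pi^j\tilde U}=\tilde U+\pi^{r-j}V_r$ (via $\mathrm{Ann}_{V_r}(\pi^j)=\pi^{r-j}V_r$), hence $\tilde{\pi^j\tilde U}=\tilde U$ precisely when $j\leq r-n_U$, and that $m_{\pi^j\tilde U}=j$ forces the $\pi^j\tilde U$ to be pairwise distinct. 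The edge case $[U]=[V_r]$ (where $j=r$ yields the zero module and $m_0=r$) is handled correctly. Same argument, just with the bookkeeping written out.
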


\begin{proof}
Let $\ell\in\{0,\ldots,r\}$. We start by showing (1). For this, let $U\in\mathcal{L}(V_r)$ and assume without loss of generality that $U=\tilde{U}$. Then the following hold
\begin{align*}
    \dist([U],[V_r])=\ell &\ \Longleftrightarrow \ell=\min\{n\in\Z_{\geq 0} \mid \pi^nV_r\subseteq U \} \\
    &\ \Longleftrightarrow [U]=\{\pi^jU \mid j=0,\ldots,r-\ell\} \\
    &\ \Longleftrightarrow |\,[U]\,|=r-\ell+1
\end{align*}
and so (1) is proven. To show (2), we combine (1) to the observation that 
\[
\B_{\ell}([V_r])=\bigcup_{0\leq j\leq \ell}\partial\B_{j}([V_r])=\bigcup_{r-\ell+1\leq h\leq r+1}\{[U]\in\mathcal{L}^0(V_r)\mid |\,[U]\,|=h\}.
\]
The proof of (1) also shows that $\partial\mathcal{L}(V_r)=\partial\B_r([V_r])$.
\end{proof}

\section{Spherical submodule codes}\label{sec:codes}

\noindent
In this section we define spherical codes in $V_r$ as codes of submodules and prove some initial results. For a comparison with subspace codes see for instance \cite{KSK09} while for a comparison with spherical codes in the Euclidean case, we refer to \cite{EriZin/01}. 

\begin{definition}\label{def:mindist}
Let $\mathcal{C}$ be a subset of $\mathcal{L}^0(V_r)$ with $|\mathcal{C}|\geq 2$. Then the {\em minimum distance} of $\mathcal{C}$ is
\[
\mindist(\mathcal{C})=\min\{\dist([U],[U']) \mid [U],[U']\in\mathcal{C},\ [U]\neq [U']\}.
\]
\end{definition}

\noindent
Recall that $\partial\mathcal{L}(V_r)$ is a metric space equipped with the metric $\dist$ from \cref{sec:module-distance} via the identification in \eqref{eq:boundary-class}.

\begin{definition}\label{def:spherical-code}
A {\em spherical code} in $V_r$ is a subset $\mathcal{C}$ of $\partial\mathcal{L}(V_r)$
with at least $2$ elements.
\end{definition}

\noindent
The terminology ``spherical'' is motivated by \cref{prop:ball-sphere-modules}, from which it follows in particular that 
each element $[U]$ of a spherical code $\Ccal$ satisfies $\dist([U],[V_r])=r$. The proof of the next result is straightforward; compare also with the table in \cref{ex:d=2}.

\begin{lemma}\label{lem:basic-mindist}
For each spherical code $\mathcal{C}$ in $V_r$, one has $\mindist(\mathcal{C})\leq 2r.$
\end{lemma}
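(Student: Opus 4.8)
The plan is to bound the distance between any two boundary modules directly from Definition~\ref{def:dist-mod}, using the characterization of $\partial\mathcal{L}(V_r)$ in \eqref{eq:def-boundary}. The key observation is that membership in the boundary forces the relevant exponents $n_{12}$ and $n_{21}$ each to be at most $r$. Concretely, fix $[U_1],[U_2]\in\mathcal{C}\subseteq\partial\mathcal{L}(V_r)$, and recall from \eqref{eq:inclusion-tilde} that $\tilde{U_i}=U_i$, so the distance is $n_{12}+n_{21}$ with $n_{12}=\min\{m\geq 0\mid \pi^m U_1\subseteq U_2\}$ and symmetrically for $n_{21}$.

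**First I would** show $n_{12}\leq r$. By definition of $\partial\mathcal{L}(V_r)$, we have $U_2\not\subseteq \pi V_r$; I claim this forces $\pi^{r}U_1\subseteq U_2$, which already gives $n_{12}\leq r$. Indeed, $\pi^r U_1\subseteq \pi^r V_r=0\subseteq U_2$ since $\pi^r=0$ in $R$. So in fact the trivial bound $\pi^r V_r=0$ suffices and $n_{12}\leq r$ holds for \emph{any} two submodules; symmetrically $n_{21}\leq r$. Hence $\dist([U_1],[U_2])=n_{12}+n_{21}\leq 2r$. Taking the minimum over all distinct pairs in $\mathcal{C}$ yields $\mindist(\mathcal{C})\leq 2r$, as required. (The boundary hypothesis $\pi^{r-1}V_r\not\subseteq U\not\subseteq \pi V_r$ is what guarantees $\mathcal{C}$ consists of genuine singleton classes so that $\dist$ is well-defined on it, per \eqref{eq:boundary-class}, but the numerical bound itself only needs $\pi^r=0$.)

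**The main subtlety**, such as it is, lies not in any hard estimate but in making sure the statement is non-vacuous and the objects are well-typed: one must invoke $|\mathcal{C}|\geq 2$ from Definition~\ref{def:spherical-code} so that the minimum is over a nonempty set of distinct pairs, and one must use that elements of a spherical code lie in $\partial\mathcal{L}(V_r)=\partial\B_r([V_r])$ (Proposition~\ref{prop:ball-sphere-modules}) so that $\tilde{U}=U$ and the formula $\dist=n_{12}+n_{21}$ applies verbatim rather than through representatives. I would also remark, for context, that the bound $2r$ is attained — e.g.\ by a pair $U_1=R\pi^{\delta_1}e_1\oplus\cdots$ and a suitable permutation of its exponents lying in $\partial\mathcal{L}_{\bf e}(V_r)$ — so the inequality is sharp; but proving sharpness is not needed for this lemma and I would defer it. In short, I expect no real obstacle here: the proof is a two-line consequence of $\pi^r=0$, and the only care required is bookkeeping about homothety classes versus modules.
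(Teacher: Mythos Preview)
Your argument is correct. The bound $n_{12}\leq r$ follows immediately from $\pi^r=0$, and the rest is bookkeeping; there is no gap.

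The paper proceeds differently: it invokes the triangle inequality (Lemma~\ref{lem:is-distance}) through the center $[V_r]$, using that every element of a spherical code lies on $\partial\B_r([V_r])$ and hence has distance exactly $r$ from $[V_r]$, so
\[
\dist([U_1],[U_2])\leq \dist([U_1],[V_r])+\dist([V_r],[U_2])=r+r=2r.
\]
Your route is more elementary in that it avoids both the triangle inequality and Proposition~\ref{prop:ball-sphere-modules}, relying only on the algebraic fact $\pi^rV_r=0$. The paper's route is more geometric --- it is literally the statement that two points on a sphere of radius $r$ are at most $2r$ apart --- and this framing transports verbatim to the building picture in Section~\ref{sec:spherical-in-BT}. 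Both arguments are two lines long; yours is self-contained, the paper's is thematically aligned with the ``spherical'' terminology it is trying to justify.
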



\noindent
A spherical code in $V_r$ could in principle equal $\partial\mathcal{L}(V_r)$, so a universal yet weak bound on the cardinality of a spherical code is given by 
$|\partial\mathcal{L}(V_r)|$. For a precise count of the elements of $\partial\mathcal{L}(V_r)$ or $\mathcal{L}^0(V_r)$ we refer to \cref{sec:counting} via \cref{th:isometry}. 
The most interesting bounds for spherical codes come from relating $\dist(\mathcal{C})$ and $|\mathcal{C}|$.

\begin{definition}\label{def:XY}
Let $\chi,\psi$ denote integers satisfying $\chi\geq 2$ and $\psi\geq 1$. Define
\begin{enumerate}
    \item $\dist(d;R;\chi)=\max\{\dist(\mathcal{C}) \mid \mathcal{C}\subseteq \partial\mathcal{L}(V_r),\ |\mathcal{C}|\geq \chi\}$, 
    \item $\card(d;R;\psi)=\max\{|\mathcal{C}| \mid \mathcal{C}\subseteq \partial\mathcal{L}(V_r),\ \dist(\mathcal{C})\geq \psi \}$.
\end{enumerate}
\end{definition}

\noindent
Since (1) and (2) are somewhat dual to each other (see also the analogous definitions in the Euclidean case \cite[Section~2.3]{EriZin/01}), we will mostly be focussing on (2).

\begin{example}\label{ex:d=2-sec3}
 The blue dots in \cref{fig:sperner-tree} form a spherical code in $V_5$ with minimum distance $6$; cf.\ also \cref{ex:d=2}. In particular this shows that $\card(2;\Z/32\Z;6)\geq 12$ and $\dist(2;\Z/32\Z;12)\geq 6$. 
 \end{example}

\begin{definition}
Let $\mathcal{C}=\{[U_1],\ldots, [U_s]\}$ denote an ordered spherical code in $V_r$. The \emph{half-distance matrix} of $\mathcal{C}$ is $N(\mathcal{C})=(n_{ij})\in \Z^{s\times s}$ where 
\[
n_{ij}=
\min\{\beta\in\Z_{\geq 0} \mid \pi^{\beta}\tilde{U_i}\subseteq \tilde{U_j}.\} 
\]
\end{definition}

\noindent
Note that, as a consequence of \cref{prop:ball-sphere-modules}, if $\dist(\mathcal{C})=2r$, then $N(\mathcal{C})=rJ_s$.

\begin{remark}\label{rmk:dist-n_ij}
Let $\mathcal{C}=\{[U_1],\ldots, [U_s]\}$ be an ordered spherical code in $V_r$. 
The \emph{distance matrix} of $\mathcal{C}$ is $$D(\mathcal{C})=N(\mathcal{C})+N(\mathcal{C})^{\mathrm{t}}.$$
Then $D(\mathcal{C})=(\delta_{ij})$ is a symmetric matrix with the following properties:
\begin{enumerate}[label=$(\arabic*)$]
\item for each pair $(i,j)$, one has $\dist([U_i],[U_j])=\delta_{ij}=\delta_{ji}$,
\item $\dist(\mathcal{C})=\min\{\delta_{ij} \mid \delta_{ij}\neq 0\}$.
\end{enumerate}
\end{remark}

\noindent
The following proposition is easily seen to hold as a consequence of \cref{prop:ball-sphere-modules}.

\begin{proposition}\label{prop:2r<->modules}
Let $[U_1],[U_2]$ be in $\mathcal{L}^0(V_r)$. Then the following are equivalent:
\begin{enumerate}[label=$(\arabic*)$]
\item $\dist([U_1],[U_2])=2r$.
\item $[U_1],[U_2]\in\partial\mathcal{L}(V_r)$ and
$\pi^{r-1}U_1, \pi^{r-1}U_2\not\subseteq U_1\cap U_2$.
\end{enumerate}
\end{proposition}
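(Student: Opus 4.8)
The plan is to unwind the definitions on both sides and argue by double implication, using \cref{prop:ball-sphere-modules} to translate distance-to-$[V_r]$ into cardinalities of homothety classes. Since $\dist([U_1],[U_2])\le 2r$ always by \cref{lem:basic-mindist}'s argument (valid for any pair in $\mathcal{L}^0(V_r)$ via the triangle inequality once one checks both points lie on the sphere of radius $r$), the equality $\dist([U_1],[U_2])=2r$ should force the two summands $n_{12}$ and $n_{21}$ from \cref{def:dist-mod} each to equal $r$. I would first record this: since $n_{12},n_{21}\le r$ individually — because $\pi^r\tilde U_i=0\subseteq\tilde U_j$ — the sum is $2r$ if and only if $n_{12}=n_{21}=r$. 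In particular $\pi^{r-1}\tilde U_1\not\subseteq\tilde U_2$ and $\pi^{r-1}\tilde U_2\not\subseteq\tilde U_1$, and also $\dist([U_i],[V_r])=n_{i,V_r}$ is squeezed to $r$ by the triangle inequality, so by \cref{prop:ball-sphere-modules} each $[U_i]$ lies in $\partial\mathcal{L}(V_r)=\partial\B_r([V_r])$; via \eqref{eq:boundary-class} this means $[U_i]=\{U_i\}$ and $U_i=\tilde U_i$. This disposes of the first half of condition (2).

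Next I would translate the submodule conditions. With $U_i=\tilde U_i$, the statement $n_{12}=r$ reads: $\pi^r U_1\subseteq U_2$ (trivially) but $\pi^{r-1}U_1\not\subseteq U_2$. I claim $\pi^{r-1}U_1\not\subseteq U_2$ is equivalent to $\pi^{r-1}U_1\not\subseteq U_1\cap U_2$: indeed $\pi^{r-1}U_1\subseteq U_1$ automatically, so $\pi^{r-1}U_1\subseteq U_2\iff \pi^{r-1}U_1\subseteq U_1\cap U_2$. Symmetrically for $U_2$. Thus $\dist([U_1],[U_2])=2r$ gives exactly $\pi^{r-1}U_1\not\subseteq U_1\cap U_2$ and $\pi^{r-1}U_2\not\subseteq U_1\cap U_2$ together with $[U_1],[U_2]\in\partial\mathcal{L}(V_r)$, which is (2). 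This proves (1)$\Rightarrow$(2).

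For the converse, assume (2). Since $[U_i]\in\partial\mathcal{L}(V_r)$ we have $U_i=\tilde U_i$ by \eqref{eq:boundary-class}/\eqref{eq:inclusion-tilde}, so the $n_{ij}$ of \cref{def:dist-mod} are computed directly from $U_1,U_2$. From $\pi^{r-1}U_1\not\subseteq U_1\cap U_2$ and the reduction above, $\pi^{r-1}U_1\not\subseteq U_2$, hence $n_{12}>r-1$, i.e.\ $n_{12}=r$; symmetrically $n_{21}=r$, so $\dist([U_1],[U_2])=2r$. The one subtlety to handle carefully is the case $[U_1]=[U_2]$: if the two classes coincided then $\dist$ would be $0\ne 2r$, so I should observe that condition (2) forces $U_1\ne U_2$ — indeed if $U_1=U_2$ then $U_1\cap U_2=U_1$ and $\pi^{r-1}U_1\subseteq U_1$, contradicting (2) — which also matches the $(1)\Rightarrow(2)$ direction where distance $2r\ne 0$ already gives distinctness. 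I expect the only mild obstacle is bookkeeping the equivalence between ``$\pi^{r-1}U_i\not\subseteq U_j$'' and ``$\pi^{r-1}U_i\not\subseteq U_1\cap U_2$'' and making sure the boundary membership in (2) is genuinely used (it is what guarantees $U_i=\tilde U_i$, so that \cref{def:dist-mod} can be read off the modules themselves); everything else is a direct unwinding of definitions together with \cref{prop:ball-sphere-modules}.
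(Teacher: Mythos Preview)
Your proposal is correct and follows essentially the same approach as the paper's proof: both directions hinge on the observation that $n_{12},n_{21}\le r$ (since $\pi^r\tilde U_i=0$), so $\dist=2r$ forces $n_{12}=n_{21}=r$, together with the triangle-inequality squeeze $\dist([U_i],[V_r])=r$ to place the classes on the boundary and identify $U_i=\tilde U_i$. Your write-up is more explicit than the paper's (which dispatches $(2)\Rightarrow(1)$ in one line and handles $(1)\Rightarrow(2)$ by contradiction), but the underlying ideas are identical.
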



\section{Grassmannians and Sperner codes}\label{sec:Sperner}

\noindent
In this section we build spherical codes in $V_r$ starting from modules highlighted by the investigation of the Sperner property in finite abelian $p$-groups; cf.\ \cite{Spe28,Sta91,Wang98}.
We call a subset $\mathcal{K}$ of a poset $\mathcal{P}$ a \emph{chain} if any two of its elements are \emph{comparable}, i.e.
\[
a,b \in \mathcal{K} \ \Longrightarrow a\preceq b \textup{ or } b\preceq a.
\]
On the contrary, an \emph{antichain} is a subset $\mathcal{A}$ of $\mathcal{P}$ whose elements are pairwise \emph{incomparable}, that is
\[
a,b\in\mathcal{A} \ \Longrightarrow a\not \preceq b \textup{ and } b\not\preceq a.
\]
Antichains play an important role in the construction of ``big codes" in this paper.

\subsection{Grassmannians and Sperner bounds}\label{sec:grass-sperner}

The content of this section could be presented in terms of the Sperner property, though we choose not to do so for the sake of brevity. For the purposes of this section, $\Lcal(V_r)$ is considered as the poset of all  $R$-submodules of $V_r$, ordered by inclusion. Recall that, if $U$ is a free $R$-submodule of $V_r$, then its \emph{rank} equals the minimum cardinality of a generating set.

\begin{definition}
Let $n$ be an integer with $1\leq n \leq d-1$. The \emph{Grassmannian} $\Gr(n,V_r)$ is the collection of all free $R$-submodules of $V_r$ of rank $n$.
\end{definition}

\noindent
It is clear from its definition that, for each $n$, the Grassmannian $\Gr(n,V_r)$ is an antichain in $\Lcal(V_r)$ and is contained in $\partial\mathcal{L}(V_r)$.
Moreover, when $r=1$, the Grassmannian $\Gr(n,V)$ consists of the $n$-dimensional subspaces of $V$. For more on Grassmannians, we refer to \cite[Chapter~5]{MS21} and references therein.
Generalizing the proof of \cite[Proposition~1.3.18]{Sta97} to $V_r$ 
, we have that 
\begin{equation}\label{eq:grass}
    |\Gr(n,V_r)| = \binom{d}{n}_{q^{-1}}q^{rn(d-n)},
\end{equation}
from which it follows that 
$|\Gr(n,V_r)|=|\Gr(d-n,V_r)|$. We remark that \eqref{eq:grass} also follows directly from the more general formulas from \cref{sec:counting}.

\begin{example}\label{ex:grass}
Assume that $d=2$ and $R=\Z/32\Z$, which implies that $q=2$.
We have seen in \cref{ex:d=2} that $\partial\B_1(V_5)$ has the same number of elements as $\Gr(1,V_5/2V_5)$, where $V_5/2V_5$ is viewed as a free $R/2R$-module. Indeed \eqref{eq:grass} ensures
\[
|\Gr(1,V_5/2V_5)| = \binom{2}{1}_{\frac{1}{2}}2^1=\frac{1-\left(\frac{1}{2}\right)^2}{1-\frac{1}{2}}2^1=3=|\partial\B_1(V_5)|.
\]
\end{example}

\noindent
The following is the main result of \cite{Wang98}, which is there phrased to hold for $r\geq 3$. The case where $r=1$ can be found in \cite{RH71,Spe28}, while the case $r=2$ is given in \cite[Theorem~2.7]{Sta91}.

\begin{proposition}\cite[Main Theorem]{Wang98}\label{prop:sperner}
Set $e_-=(d-1)/2$ and $e=d/2$ and $e_+=(d+1)/2$. Let, moreover, $n\in\{1,\ldots,d-1\}$. Then 
$\Gr(n,V_r)$ is a maximal-sized antichain in $\Lcal(V_r)$ if and only if exactly one of the following holds: 
\begin{enumerate}[label=$(\arabic*)$]
    \item $d$ is even and $n=e$,
    \item $d$ is odd and $n\in\{e_-,e_+\}$.
\end{enumerate}
\end{proposition}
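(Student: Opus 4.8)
The strategy is to reduce the statement for a general finite commutative chain ring $R$ with $\pi^r = 0$ to the known combinatorial statement about the subgroup lattice of a finite abelian $p$-group, and then to invoke the cited results of Spehner, Stanley, and Wang. The key observation is that $\Lcal(V_r)$, as a poset under inclusion, depends only on $d$, $r$, and $q$, and in fact is isomorphic as a poset to the lattice of submodules of $(\Z/p^r\Z)^d$ when $q = p$ is prime; more generally the relevant rank-generating function and normalized matching property are governed by the Gaussian binomial coefficients $\binom{d}{n}_{q}$, exactly as in Wang's setting. So the first step is to record this poset isomorphism (or, at minimum, the fact that $\Lcal(V_r)$ is a graded, rank-symmetric, rank-unimodal poset with the normalized matching property whose Whitney numbers are the coefficients appearing in Wang's theorem), citing \cite{Wang98,Sta91,Sta97} for the structural input.

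Granting that, the argument splits into the two directions. For the ``if'' direction, one must show that when $n = e$ (resp. $n \in \{e_-, e_+\}$) the antichain $\Gr(n, V_r)$ is maximal-sized. By \eqref{eq:grass} we have $|\Gr(n,V_r)| = \binom{d}{n}_{q^{-1}} q^{rn(d-n)}$, and one checks that, among $n \in \{0,1,\dots,d\}$, this quantity is maximized exactly at the middle value(s) of $n$ — this is a unimodality statement for the sequence $n \mapsto \binom{d}{n}_{q^{-1}} q^{rn(d-n)}$, which follows from unimodality of the Gaussian binomials together with the concavity of $n \mapsto n(d-n)$. Combined with the fact (from the normalized matching / Sperner property of $\Lcal(V_r)$, i.e.\ the cited theorems) that the maximum size of an antichain in $\Lcal(V_r)$ equals the largest Whitney number, which is precisely $\max_n |\Gr(n,V_r)|$, this gives that $\Gr(e,V_r)$ (resp.\ $\Gr(e_\pm,V_r)$) attains the maximum. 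For the ``only if'' direction, one must show that for any other value of $n$ the strict inequality $|\Gr(n,V_r)| < \max_m |\Gr(m,V_r)|$ holds, so that $\Gr(n,V_r)$ cannot be maximum-sized; this again reduces to strict unimodality of the sequence $n\mapsto \binom{d}{n}_{q^{-1}} q^{rn(d-n)}$ away from its peak, which one verifies by comparing consecutive ratios.

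The genuinely nontrivial input — the fact that the maximum antichain size in $\Lcal(V_r)$ equals its largest Whitney number (the Sperner property), and more precisely that the \emph{only} maximum-sized antichains which happen to be Grassmannians are the middle ones — is exactly what \cite{Wang98} proves for $r \ge 3$, \cite[Theorem~2.7]{Sta91} for $r = 2$, and \cite{RH71,Spe28} for $r = 1$; so the main ``obstacle'' is already resolved in the literature and our task is the bookkeeping that packages these into the stated form. Accordingly, the step I expect to require the most care is not a hard estimate but the verification that the hypotheses of those cited theorems transfer verbatim to submodule lattices over an arbitrary finite commutative chain ring (not just $\Z/p^r\Z$), which is where the poset isomorphism in the first step does the work; once that is in place the remaining unimodality computations are routine and I would not spell them out in full.
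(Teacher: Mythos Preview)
The paper does not give a proof of this proposition: it is stated as a direct quotation of the main result of \cite{Wang98} (with the cases $r=1$ and $r=2$ attributed respectively to \cite{RH71,Spe28} and \cite[Theorem~2.7]{Sta91}), and no argument is supplied. Your plan is therefore already more detailed than the paper, which treats the statement as a black-box citation.

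There is, however, a genuine gap in your sketch. You write that the largest Whitney number of $\Lcal(V_r)$ ``is precisely $\max_n |\Gr(n,V_r)|$'', and this is the link that would let the Sperner property deliver the middle Grassmannian as a maximum antichain. But for $r\geq 2$ the Grassmannian $\Gr(n,V_r)$ is \emph{not} a full rank level of $\Lcal(V_r)$ graded by composition length: the submodules of length $nr$ include all free rank-$n$ submodules together with many non-free ones. For instance, when $d=r=q=2$ there are seven submodules of length $2$ (the six elements of $\Gr(1,V_2)$ plus $\pi V_2$), so the middle Whitney number is $7$ while $|\Gr(1,V_2)|=6$. Hence the Sperner property of $\Lcal(V_r)$ --- which is what Wang actually proves --- does not by itself force $\Gr(e,V_r)$ to be a maximum-sized antichain; your identification of Whitney numbers with Grassmannian cardinalities is valid only for $r=1$ (incidentally the only case in which the paper later invokes the proposition, in the proof of \cref{th:dist=2r}). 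The step you flagged as requiring the most care, namely transferring from $\Z/p^r\Z$ to a general chain ring, is in fact harmless since the combinatorics depends only on $q$ and $r$; the real obstacle is the rank-level identification above.
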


\subsection{Sperner codes}\label{sec:sperner-codes}

In this section we define spherical codes in $V_r$ that will yield lower bounds to $\card(d;R;2\alpha)$ for any choice of the integer $1\leq \alpha\leq r$. To this end, we fix such an $\alpha$ and define
\begin{equation}\label{eq:m&e}
m=r+1-\alpha \ \textup{ and } \
e=\lceil d/2 \rceil=\begin{cases}
d/2 & \textup{ if } d \textup{ even},\\
(d+1)/2 & \textup{ if } d \textup{ odd}.
\end{cases}
\end{equation}
We will define a family of codes $\mathcal{C}$ that satisfy
\begin{equation}\label{eq:sperner-dist-card}
\dist(\Ccal)\geq 2\alpha\ \textup{ and }\ \lvert\Ccal\rvert=|\Gr(e,V_{m})|,
\end{equation}
cf.\ \cref{def:sperner-code}.
Write $\Gr(e,\pi^{\alpha-1}V_r)$ for the collection of free $R/\mathfrak{m}^m$-submodules of $\pi^{\alpha-1}V_r$ of rank $e$ and note that the elements of $\Gr(e,\pi^{\alpha-1}V_r)$ are incomparable. Moreover, $\Gr(e,\pi^{\alpha-1}V_r)$ is in bijection with $\Gr(e,V_m)$, equivalently
\[
|\Gr(e,\pi^{\alpha-1}V_r)|=|\Gr(e,V_m)|=\binom{d}{e}_{q^{-1}}q^{me(d-e)}.
\]

\begin{definition}\label{def:sperner-code}
A \emph{Sperner code with parameters} $(d;R;\alpha)$ is a subset $\Ccal$ of $\Gr(e,V_r)$ such that the map
\[
\Ccal \longrightarrow \Gr(e,\pi^{\alpha-1}V_r), \quad U \longmapsto \pi^{\alpha-1}U,
\]
is a bijection.
\end{definition}

\noindent
We remark that a Sperner code with parameters $(d;R;\alpha)$ is nothing else than a collection $\Ccal$ of free $R$-submodules of $V_r$ with the property that $\Gr(e,V_r)=\{\pi^{\alpha-1}U \mid U\in \Ccal\}$. An example of a Sperner code when $d=2$ is given in  \cref{fig:sperner-tree} (see also \cref{ex:d=2,ex:d=2-sec3}).

\begin{theorem}\label{thm:sperner-codes}
Let $1\leq \alpha\leq r$ be an integer and let $\Ccal$ be a Sperner code with parameters $(d;R;\alpha)$. Then the following are satisfied:
\begin{enumerate}[label=$(\arabic*)$]
    \item $\Ccal$ is a spherical code in $V_r$,
    \item $\dist(\Ccal)\geq 2\alpha$,
    \item $\lvert\Ccal\rvert=|\Gr(\lceil d/2 \rceil,V_{r+1-\alpha})|$.
\end{enumerate}
\end{theorem}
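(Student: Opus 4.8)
The plan is to verify the three items essentially by unwinding the definitions, with the only real content being the distance bound in (2). For (1), I would note that each $U \in \Ccal$ is a free $R$-submodule of $V_r$ of rank $e = \lceil d/2\rceil$, so $U \in \Gr(e,V_r) \subseteq \partial\Lcal(V_r)$ by the remarks in \cref{sec:grass-sperner}; since the bijection $\Ccal \to \Gr(e,\pi^{\alpha-1}V_r)$ forces $|\Ccal| = |\Gr(e,\pi^{\alpha-1}V_r)| = |\Gr(e,V_m)| \geq 2$ (using $d \geq 2$ and the formula for the Grassmannian cardinality), $\Ccal$ is a spherical code. Item (3) is then immediate from $|\Ccal| = |\Gr(e,\pi^{\alpha-1}V_r)| = |\Gr(\lceil d/2\rceil, V_{r+1-\alpha})|$, which was recorded just before \cref{def:sperner-code}.

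The heart of the matter is (2). Let $U_1, U_2 \in \Ccal$ be distinct, and let $n_{12}, n_{21}$ be as in \cref{def:dist-mod} (note $\tilde{U_i} = U_i$ since $U_i \in \partial\Lcal(V_r)$, by \eqref{eq:inclusion-tilde}). I want to show $n_{12} + n_{21} \geq 2\alpha$. The key observation is that $n_{12} \leq \alpha - 1$ would give $\pi^{\alpha-1}U_1 \subseteq \pi^{n_{12}}U_1 \subseteq U_2$, and intersecting with $\pi^{\alpha-1}V_r$ (which contains $\pi^{\alpha-1}U_1$) yields $\pi^{\alpha-1}U_1 \subseteq \pi^{\alpha-1}U_2$. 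Now $\pi^{\alpha-1}U_1$ and $\pi^{\alpha-1}U_2$ are elements of $\Gr(e,\pi^{\alpha-1}V_r)$, hence free $R/\mathfrak{m}^m$-modules of the same rank $e$; an inclusion between two free modules of equal finite rank over the finite local ring $R/\mathfrak{m}^m$ must be an equality (compare lengths, or residue-field dimensions). So $\pi^{\alpha-1}U_1 = \pi^{\alpha-1}U_2$, contradicting the injectivity of $U \mapsto \pi^{\alpha-1}U$ on $\Ccal$ since $U_1 \neq U_2$. Therefore $n_{12} \geq \alpha$, and by the symmetric argument $n_{21} \geq \alpha$, giving $\dist([U_1],[U_2]) = n_{12} + n_{21} \geq 2\alpha$; taking the minimum over distinct pairs gives $\dist(\Ccal) \geq 2\alpha$.

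I expect the main (mild) obstacle to be the equal-rank-forces-equality step: one must be careful that $\pi^{\alpha-1}U_i$ is genuinely free of rank $e$ over $R/\mathfrak{m}^m$ — this is why $\Gr(e,\pi^{\alpha-1}V_r)$ was set up as free $R/\mathfrak{m}^m$-submodules of $\pi^{\alpha-1}V_r \cong V_m$ of rank $e$, and why the incomparability of elements of $\Gr(e,\pi^{\alpha-1}V_r)$ was flagged in the text. Concretely, if $N \subseteq N'$ with $N, N'$ free $R/\mathfrak{m}^m$-modules of rank $e$, then $\dim_{R/\mathfrak{m}}(N/\mathfrak{m}N) \leq \dim_{R/\mathfrak{m}}(N'/\mathfrak{m}N')$ forces (by Nakayama and a length count) $N = N'$; alternatively one invokes that $\Gr(e,\pi^{\alpha-1}V_r)$ is an antichain. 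Everything else is bookkeeping with the definitions.
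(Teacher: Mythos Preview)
Your proposal is correct and follows essentially the same route as the paper: items (1) and (3) are dispatched as immediate from the definitions, and for (2) you argue by contradiction that $n_{12}\leq \alpha-1$ forces $\pi^{\alpha-1}U_1\subseteq U_2\cap\pi^{\alpha-1}V_r=\pi^{\alpha-1}U_2$, contradicting injectivity of $U\mapsto\pi^{\alpha-1}U$. You are in fact slightly more explicit than the paper, spelling out why $|\Ccal|\geq 2$ and why a strict inclusion between two elements of $\Gr(e,\pi^{\alpha-1}V_r)$ is impossible; the paper simply invokes the earlier-stated incomparability of the Grassmannian and says the inclusion ``contradicts the bijectivity''.
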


\begin{proof}
(1) and (3) are clear from the construction of Sperner codes, so we prove (2). For this,
let $U_1,U_2\in\Ccal$ be distinct: we claim that $n_{12}\geq \alpha$. For a contradiction, assume that this is not the case. It follows that 
$
\pi^{\alpha-1}U_1\subseteq \pi^{n_{12}}U_1\subseteq U_2
$
and so 
\[
\pi^{\alpha-1}U_1\subseteq U_2\cap \pi^{\alpha-1}V_r=\pi^{\alpha-1}U_2,
\]
which contradicts the bijectivity of the map $\Ccal\rightarrow\Gr(e,\pi^{\alpha-1}V_r)$ from \cref{def:sperner-code}. We have proven that $n_{12}\geq \alpha$ and, the choice of $U_1$ and $U_2$ being arbitrary, we have that $\dist(\Ccal)\geq 2\alpha$.
\end{proof}

\noindent
The following is an immediate corollary of the last result.

\begin{corollary}\label{cor:general-bound}
Let $1\leq \alpha\leq r$ be an integer and define $e=\lceil d/2\rceil$. Then
\[
\card(d;R;2\alpha)\geq \binom{d}{e}_{q^{-1}}q^{(r+1-\alpha)e(d-e)}.
\]
\end{corollary}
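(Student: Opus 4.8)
The plan is to derive \cref{cor:general-bound} directly from \cref{thm:sperner-codes}. First I would observe that, given an integer $\alpha$ with $1\leq\alpha\leq r$, the construction in \cref{def:sperner-code} actually produces a concrete Sperner code: starting from the Grassmannian $\Gr(e,\pi^{\alpha-1}V_r)$ and lifting each rank-$e$ free $R/\mathfrak{m}^m$-submodule of $\pi^{\alpha-1}V_r$ to a free $R$-submodule of $V_r$ of rank $e$ (for instance by choosing a basis and scaling it by $\pi^{-(\alpha-1)}$ coordinate-wise), one obtains a subset $\Ccal\subseteq\Gr(e,V_r)$ for which $U\mapsto\pi^{\alpha-1}U$ is the desired bijection onto $\Gr(e,\pi^{\alpha-1}V_r)$. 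Thus at least one Sperner code with parameters $(d;R;\alpha)$ exists, and in fact this is already implicit in the discussion preceding \cref{def:sperner-code}.

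Next I would invoke \cref{thm:sperner-codes} applied to such a $\Ccal$: part $(1)$ guarantees $\Ccal$ is a spherical code in $V_r$, part $(2)$ gives $\dist(\Ccal)\geq 2\alpha$, and part $(3)$ gives $\lvert\Ccal\rvert = |\Gr(\lceil d/2\rceil, V_{r+1-\alpha})|$. Since $\Ccal$ is a spherical code with $\dist(\Ccal)\geq 2\alpha$, it is one of the sets competing in the maximum defining $\card(d;R;2\alpha)$ in \cref{def:XY}(2), so
\[
\card(d;R;2\alpha)\geq \lvert\Ccal\rvert = |\Gr(\lceil d/2\rceil, V_{r+1-\alpha})|.
\]
Finally I would substitute the explicit count of the Grassmannian from \eqref{eq:grass}, namely $|\Gr(n,V_s)| = \binom{d}{n}_{q^{-1}}q^{sn(d-n)}$, with $n=e=\lceil d/2\rceil$ and $s = r+1-\alpha = m$, obtaining the stated bound $\binom{d}{e}_{q^{-1}}q^{(r+1-\alpha)e(d-e)}$.

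There is essentially no obstacle here: the corollary is a formal consequence of the preceding theorem together with the Grassmannian cardinality formula. The only point requiring a word of care is confirming that $\lvert\Ccal\rvert\geq 2$, so that $\Ccal$ is genuinely admissible as a code in the sense of \cref{def:spherical-code}; this holds whenever $d\geq 2$ since then $\Gr(e,V_{m})$ has more than one element (indeed its cardinality is a $q$-binomial coefficient times a positive power of $q$, which exceeds $1$ for $d\geq 2$), and this is already baked into the hypothesis $d\geq 2$ fixed throughout the paper and into \cref{thm:sperner-codes}(1).
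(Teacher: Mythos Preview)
Your proposal is correct and matches the paper's approach: the corollary is stated there as an immediate consequence of \cref{thm:sperner-codes}, and your argument spells out precisely this, combining parts (1)--(3) of the theorem with the Grassmannian count \eqref{eq:grass}. Your additional remarks on the existence of a Sperner code and on $\lvert\Ccal\rvert\geq 2$ are reasonable sanity checks that the paper leaves implicit.
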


\noindent
As we will see in the next section, the inequality from \cref{cor:general-bound} is an equality in some cases. We leave the following general question open. 

\begin{question}\label{qs:general-bound}
Is the inequality from \cref{cor:general-bound} always an equality?
\end{question}

\section{Extremal cases}\label{sec:extremal}

\noindent
In this section we show that \cref{qs:general-bound} has a positive answer when $\alpha=r$ or $d=2$ by showing that, in these cases, Sperner codes are optimal codes with respect to the bound given in \cref{cor:general-bound}.


\subsection{Codes of maximal distance}

This section is devoted to the case $\alpha=r$.

\begin{proposition}\label{lem:any-to-free}
Let $\mathcal{C}$ be a spherical code in $V_r$ 
with $\mindist\mathcal{C}=2r$. Then there exists a spherical code $\mathcal{C}'$ in $V_r$ such that the following hold: 
\begin{enumerate}[label=$(\arabic*)$]
\item $|\mathcal{C}|=|\mathcal{C}'|$ and $\mindist(\mathcal{C}')=2r$,
\item\label{it:free} for each $U\in\mathcal{C}'$
, one has $\pi U=U\cap \pi V_r$.
\end{enumerate}
\end{proposition}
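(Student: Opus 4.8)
The plan is to build $\mathcal{C}'$ from $\mathcal{C}$ by replacing each module $U \in \mathcal{C}$ with its ``free saturation'' — the unique free submodule $U'$ of $V_r$ of the same rank that contains $U$ and reduces to the same subspace modulo $\pi$. Concretely, given $U \in \mathcal{C}$, since $[U] \in \partial\mathcal{L}(V_r)$ we have $\pi^{r-1}V_r \not\subseteq U \not\subseteq \pi V_r$, so the image $\bar U$ of $U$ in $V_r/\pi V_r \cong V = V_1$ is a nonzero proper subspace, say of dimension $n$. I would pick a set of $n$ elements of $U$ whose images form a basis of $\bar U$, and let $U'$ be the $R$-span of those elements. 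By Nakayama (or a direct chain-ring argument), $U'$ is free of rank $n$, $U \subseteq U' + \pi^? \cdots$ — more carefully, one checks $U \subseteq U'$ need not hold on the nose, so instead the cleaner route is: let $U' $ be \emph{any} free rank-$n$ submodule with $U' + \pi V_r = U + \pi V_r$ and $U \subseteq U'$; such a $U'$ exists because any module generated by lifts of a basis of $\bar U$ that also includes a generating set of $U$ can be pruned down to a free rank-$n$ module (ranks agree since reduction mod $\pi$ is surjective and $U'$ is generated by $n$ elements). Property \ref{it:free}, $\pi U' = U' \cap \pi V_r$, is then automatic for free $U'$: ``$\subseteq$'' is trivial, and ``$\supseteq$'' follows because $V_r/U'$ is free (as $U'$ is a free direct summand of the free module $V_r$), so multiplication by $\pi$ on $V_r/U'$ is injective on nothing — rather, $x \in U' \cap \pi V_r$ means $x = \pi y$ with $\bar y \in V_r/U'$ killed by $\pi$, and torsion-freeness of $V_r/U'$ forces $y \in U'$.

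Next I would verify that this assignment $U \mapsto U'$ preserves the code structure. First, $U' \in \partial\mathcal{L}(V_r)$: since $U \subseteq U'$ and $U' + \pi V_r = U + \pi V_r \subsetneq V_r$, we get $U' \not\subseteq \pi V_r$; and $\pi^{r-1}V_r \not\subseteq U'$ because $U'$ is free of rank $n \leq d-1$, so $U'$ cannot contain the rank-$d$ (as an $R/\mathfrak m$-module... more precisely the nonzero) module $\pi^{r-1}V_r$ — one checks $\pi^{r-1}V_r \subseteq U'$ would force $\bar U = V$ after an appropriate argument, contradicting $n < d$. So each $[U']$ lies on the sphere $\partial\mathcal{B}_r([V_r])$ by \cref{prop:ball-sphere-modules}. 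Second, I must show distinct elements of $\mathcal{C}$ give distinct $[U']$ \emph{and} that $\dist(\mathcal{C}') = 2r$. Here is where I would use \cref{prop:2r<->modules}: for $U_1 \neq U_2$ in $\mathcal{C}$ with $\dist([U_1],[U_2]) = 2r$, that proposition gives $\pi^{r-1}U_i \not\subseteq U_1 \cap U_2$. Since $U_i \subseteq U_i'$, I need the analogous non-containment $\pi^{r-1}U_i' \not\subseteq U_1' \cap U_2'$. The key point: $U_1' \cap U_2'$ reduces mod $\pi$ into $\bar U_1 \cap \bar U_2$, and since $U_i'$ is free, $\pi^{r-1}U_i'$ is ``as large as possible'' in its reduction; I would argue that $\pi^{r-1}U_i' \subseteq U_1' \cap U_2'$ would force (after dividing by $\pi^{r-1}$, using freeness) $\bar U_i \subseteq \overline{U_1' \cap U_2'} \subseteq \bar U_1 \cap \bar U_2$, hence $\bar U_i = \bar U_1 = \bar U_2$; but then $\pi^{r-1} U_i' = \pi^{r-1}(\text{free module with that reduction})$ which does lie in both, forcing $\dist([U_1'],[U_2']) \leq 2r-1$ — and I then need to derive a contradiction with the original data, which works because $\bar U_1 = \bar U_2$ together with $U_i \subseteq U_i'$ and the structure of $\partial\mathcal{L}$ propagates back to $\pi^{r-1}U_i \subseteq U_1 \cap U_2$, contradicting \cref{prop:2r<->modules}(2) for $\mathcal{C}$. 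In particular distinctness of $[U_1'],[U_2']$ also falls out (if they were equal, $\dist = 0 \neq 2r$), so $|\mathcal{C}'| = |\mathcal{C}|$, and every pair in $\mathcal{C}'$ has distance exactly $2r$, giving $\mindist(\mathcal{C}') = 2r$ and completing (1).

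The main obstacle I anticipate is the compatibility argument in the second paragraph — ensuring that passing from $U$ to its free saturation $U'$ does not \emph{decrease} the pairwise distance below $2r$. The subtlety is that enlarging each module could a priori create new inclusions $\pi^m U_1' \subseteq U_2'$ with $m$ small. The cleanest way to control this is probably to work entirely on the residue-field side: show $\dist([U_1],[U_2]) = 2r$ is equivalent, via \cref{prop:2r<->modules} and the freeness of $U_i'$, to a purely subspace-theoretic condition on $\bar U_1, \bar U_2$ in $V$ (essentially that neither is contained in the other, which for equal ranks means $\bar U_1 \neq \bar U_2$, i.e. $\bar U_1 \cap \bar U_2 \subsetneq \bar U_i$) — and then note this condition depends only on the reductions $\bar U_i$, which are unchanged by the saturation. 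I would double-check whether the ranks of $U_1', U_2'$ are forced to be equal: this is not obvious a priori, but \cref{prop:2r<->modules}(2) combined with $\pi^{r-1}U_i \not\subseteq U_1 \cap U_2$ should pin down $\bar U_1 \cap \bar U_2$ to be a proper subspace of each $\bar U_i$ only when the ranks cooperate, so I may need a short lemma that $\dist = 2r$ already forces $\dim \bar U_1 = \dim \bar U_2$, or else handle the general-rank case directly in the reduction step.
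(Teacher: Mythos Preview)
Your ``cleaner route'' construction is the genuine gap. You claim there exists a free rank-$n$ submodule $U'$ with $U\subseteq U'$ and $U'+\pi V_r=U+\pi V_r$, but no such $U'$ exists in general. Take $d=3$, $r\geq 2$, and $U=Re_1\oplus R\pi e_2$. Then $U\in\partial\mathcal{L}(V_r)$ (since $e_1\notin\pi V_r$ and $\pi^{r-1}e_3\notin U$), and $\bar U=\langle\bar e_1\rangle$ has dimension $n=1$. Any free rank-$1$ module $U'=Rv$ containing $e_1$ forces $v$ to be a unit multiple of $e_1$, so $U'=Re_1$; but $\pi e_2\notin Re_1$. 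The ``pruning'' you describe cannot succeed because $|U|>|R|^n$ whenever $U$ is not already free, so $U$ simply does not fit inside a free module of rank $n$.

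The paper goes the other way: it replaces $U$ by a free submodule $H\subseteq U$ (its free direct summand, complementary to a part $X$ with $\pi^{r-1}X=0$), which is exactly your \emph{first} construction via lifts of a basis of $\bar U$. The key observation is the one-line identity $\pi^{r-1}H=\pi^{r-1}U$, which together with \cref{prop:2r<->modules} immediately shows that swapping $U$ for $H$ preserves the condition $\pi^{r-1}U_i\not\subseteq U_1\cap U_2$ for every pair. This replaces your entire second paragraph. Your reduction-mod-$\pi$ strategy is not wrong in spirit (and would work with $H$ in place of your non-existent $U'$, since it only really needs $\bar H=\bar U$), but it is a detour: the identity $\pi^{r-1}H=\pi^{r-1}U$ is what makes the argument a two-liner. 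Finally, your worry that $\dist=2r$ might force $\dim\bar U_1=\dim\bar U_2$ is unfounded (e.g.\ $U_1=Re_1$ and $U_2=Re_2\oplus Re_3$ in $d=3$ have distance $2r$), and fortunately nothing in either argument needs it.
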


\begin{proof}
Define $\Ccal_F$ to be the collection of all $U\in\Ccal$ such that $\pi U=U\cap \pi V_r$.
We prove, by induction on $n=|\mathcal{C}\setminus\mathcal{C}_F|$, that there exists $\mathcal{C}'$ satisfying (1) and (2). If $n=0$, then 
 $\mathcal{C}$ already satisfies (2) and we set $\mathcal{C}'=\mathcal{C}$. Assume now that $n>0$ and that the claim is satisfied for $n-1$.
 Let $U\in\mathcal{C}$ be such that $\pi U\neq U\cap \pi V_r$, that is $U$ is not a free $R$-submodule of $V_r$. 
In view of this, let $X$ and $H$ be submodules of $U$ satisfying
\[
U=X\oplus H, \quad \pi H=H\cap \pi V_{r}, \quad \pi^{r-1}X=0.
\]
In particular, $H$ is isomorphic to the free part (as $R$-submodule) of $U$ and both $H$ and $X$ are non-trivial.
Set now $\mathcal{C}''=(\mathcal{C}\setminus\{U\})\cup\{H\}$. It follows from $\pi^{r-1}U=\pi^{r-1}H$ and \Cref{prop:2r<->modules} that $\mathcal{C}''$ is a spherical code of minimal distance $2r$. Moreover, we have $\mathcal{C}_F''=\Ccal_F\cup\{H\}$ 
and $\mathcal{C}$ and $\mathcal{C}''$ have the same cardinality. We are now done thanks to the induction hypothesis.
\end{proof}

\noindent
Thanks to \cref{lem:any-to-free}, to compute the maximal cardinality of spherical codes of maximal distance in $V_r$ it suffices to look at free $R$-submodules of $V_r$, equivalently at subsets of the sets of vertices of the ball $\B_r$ as a lattice polytope; cf.\ \cref{def:ball,sec:polytopes,sec:spherical-in-BT}.

\begin{definition}
A spherical code $\mathcal{C}$ in $V_r$ is called \emph{free} if it satisfies \Cref{lem:any-to-free}\ref{it:free}.
\end{definition}

\noindent
The next result follows in a straightforward way from \Cref{prop:2r<->modules}.

\begin{lemma}\label{lem:incomparable}
Let $\mathcal{C}$ be a free spherical code in $V_r$ and let $U_1,U_2\in\mathcal{C}$. Then the following are equivalent:
\begin{enumerate}[label=$(\arabic*)$]
\item $\dist([U_1],[U_2])=2r$,
\item $\pi^{r-1}U_1$ and $\pi^{r-1}U_2$ are incomparable.
\end{enumerate}
\end{lemma}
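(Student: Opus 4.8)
The plan is to show both implications by translating the condition $\dist([U_1],[U_2])=2r$ into a statement about the half-distance entries $n_{12}, n_{21}$ via \cref{prop:2r<->modules}, and then to exploit freeness to rewrite the intersection $U_1\cap U_2$ in terms of the submodules $\pi^{r-1}U_1$ and $\pi^{r-1}U_2$. First I would recall that, since $U_1, U_2$ are free of full rank $d$ (being elements of a spherical code, they lie in $\partial\mathcal{L}(V_r)$, and freeness gives $\pi U_i = U_i\cap\pi V_r$, so by induction $\pi^k U_i = U_i\cap \pi^k V_r$ for all $k$), each $\pi^{r-1}U_i$ is a free $R/\mathfrak{m}$-module, i.e.\ an $\mathbb{F}_q$-subspace of the $d$-dimensional space $\pi^{r-1}V_r$. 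In particular $\pi^{r-1}U_1 \subseteq \pi^{r-1}U_2$ if and only if $n_{12}\le r-1$, because $\pi^{r-1}U_1\subseteq\pi^{r-1}U_2 = U_2\cap\pi^{r-1}V_r$ forces $\pi^{r-1}U_1\subseteq U_2$, hence $\pi^{r-1}\tilde U_1\subseteq\tilde U_2$ (using $\tilde U_i = U_i$ by \eqref{eq:inclusion-tilde}); conversely $n_{12}\le r-1$ directly gives $\pi^{r-1}U_1\subseteq\pi^{r-1}\tilde U_1 \subseteq \cdots$ wait—more carefully, $n_{12}\le r-1$ means $\pi^{n_{12}}U_1\subseteq U_2$, so $\pi^{r-1}U_1 = \pi^{r-1-n_{12}}\pi^{n_{12}}U_1\subseteq U_2$, and intersecting with $\pi^{r-1}V_r$ gives $\pi^{r-1}U_1\subseteq\pi^{r-1}U_2$.

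For $(1)\Rightarrow(2)$: suppose $\dist([U_1],[U_2])=2r$. By \cref{prop:2r<->modules}, $\pi^{r-1}U_1\not\subseteq U_1\cap U_2$; in particular $\pi^{r-1}U_1\not\subseteq U_2$, so a fortiori $\pi^{r-1}U_1\not\subseteq\pi^{r-1}U_2$. Symmetrically $\pi^{r-1}U_2\not\subseteq\pi^{r-1}U_1$. Hence $\pi^{r-1}U_1$ and $\pi^{r-1}U_2$ are incomparable. Note this direction does not even need freeness.

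For $(2)\Rightarrow(1)$: suppose $\pi^{r-1}U_1$ and $\pi^{r-1}U_2$ are incomparable. By the equivalence established in the first paragraph, $\pi^{r-1}U_1\not\subseteq\pi^{r-1}U_2$ is exactly $n_{12}\ge r$, and since $n_{12}\le r$ always (as $[U_i]\in\partial\mathcal{L}(V_r)=\partial\B_r([V_r])$, every module is within distance $r$ of $[V_r]$, and $\pi^r\tilde U_1 = 0\subseteq\tilde U_2$), we get $n_{12}=r$; symmetrically $n_{21}=r$. Therefore $\dist([U_1],[U_2])=n_{12}+n_{21}=2r$. The main subtlety to get right is the clean back-and-forth between "$\pi^{r-1}U_1\subseteq U_2$" and "$\pi^{r-1}U_1\subseteq\pi^{r-1}U_2$", which is precisely where freeness ($\pi^{r-1}U_2 = U_2\cap\pi^{r-1}V_r$) is used; once that is pinned down the rest is bookkeeping with \cref{prop:2r<->modules} and the ball/sphere description from \cref{prop:ball-sphere-modules}.
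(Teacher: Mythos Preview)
Your proposal is correct and follows essentially the same approach as the paper: both arguments hinge on \cref{prop:2r<->modules} together with the key consequence of freeness that $\pi^{r-1}U_i = U_i\cap\pi^{r-1}V_r$, which lets one rewrite $U_1\cap U_2\cap\pi^{r-1}V_r$ as $\pi^{r-1}U_1\cap\pi^{r-1}U_2$. The paper compresses this into a single chain of equivalences, whereas you split it into two implications and pass explicitly through the equalities $n_{12}=n_{21}=r$; this is a cosmetic difference. One small inaccuracy: elements of a free spherical code need not have full rank $d$ (e.g.\ $Re_1\in\partial\mathcal{L}(V_r)$ when $d\geq 2$), but you never actually use that claim, so the argument is unaffected.
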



\begin{theorem}\label{th:dist=2r}
Let $e=\lceil d/2\rceil$ be as defined in \eqref{eq:m&e}. Then the following holds:
\[
\card(d;R;2r)=
\binom{d}{e}_{q^{-1}}q^{e(d-e)}.
\]
\end{theorem}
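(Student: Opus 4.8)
The plan is to prove the equality by establishing matching upper and lower bounds. The lower bound $\card(d;R;2r)\geq \binom{d}{e}_{q^{-1}}q^{e(d-e)}$ is already in hand: by \cref{cor:general-bound} with $\alpha=r$ (so $m=r+1-\alpha=1$), a Sperner code with parameters $(d;R;r)$ has $\binom{d}{e}_{q^{-1}}q^{e(d-e)}$ elements and minimum distance at least $2r$, hence in fact exactly $2r$ by \cref{lem:basic-mindist}. So the entire content is the reverse inequality: any spherical code $\Ccal$ in $V_r$ with $\mindist(\Ccal)=2r$ satisfies $|\Ccal|\leq \binom{d}{e}_{q^{-1}}q^{e(d-e)}$.

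First I would invoke \cref{lem:any-to-free} to replace $\Ccal$ by a free spherical code $\Ccal'$ of the same cardinality and minimum distance $2r$; this reduces the problem to bounding $|\Ccal'|$ for free codes. Next, consider the map $\Ccal'\to\Lcal(V_r)$ sending $U\mapsto \pi^{r-1}U$. Since $\mindist(\Ccal')=2r$, \cref{lem:incomparable} tells us that for any two distinct $U_1,U_2\in\Ccal'$ the modules $\pi^{r-1}U_1$ and $\pi^{r-1}U_2$ are incomparable; in particular the map is injective and its image is an antichain in $\Lcal(V_r)$. Now $\pi^{r-1}U$ is a free $R/\mathfrak{m}$-submodule, i.e.\ a subspace of the $d$-dimensional $\mathbb{F}_q$-vector space $\pi^{r-1}V_r\cong V_1$; and because $U$ is free of some rank $n$ with $1\leq n\leq d-1$ (it cannot be $0$ or all of $V_r$ since $U\in\partial\Lcal(V_r)$), the subspace $\pi^{r-1}U$ has dimension $n$ as well. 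Thus the image of $\Ccal'$ is an antichain in $\Lcal(V_1)\cong\Lcal(\mathbb{F}_q^d)$, the subspace lattice. By \cref{prop:sperner} (the case $r=1$, i.e.\ the classical $q$-analogue of Sperner's theorem), the maximum size of an antichain in $\Lcal(\mathbb{F}_q^d)$ is the size of the middle Grassmannian, namely $\binom{d}{e}_{q^{-1}}q^{e(d-e)}=|\Gr(e,V_1)|$ with $e=\lceil d/2\rceil$. Hence $|\Ccal'|\leq \binom{d}{e}_{q^{-1}}q^{e(d-e)}$, and combining with the lower bound gives the claimed equality.

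The one point that needs care — and is the main obstacle — is verifying that the image of $\Ccal'$ really lands in the subspace lattice $\Lcal(\mathbb{F}_q^d)$ and that incomparability there is the right notion to which \cref{prop:sperner} applies. Concretely one must check: (a) $\pi^{r-1}V_r$ is killed by $\pi$, so it is an $R/\mathfrak{m}$-module, i.e.\ an $\mathbb{F}_q$-vector space, canonically identified with $V_1$; (b) for a free rank-$n$ submodule $U$, the quotient description $\pi^{r-1}U\cong U/\pi U$ shows $\dim_{\mathbb{F}_q}\pi^{r-1}U=n$, so no submodule degenerates to $0$ or the whole space and the image avoids the top and bottom of the lattice; (c) set-theoretic incomparability of $\pi^{r-1}U_1$ and $\pi^{r-1}U_2$ inside $\pi^{r-1}V_r$ is exactly incomparability in the poset $\Lcal(V_1)$. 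Granting these identifications, the antichain bound from \cref{prop:sperner} applies verbatim and finishes the argument.
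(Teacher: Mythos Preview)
Your proposal is correct and follows essentially the same approach as the paper: the lower bound comes from \cref{cor:general-bound} (Sperner codes with $\alpha=r$), and the upper bound is obtained by passing to a free code via \cref{lem:any-to-free}, applying \cref{lem:incomparable} to see that $U\mapsto \pi^{r-1}U$ injects $\Ccal'$ into an antichain of subspaces of $\pi^{r-1}V_r\cong V$, and then invoking \cref{prop:sperner} together with \eqref{eq:grass}. Your extra verifications (a)--(c) are sound, though (b) is not strictly needed for the antichain bound itself.
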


\begin{proof}
Let $\Ccal$ be a spherical code in $V_r$ of maximal cardinality satisfying $\dist(\Ccal)=2r$.
Thanks to \Cref{lem:any-to-free}, we assume without loss of generality that $\mathcal{C}$ is free. Then \Cref{lem:incomparable} yields that the elements of $\mathcal{C}$ are in bijection with a collection of maximal size of incomparable subspaces of $\pi^{r-1}V_r\cong V$. We are now done thanks to \Cref{prop:sperner} and \eqref{eq:grass}. 
\end{proof}

\subsection{Codes in small dimension}\label{sec:d=2}

In this section we answer \cref{qs:general-bound} when $d=2$, which we assume throughout \cref{sec:d=2}. 


\begin{remark}\label{rmk:k-boundary}
There is a number of properties that spherical codes satisfy when $d=2$, which do not generally hold for every spherical code. For instance, each element of $\partial\mathcal{L}(V_r)$ is a free $R$-submodule of $V_r$ and, for every $1\leq \alpha\leq r$, the family $\Gr(1,\pi^{\alpha-1}V_r)$ from \cref{sec:grass-sperner} forms a set of representatives for the classes in $\partial\B_{r+1-\alpha}([V_r])$; cf.\ \cref{fig:sperner-tree}. Write now 
 $\Gr(1,\pi^{\alpha-1}V_r)=\{S_1,\ldots,S_t\}$ and, for each $k\in\{1,\ldots,t\}$, define 
\[
\partial^{k}_{\alpha}\mathcal{L}(V_r)=\{U\in\partial\mathcal{L}(V_r) \mid \dist([U],[S_k])=\alpha-1\}.
\]
Then $\partial\mathcal{L}(V_r)$ equals the disjoint union of the $\partial^{k}_{\alpha}\mathcal{L}(V_r)$'s and defining a Sperner code with parameters $(2;R;\alpha)$ is the same as choosing one element in each $\partial^{k}_{\alpha}\mathcal{L}(V_r)$; cf.\ \cref{fig:sperner-tree}.
\end{remark}

    \begin{figure}[h]
        \centering
        \includegraphics[scale=0.8]{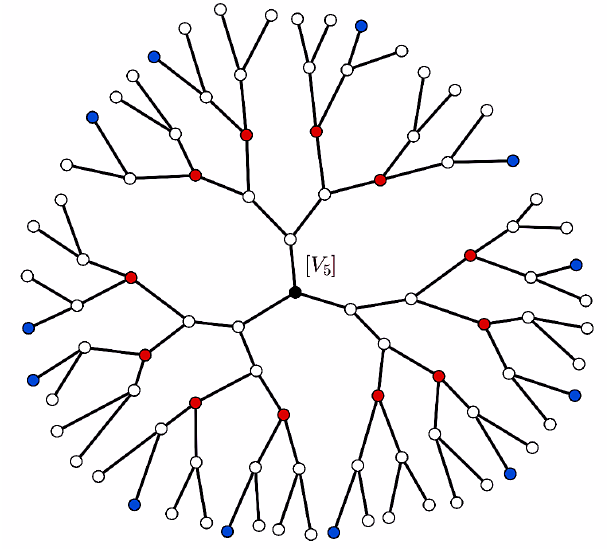}
        \caption{
        The vertices of this graph denote the elements of $\mathcal{L}^0(V_5)$ and $([U],[U'])$ is an edge if $\pi\tilde{U}\subset \tilde{U'}\subset \tilde{U}$ or $\pi\tilde{U'}\subset \tilde{U}\subset\tilde{U'}$.
        The red points are the elements of $\partial\B_3([V_5])$, which are represented by the elements of $\Gr(1,\pi^2V_5)$ from \cref{sec:grass-sperner}.
        A Sperner code with parameters $(2;\Z/32\Z;3)$ and thus minimum distance $6$ is given in blue.
        } 
        \label{fig:sperner-tree}
    \end{figure}

\begin{theorem}\label{th:d=2}
Let $1\leq \alpha\leq r$ be an integer.
Then the following holds:
\[
\card(2;R;2\alpha)=(q+1)q^{r-\alpha}.
\]
\end{theorem}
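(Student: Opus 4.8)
The lower bound is already in hand: by \cref{cor:general-bound} with $d=2$, $e=\lceil 2/2\rceil = 1$, we get $\card(2;R;2\alpha) \geq \binom{2}{1}_{q^{-1}}q^{(r+1-\alpha)\cdot 1\cdot 1} = (q+1)q^{r+1-\alpha}$. Wait — this overshoots the claimed value by a factor of $q$, so I must recheck: $\binom{2}{1}_{q^{-1}} = \frac{1-q^{-2}}{1-q^{-1}} = 1+q^{-1}$, which is \emph{not} $q+1$. So $\binom{2}{1}_{q^{-1}}q^{(r+1-\alpha)} = (1+q^{-1})q^{r+1-\alpha} = q^{r+1-\alpha}+q^{r-\alpha} = (q+1)q^{r-\alpha}$. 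Good — the lower bound matches exactly. So the whole content of the theorem is the \textbf{upper bound} $\card(2;R;2\alpha)\leq (q+1)q^{r-\alpha}$, and Sperner codes already achieve it.

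The plan is to exploit the structure recorded in \cref{rmk:k-boundary}. Fix $\alpha$ and set $t = |\Gr(1,\pi^{\alpha-1}V_r)| = |\Gr(1,V_{r+1-\alpha})| = (q+1)q^{r-\alpha}$ (using \eqref{eq:grass} with $d=2$, $n=1$, ring $R/\mathfrak{m}^{r+1-\alpha}$). By \cref{rmk:k-boundary}, $\partial\mathcal{L}(V_r) = \bigsqcup_{k=1}^{t} \partial^{k}_{\alpha}\mathcal{L}(V_r)$. The key claim I would prove is: \emph{if $U_1, U_2 \in \partial\mathcal{L}(V_r)$ lie in the same block $\partial^{k}_{\alpha}\mathcal{L}(V_r)$ and $U_1 \neq U_2$, then $\dist([U_1],[U_2]) \leq 2\alpha - 1$.} Granting this, any spherical code $\Ccal$ with $\dist(\Ccal)\geq 2\alpha$ meets each block in at most one point, hence $|\Ccal|\leq t = (q+1)q^{r-\alpha}$, which is the desired upper bound; combined with \cref{cor:general-bound} (equivalently \cref{thm:sperner-codes}) this gives equality.

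To prove the key claim, suppose $U_1,U_2$ both satisfy $\dist([U_i],[S_k]) = \alpha-1$, where $S_k \in \Gr(1,\pi^{\alpha-1}V_r)$. Since $d=2$, every $U_i$ is free of rank $1$ (\cref{rmk:k-boundary}), say $U_i = R u_i$ with $u_i$ part of a basis. The condition $\dist([U_i],[S_k])=\alpha-1$ should translate — via \cref{def:dist-mod} and the fact that $S_k$ is a free $R/\mathfrak{m}^{r+1-\alpha}$-module of rank $1$ inside $\pi^{\alpha-1}V_r$ — into $\pi^{\alpha-1}U_i = S_k$ (one inclusion is forced by $\dim$/length counting, the reverse by minimality of the defining exponents; the key point is that the minimal $n$ with $\pi^n\widetilde{U_i}\subseteq \widetilde{S_k}$ and the minimal $m$ with $\pi^m\widetilde{S_k}\subseteq\widetilde{U_i}$ sum to $\alpha-1$, and since $\widetilde{U_i}=U_i$ on the boundary and $\widetilde{S_k}$ is the rank-$1$ free module $S_k/(\pi\text{-torsion})$-completion, a short computation pins down $\pi^{\alpha-1}U_i = S_k$). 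Then $\pi^{\alpha-1}U_1 = \pi^{\alpha-1}U_2 = S_k$, so by \cref{prop:2r<->modules} applied with exponent shifted, $\pi^{r-1}U_1 = \pi^{r-\alpha}S_k = \pi^{r-1}U_2$ lies in $U_1\cap U_2$; in the notation of \cref{def:dist-mod}, $\pi^{\alpha-1}U_1 \subseteq U_2$ forces $n_{12}\leq \alpha-1$, and symmetrically $n_{21}\leq \alpha-1$, but we cannot have equality in both unless $U_1=U_2$ — more carefully, $n_{12}+n_{21} \leq 2(\alpha-1) = 2\alpha-2 < 2\alpha$ when both are $\leq\alpha-1$, and in fact one checks $\dist([U_1],[U_2])\leq 2\alpha-1$ suffices. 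The main obstacle is the careful translation of the condition $\dist([U],[S_k])=\alpha-1$ into the statement $\pi^{\alpha-1}U = S_k$, keeping the homothety-class bookkeeping (the tilde operation, and the fact that $S_k$ as a submodule of $V_r$ is \emph{not} on the boundary, so $\widetilde{S_k}\neq S_k$) straight; once that dictionary is set up, the distance bound is a one-line computation with exponents, and the count $t=(q+1)q^{r-\alpha}$ is immediate from \eqref{eq:grass}.
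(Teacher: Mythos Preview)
Your overall strategy is exactly the paper's: invoke \cref{cor:general-bound} for the lower bound, then use the partition $\partial\mathcal{L}(V_r)=\bigsqcup_k \partial^k_\alpha\mathcal{L}(V_r)$ from \cref{rmk:k-boundary} to show that any code with minimum distance $\geq 2\alpha$ meets each block in at most one point. However, you are working much harder than necessary on the key claim. The paper dispatches it in one line via the triangle inequality (\cref{lem:is-distance}): if $U,U'\in\partial^k_\alpha\mathcal{L}(V_r)$, then by the very definition of that set $\dist([U],[S_k])=\dist([U'],[S_k])=\alpha-1$, so
\[
\dist([U],[U'])\ \leq\ \dist([U],[S_k])+\dist([U'],[S_k])\ =\ 2(\alpha-1)\ <\ 2\alpha.
\]
There is no need to translate $\dist([U],[S_k])=\alpha-1$ into the statement $\pi^{\alpha-1}U=S_k$, and so the ``main obstacle'' you flag---the homothety bookkeeping for $\widetilde{S_k}$---simply evaporates. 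Your route through explicit module inclusions would also arrive at $n_{12},n_{21}\leq\alpha-1$ and hence the same bound $2\alpha-2$, but it is an unnecessary detour; the metric-space structure already does the work.
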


\begin{proof}
Thanks to \cref{cor:general-bound}, we have that 
\[
\card(2;R;2\alpha)\geq |\Gr(1,V_{r+1-\alpha})|= \binom{2}{1}_{q^{-1}}q^{r+1-\alpha}=(q+1)q^{r-\alpha}, 
\]
so we prove the other inequality. With the notation from \cref{rmk:k-boundary}, we have, for any $k\in\{1,\ldots,t\}$ and $U,U'\in\partial^k_{\alpha}\mathcal{L}(V_r)$, that
\[
\dist([U],[U'])\leq \dist([U],[S_k])+\dist([U'],[S_k])=2(\alpha-1)<2\alpha.
\]
The choice of $k$ being arbitrary, this shows that any spherical code in $V_r$ with $\dist(\Ccal)\geq 2\alpha$, can contain at most one representative from each $\partial^k_{\alpha}\mathcal{L}(V_r)$. This concludes the proof.
\end{proof}

\section{Permutation codes}\label{sec:permutation}

\noindent
In this section, we give a possible generalization of permutation codes, as defined in \cite[Chapter~4]{EriZin/01}, by means of $\Sym(d)$-orbits of $R$-modules with compatible generating sets. For the fixed $R$-basis ${\bf e}=(e_1,\ldots,e_d)$ of $V_r$, we define $\mathcal{L}_{\bf e}(V_r)$ to be the family of $R$-submodules of $V_r$ that can be generated \emph{compatibly with} ${\bf e}$, in other words modules of the form
\[
U_{\delta}=R\pi^{\delta_1}e_1\oplus\ldots \oplus R\pi^{\delta_d}e_d, \textup{ where } 0\leq \delta_i\leq r.
\]
The homothety relation from \cref{sec:module-distance} respects base compatibility and so we define $\mathcal{L}_{\bf e}^0(V_r)$ to be the subfamily of $\mathcal{L}^0(V_r)$ with representatives in $\mathcal{L}_{\bf e}(V_r)$.
In particular, we can model all elements of $\mathcal{L}_{\bf e}^0(V_r)$ in terms of the $\Sym(d)$-orbits of the set $\mathcal{E}_r^{(d)}$ in $\Z^d$ and $\partial\mathcal{L}_{\bf e}(V_r)=\partial\mathcal{L}(V_r)\cap\mathcal{L}_{\bf e}(V_r)$ is defined by permutations of elements of
$\partial \mathcal{E}_r^{(d)}$. 

\begin{example}\label{ex:d=3}
Assume that $d=3$ and $R=\Z/25\Z$, yielding $r=2$ and $q=5$. Then $U_{(0,0,0)}$ is the same as $V_2$ and the modules $U_{(1,1,0)}\subset U_{(1,0,0)}\subset U_{(0,0,0)}$ are pairwise at distance $1$ from each other. Moreover, $U_{(2,1,0)}$ is an element of $\partial\mathcal{L}_{\bf e}(V_2)$. Note that, while $|\partial\mathcal{L}_{\bf e}(V_2)|=12$, the cardinality of $\partial\mathcal{L}(V_2)$ is equal to 1860; cf.\ \cref{sec:counting}. If we compared the spheres of radius $1$ around $[V_2]$, we would get $6$ elements in the compatible case, against the $62$ without basis restrictions. 
\end{example}

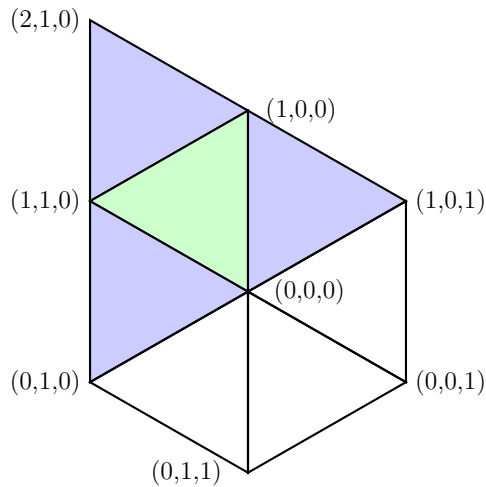
\begin{figure}[h!]
\begin{center}
\begin{tikzpicture}[thick, scale=0.8, every node/.style={scale=0.8}]
 \draw[fill=blue!20] (0,3)  -- (-2.598,1.5)-- (-2.598,4.5) node[anchor=east] {(2,1,0)} -- cycle;
 \draw (-0.9,3); 
    \draw[fill=green!20] (0,3)node[anchor=west] {\ (1,0,0)}  -- (-2.598,1.5)node[anchor=east] {(1,1,0)} -- (0,0) node[anchor=west] {\ \ (0,0,0)} -- cycle;
    \draw (-0.5,1.5); 
      \draw[fill=blue!20] (0,3) -- (2.598,1.5)node[anchor=west] {(1,0,1)} -- (0,0)  -- cycle;
    \draw (0.4,1.5); 
    \draw[fill=blue!20] (0,0) -- (-2.598,1.5) -- (-2.598,-1.5) node[anchor=east]{(0,1,0)} --cycle;
    \draw (-0.9,0); 
    \draw (0,0) --  (-2.598,-1.5) -- (0,-3) node[anchor=east]{(0,1,1)\ \ \ } --cycle;
    \draw (0,0)  -- (0,-3) --  (2.598,-1.5) node[anchor=west]{(0,0,1)} --cycle;
      \draw (0,0)  --   (2.598,-1.5) -- (2.598,1.5) --cycle;
  \end{tikzpicture}
  \caption{A local picture of $\mathcal{L}^0_{\bf e}$ when $d=3$.}\label{figureA2}
  \end{center}
\end{figure}

	\subsection{Tropical operations and polytropes}\label{sec:polytopes}
	
For the sake of conciseness and in adherence to the references cited below we introduce here some more notation, coming from tropical geometry.
For real elements $a$ and $b$ we set
	$$  a \minplus b  = {\rm min}\{a,b\}, \quad 
	a \maxplus b  = {\rm max}\{a,b\}, \quad
	a \odot b = a + b$$
	and remark that the last operations can be extended to $\R^d$ componentwise. 
	For each matrix $M\in\R^{d\times d}$ with $0$'s on the diagonal, we define moreover
	\begin{equation}
		\label{eq:polytrope}
		Q(M)  \,\, = \,\, \bigl\{ u \,\in \R^d/\R{\bf 1} \,:\, u_i - u_j \,\, \leq \, m_{ij} \,\,\,
		\hbox{for} \,\, 1 \leq i,j \leq d \,\bigr\}, 
	\end{equation}
which is a convex polytope in $\R^{d}/\R{\bf 1}$ and is called a {\em polytrope} in tropical geometry. For more on polytropes, we refer the interested reader to \cite{EHNSS/21,Jos22,joskul, maclagan}. In this paper, we will only deal with polytropes like the ones in the next example. As we mention in \cite[Example~13]{EHNSS/21}, such polytropes are called \emph{pyropes} in \cite{joskul} and can be seen as balls of radius $r$ in the tropical metric \cite[Section~3.3]{CGQ04}. Recall that $J_d$ denotes the matrix in $\Z^{d\times d}$ with $0$'s on the diagonal and off-diagonal entries all equal to $1$.

\begin{example}\label{ex:Jd}
Let $[\delta]\in Q(rJ_d)$ be such that $\delta$ has integral coordinates. Then there exists $\tilde{\delta}\in[\delta]$ all of whose coordinates $\tilde{\delta}_i$ are integral and satisfy $0\leq\tilde{\delta}_i\leq r$. Then $U_{\tilde{\delta}}$ belongs to $\mathcal{L}_{\bf e}(V_r)$ and, any other $\tilde{\delta}'$ such that 
\[
\tilde{\delta}'\in \tilde{\delta}+\Z{\bf 1}\ \textup{ and }\ 0\leq\tilde{\delta}_i'\leq r
\]
yields $[U_{\tilde{\delta}}]=[U_{\tilde{\delta}'}]$. More precisely, using the language of buildings, one can show that there is a one-to-one correspondence between the integral points of $Q(rJ_d)$ and the elements of $\mathcal{L}_{\bf e}^0(V_r)$; cf.\ \cref{th:isometry} and  \cite[Theorem~5.2]{EGS/21}. 
\end{example}	

\noindent	
Identifying $\R^d/\R{\bf 1}$ with $\{u\in\R^d \mid u_d=0\}$, it is not
difficult to see from \cref{eq:polytrope}  that the coordinates of vertices of the polytope $Q(rJ_d)$ are in $\{0,r\}^d\cup\{0,-r\}^d$. As mentioned in the Introduction, this has a nice interpretation in terms of free $R$-submodules of $V_r$.
	
\subsection{Permutation codes}	In this section we define permutation codes and give examples of such codes in connection with the theory of polytropes. In \cref{th:permutation} we give sharp bounds on the minimum distance and cardinality of permutation codes in terms of their defining parameters. 

\begin{definition}
An \emph{${\bf e}$-permutation code} in $V_r$ is a code of the form
\begin{equation}\label{eq:perm-codes}
\mathcal{C}=\{[U_{\delta}] \mid \delta\in\Sym(d)\cdot \varepsilon\}, \textup{ where } \varepsilon\in \partial \mathcal{E}_r^{(d)}.
\end{equation}
\end{definition}

\noindent
To lighten the notation, we will often write $\mathcal{C}=\Sym(d)\cdot\varepsilon$ for a code as in \eqref{eq:perm-codes}.

\begin{remark}\label{rmk:trop-dist}
Let $\delta,\varepsilon \in\{0,\ldots, r\}^d$. Then the distance between $[U_{\delta}]$ and $[U_{\varepsilon}]$ is given by 
\[
\dist([U_{\delta}],[U_{\varepsilon}])=\max_{i=1,\ldots,d}\{\delta_i-\varepsilon_i\}-\min_{i=1,\ldots,d}\{\delta_i-\varepsilon_i\}.
\]
This can be proven by direct computation or relying on \cref{th:isometry} and \cite[Remark~3.3]{EGS/21}.
\end{remark}

\begin{remark}
One could replace $\Sym(d)$ with $\Aut(V_r)$ and study codes of the form $\Aut(V_r)\cdot \varepsilon$, that is maximal codes consisting of pairwise isomorphic $R$-modules. However, one can already see for $d=2$ that these codes are not particularly interesting in terms of general bounds. More precisely, if $d=2$, one has $\partial\mathcal{L}(V_r)=\Aut(V_r)\cdot (r,0)=\mathcal{C}$ and so $\dist(\mathcal{C})=2$ while $|\mathcal{C}|=|\Gr(1,V_r)|=(q+1)q^{r-1}$.
\end{remark}

\noindent
Of particular interest are codes that are derived from vertices of the polytrope $Q(rJ_d)$; cf.\ \cref{ex:Jd}. Such vertices are given by permutations of elements $\varepsilon$ of $\mathcal{E}_r^{(d)}$ whose entries satisfy $\{0,r\}=\{\varepsilon_1,\ldots,\varepsilon_d\}$, in other words they correspond to the free $R$-submodules of $V_r$.
For each $n\in\{1,\ldots,d-1\}$, we set
\[
\mathcal{F}_r^n=\Sym(d)\cdot (\underbrace{r,\ldots,r}_{d-n},0,\ldots,0)
\]
describing the collection of all free $R$-submodules of $V_r$ that belong to $\mathcal{L}_{\bf e}(V_r)$.
Note that, by its definition, each $\mathcal{F}_r^n$ is contained in $\partial\mathcal{L}(V_r)$ 
and the cardinality of $\mathcal{F}_r^n$ is equal to 
\[
|\mathcal{F}_r^n|=\binom{d}{n}=\frac{d!}{n!(d-n)!}.
\]

\begin{example}
In \Cref{fig:rhombic}, the 14 regular vertices of the polytope $Q(J_4)$ are so divided: 
\begin{itemize}
\item the red vertices 
describe $\mathcal{F}_1^1$, 
\item the blue vertices 
describe $\mathcal{F}_1^3$, 
\item all other vertices, i.e.\ the yellow ones, are the elements of $\mathcal{F}_1^2$.
\end{itemize}
Moreover,  in the language of tropical geometry, the red and blue vertices are the  min- resp.\ max-vertices of the polytrope $Q(J_4)$; cf.\ \cite[Example~1,Theorem~16]{EHNSS/21}.
\end{example}

\begin{figure}[h!]
\includegraphics[scale=0.4]{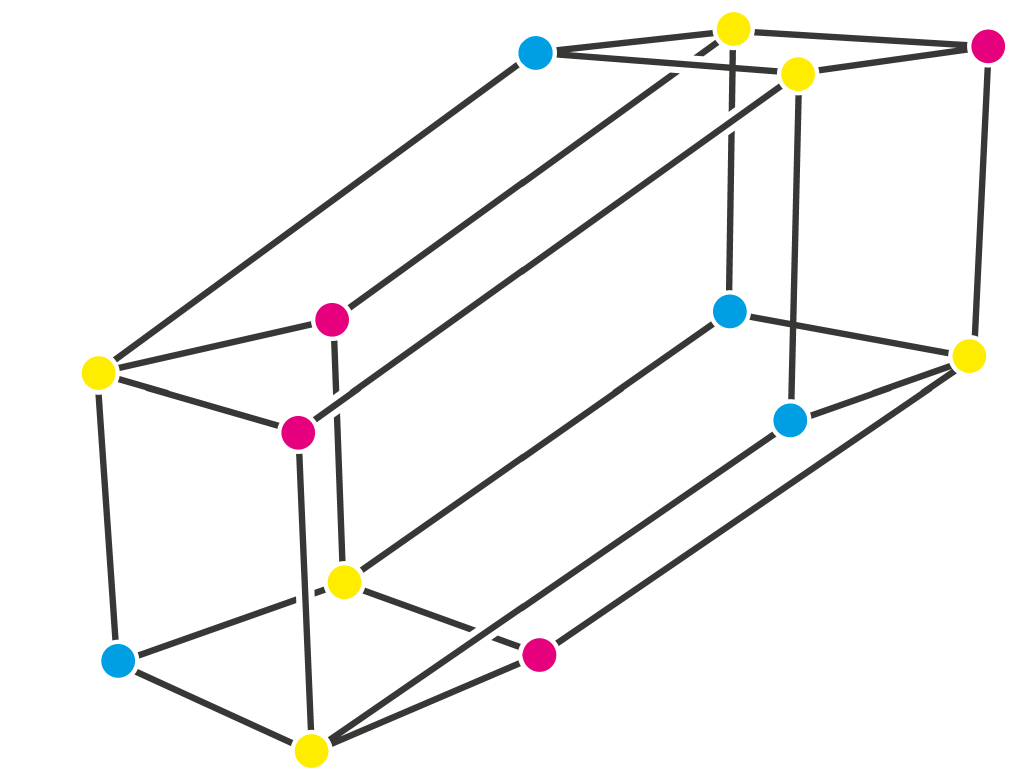}
\caption{A representation in $\R^4/\R{\bf 1}$ of $Q(J_4)$. The yellow dots constitute an ${\bf e}$-permutation code of maximal size having distance $2$; cf.\ \cref{th:permutation}.}\label{fig:rhombic}
\end{figure}

\noindent
In the following results we  
compute cardinality and minimal distance of permutation codes. 
For this, We fix $\varepsilon\in\partial\mathcal{E}_r^{(d)}$ and write
\begin{itemize}
    \item $\ell=-1+\lvert \{\varepsilon_1,\ldots,\varepsilon_{d}\} \rvert \geq 0$, 
    \item $\{\varepsilon_1,\ldots,\varepsilon_{d}\}=\{\tilde{\varepsilon}_1>\ldots >\tilde{\varepsilon}_{\ell+1}=0\}$.
\end{itemize}
For each $s\in\{1,\ldots, \ell+1\}$, we define moreover
$	m_s=\lvert \{i\in\{1,\ldots,d\} \mid \varepsilon_i=\tilde{\varepsilon}_s\} \rvert$ and note that $m_1+\ldots+m_{\ell+1}=d$.

\begin{proposition}\label{prop:dist-permutation}\label{lem:card-permutation}
For $\mathcal{C}=\Sym(d)\cdot \varepsilon$, the following hold:
\[
|\mathcal{C}|=\frac{d!}{m_1!m_2!\cdots m_{\ell+1}!}\ \textup{ and }\
\dist(\mathcal{C})=2\bigminplus_{1\leq j<i\leq \ell+1}\tilde{\varepsilon}_i-\tilde{\varepsilon}_j.
\]
\end{proposition}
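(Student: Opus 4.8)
The statement decomposes into two independent computations: the cardinality of $\mathcal{C}=\Sym(d)\cdot\varepsilon$ and its minimum distance. For the cardinality, the plan is to use the orbit–stabilizer theorem. The stabilizer of $\varepsilon$ in $\Sym(d)$ is exactly the subgroup of permutations that fix every coordinate value; since $\varepsilon$ takes the value $\tilde\varepsilon_s$ on precisely $m_s$ indices, this stabilizer is the Young subgroup $\Sym(m_1)\times\cdots\times\Sym(m_{\ell+1})$ of order $m_1!\cdots m_{\ell+1}!$. One must check that $[U_\delta]=[U_{\delta'}]$ for $\delta,\delta'\in\Sym(d)\cdot\varepsilon$ if and only if $\delta=\delta'$ as vectors: this holds because every $\delta\in\Sym(d)\cdot\varepsilon$ lies in $\partial\mathcal{E}_r^{(d)}$ up to permutation but, more to the point, homothety on $\mathcal{L}_{\bf e}^0(V_r)$ corresponds to translation by $\Z\mathbf 1$, and two permutations of $\varepsilon$ differing by a constant vector must be equal since their minimum coordinate is $0$ in both (here I use $\varepsilon_d=0$ and the fact that $\min_i \varepsilon_i = 0$, which forces the representative to be canonical). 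Hence $|\mathcal{C}|=[\Sym(d):\Sym(m_1)\times\cdots\times\Sym(m_{\ell+1})]=d!/(m_1!\cdots m_{\ell+1}!)$.

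For the minimum distance, the plan is to invoke \cref{rmk:trop-dist}, which gives, for $\delta,\delta'\in\{0,\ldots,r\}^d$,
\[
\dist([U_\delta],[U_{\delta'}])=\max_{i}\{\delta_i-\delta'_i\}-\min_{i}\{\delta_i-\delta'_i\}.
\]
Since all elements of $\mathcal{C}$ are permutations of the same vector $\varepsilon$, writing $\delta=\sigma\cdot\varepsilon$ and $\delta'=\tau\cdot\varepsilon$ and substituting $\rho=\tau^{-1}\sigma$ shows $\dist([U_\delta],[U_{\delta'}])$ depends only on $\rho$; it equals $\max_i\{\varepsilon_{\rho^{-1}(i)}-\varepsilon_i\}-\min_i\{\varepsilon_{\rho^{-1}(i)}-\varepsilon_i\}$. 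So one needs to minimize this quantity over all non-identity permutations $\rho$. Because the distinct values of $\varepsilon$ are $\tilde\varepsilon_1>\cdots>\tilde\varepsilon_{\ell+1}$, any $\rho$ not in the stabilizer must move some index from a coordinate with value $\tilde\varepsilon_a$ to one with value $\tilde\varepsilon_b$ with $a\neq b$; this already forces a positive difference of size at least $\tilde\varepsilon_{\min(a,b)}-\tilde\varepsilon_{\max(a,b)}$ to appear, and by conservation (a permutation is a bijection) a difference of the opposite sign of at least the same magnitude must also appear among some other index. One then checks that the minimum over all such $\rho$ is achieved by a transposition swapping one index of value $\tilde\varepsilon_i$ with one of value $\tilde\varepsilon_j$ where the gap $\tilde\varepsilon_i-\tilde\varepsilon_j$ (for $j<i$, so $\tilde\varepsilon_i-\tilde\varepsilon_j<0$... one must be careful with signs and take absolute values / the right min) is smallest; for such a transposition the two coordinatewise differences are $\pm(\tilde\varepsilon_i-\tilde\varepsilon_j)$ and all other differences are $0$, so $\dist=2|\tilde\varepsilon_i-\tilde\varepsilon_j|$. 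Taking the minimum over all pairs $1\le j<i\le\ell+1$ yields $\dist(\mathcal{C})=2\bigminplus_{1\le j<i\le\ell+1}(\tilde\varepsilon_i-\tilde\varepsilon_j)$ — note that with the convention $\tilde\varepsilon_1>\cdots>\tilde\varepsilon_{\ell+1}$ and $j<i$ the quantity $\tilde\varepsilon_i-\tilde\varepsilon_j$ is negative, so one should read the $\bigminplus$ in the magnitude-sense the paper intends, or equivalently it is $2\min_{j<i}(\tilde\varepsilon_j-\tilde\varepsilon_i)$ up to the sign convention; I will match the paper's notation precisely.

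\textbf{Main obstacle.} The routine part is the cardinality via orbit–stabilizer; the delicate part is the lower bound for the distance, i.e.\ proving that \emph{no} permutation $\rho\neq\id$ gives a spread smaller than $2\min_{j<i}(\tilde\varepsilon_j-\tilde\varepsilon_i)$. The clean way to argue this is: if $\rho$ moves some index, look at the largest value $\tilde\varepsilon_a$ that is not preserved setwise by $\rho$ in the appropriate sense — more precisely, pick indices $i_0$ with $\varepsilon_{\rho^{-1}(i_0)}>\varepsilon_{i_0}$ and $i_1$ with $\varepsilon_{\rho^{-1}(i_1)}<\varepsilon_{i_1}$ (both exist since $\rho\neq\id$ and $\varepsilon$ is non-constant — if one of these were empty the multiset of values wouldn't be preserved), giving $\max_i\{\varepsilon_{\rho^{-1}(i)}-\varepsilon_i\}\geq\varepsilon_{\rho^{-1}(i_0)}-\varepsilon_{i_0}>0$ and $\min_i\{\cdots\}\leq\varepsilon_{\rho^{-1}(i_1)}-\varepsilon_{i_1}<0$, whence the spread is at least the sum of two positive gaps, each of which is a sum of consecutive gaps $\tilde\varepsilon_s-\tilde\varepsilon_{s+1}$, hence each at least $\min_{j<i}(\tilde\varepsilon_j-\tilde\varepsilon_i)$. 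This shows $\dist(\mathcal{C})\geq 2\min_{j<i}(\tilde\varepsilon_j-\tilde\varepsilon_i)$, and the transposition construction above shows equality. I expect the sign/notation bookkeeping in $\bigminplus_{1\le j<i\le\ell+1}(\tilde\varepsilon_i-\tilde\varepsilon_j)$ to be the only genuinely fiddly point, since it must be reconciled with $\tilde\varepsilon_1>\cdots>\tilde\varepsilon_{\ell+1}$.
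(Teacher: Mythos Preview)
Your proposal is correct and follows essentially the same approach as the paper: orbit--stabilizer for the cardinality, and \cref{rmk:trop-dist} together with a transposition for the distance. The paper is considerably terser --- it declares the cardinality ``straightforward from the definition'' and, for the distance, simply exhibits the minimizing transposition without proving the lower bound --- whereas you supply the orbit--stabilizer details (including the check that distinct permutations of $\varepsilon$ give distinct homothety classes, since all have minimum coordinate $0$) and give a genuine argument that no non-stabilizing $\rho$ can produce a smaller spread. Your lower-bound argument via the existence of indices $i_0,i_1$ with opposite-sign differences (forced by $\sum_i(\varepsilon_{\rho^{-1}(i)}-\varepsilon_i)=0$) is exactly what the paper's proof leaves implicit.

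You are also right to flag the sign convention: with $\tilde\varepsilon_1>\cdots>\tilde\varepsilon_{\ell+1}$ and $j<i$, the quantity $\tilde\varepsilon_i-\tilde\varepsilon_j$ is negative, so the displayed formula should be read as $2\min_{j<i}(\tilde\varepsilon_j-\tilde\varepsilon_i)$; the paper's own proof writes $\varepsilon_h-\varepsilon_k$ for a positive gap and then equates it to the same $\bigminplus$-expression, so this is a notational slip in the statement rather than a substantive issue. One small tightening: when you minimize over ``non-identity $\rho$'' you should minimize over $\rho$ with $\rho\cdot\varepsilon\neq\varepsilon$ (i.e.\ $\rho$ outside the stabilizer), as you in fact do in your lower-bound paragraph.
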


\begin{proof} 
The first equality follows straightforward from the definition, so we prove the second.
To this end, write $\Ccal=\Sym(d)\cdot \varepsilon$ and set 
Set $[U_1]=[U_{\varepsilon}]$.
Let, moreover, $[U_2]\in\mathcal{C}$. In view of \cref{rmk:trop-dist},
to minimize $\dist([U_1],[U_2])$ we pick indices $h,k\in\{1,\ldots,d\}$ such that 
\[
\varepsilon_h-\varepsilon_k=\bigminplus_{1\leq j<i\leq \ell+1}\tilde{\varepsilon}_i-\tilde{\varepsilon}_j
\]
and define $\sigma$ to be the transposition in $\Sym(d)$ interchanging $h$ and $k$. Choosing $[U_2]$ to correspond to $\sigma\cdot \varepsilon$, we get from \cref{rmk:trop-dist} that 
\[
\dist(\mathcal{C})=\dist([U_1],[U_2])=2(\varepsilon_h-\varepsilon_k)=2\bigminplus_{1\leq j<i\leq \ell+1}\tilde{\varepsilon}_i-\tilde{\varepsilon}_j.
\]
\end{proof}

\begin{corollary}\label{th:free}
Let $n\in\{1,\ldots,d-1\}$. Then $\mathcal{F}_r^n$ is a spherical code in $V_r$ of minimal distance $2r$.
\end{corollary}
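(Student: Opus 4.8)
\textbf{Proof proposal for \Cref{th:free}.}

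The plan is to realize $\mathcal{F}_r^n$ as a permutation code of the form $\Sym(d)\cdot\varepsilon$ and then apply \Cref{prop:dist-permutation} directly. First I would observe that by definition $\mathcal{F}_r^n = \Sym(d)\cdot\varepsilon$ where $\varepsilon=(\underbrace{r,\ldots,r}_{r-n},0,\ldots,0)$; one must note that this $\varepsilon$ is a valid element of $\partial\mathcal{E}_r^{(d)}$, which is immediate since its entries are weakly decreasing, lie in $\{0,r\}\subseteq\{0,\ldots,r\}$, and satisfy $\varepsilon_1=r$ and $\varepsilon_d=0$ (here one uses $1\leq n\leq d-1$, so both $r$ and $0$ actually occur). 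Hence $\mathcal{F}_r^n$ is an ${\bf e}$-permutation code, and in particular a spherical code in $V_r$ with at least two elements, since $|\mathcal{F}_r^n|=\binom{d}{n}\geq 2$ for $d\geq 2$ and $1\leq n\leq d-1$.

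Next I would compute $\dist(\mathcal{F}_r^n)$ using \Cref{prop:dist-permutation}. In the notation there, the distinct values among the entries of $\varepsilon$ are $\{\tilde{\varepsilon}_1,\tilde{\varepsilon}_2\}=\{r,0\}$, so $\ell=1$, $\tilde{\varepsilon}_1=r$, $\tilde{\varepsilon}_2=0$, with multiplicities $m_1=r-n$ wait—more carefully, with the indexing convention $\tilde{\varepsilon}_1>\tilde{\varepsilon}_2$, we have $m_1$ equal to the number of coordinates equal to $r$ and $m_2$ equal to the number equal to $0$. The formula then gives
\[
\dist(\mathcal{F}_r^n)=2\bigminplus_{1\leq j<i\leq \ell+1}(\tilde{\varepsilon}_i-\tilde{\varepsilon}_j)=2(\tilde{\varepsilon}_2-\tilde{\varepsilon}_1)=2(0-r),
\]
and taking the absolute value as dictated by the distance being nonnegative—equivalently, reading the minimum of the quantities $\tilde\varepsilon_i-\tilde\varepsilon_j$ over the single pair $(j,i)=(1,2)$ as contributing magnitude $|{-r}|=r$—we obtain $\dist(\mathcal{F}_r^n)=2r$. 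Alternatively, and perhaps more transparently, I would invoke \Cref{rmk:trop-dist}: for $\delta,\varepsilon\in\{0,r\}^d$ with both being permutations of $\varepsilon$ and with $\delta\neq\varepsilon$, there is some index where $\delta_i-\varepsilon_i=r$ (an entry that went from $r$ to $0$ is compensated by one that went from $0$ to $r$), hence $\max_i(\delta_i-\varepsilon_i)=r$ and $\min_i(\delta_i-\varepsilon_i)=-r$, giving $\dist([U_\delta],[U_\varepsilon])=r-(-r)=2r$; since this holds for every pair of distinct elements, $\dist(\mathcal{F}_r^n)=2r$.

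There is essentially no obstacle here: the result is a routine specialization of \Cref{prop:dist-permutation} (equivalently \Cref{rmk:trop-dist}) to the two-value vector $\varepsilon$. The only point requiring a word of care is the bookkeeping in \Cref{prop:dist-permutation}: one should make sure the tropical minimum $\bigminplus$ over the index set $\{(j,i) : 1\leq j<i\leq \ell+1\}$ is correctly evaluated when $\ell=1$ (a single term), and that one remembers $\dist$ is symmetric and nonnegative so that the relevant contribution is $r$, not $-r$. Given that, the statement follows immediately. This also matches the earlier assertion in \Cref{sec:permutation} that the $\mathcal{F}_r^n$ correspond to vertices of the polytrope $Q(rJ_d)$, for which the pairwise tropical distance is visibly the diameter $2r$.
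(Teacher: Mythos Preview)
Your approach is exactly the one the paper intends: the corollary is stated immediately after \Cref{prop:dist-permutation} with no separate proof, so it is meant to be the direct specialization you carry out, namely taking $\varepsilon$ with $\{\tilde\varepsilon_1,\tilde\varepsilon_2\}=\{r,0\}$ and $\ell=1$.

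One comment on the execution: the awkwardness you encounter with the sign of $\tilde\varepsilon_i-\tilde\varepsilon_j$ is not a flaw in your reasoning but a typographical slip in the statement of \Cref{prop:dist-permutation} itself (the intended quantity is the minimum positive gap $\tilde\varepsilon_j-\tilde\varepsilon_i$ for $j<i$, as is clear from its proof). Your fallback computation via \Cref{rmk:trop-dist} is the clean way to bypass this and is entirely correct; I would simply lead with that argument rather than invoking absolute values after the fact.
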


\noindent
In the following result, we provide sharp bounds for minimum distance and cardinality when cardinality and minimum distance are given, respectively.

\begin{theorem}\label{th:permutation}
Let $1\leq\alpha\leq r$ be an integer and write $\mathcal{C}=\Sym(d)\cdot \varepsilon$.
Then the  following are satisfied:
\begin{enumerate}[label=$(\arabic*)$]
\item If $r=\delta \ell+Z$ with $\delta, Z$ non-negative integers satisfying $Z<\ell$, then 
\[\dist(\mathcal{C})\leq 2\delta.\]
\item Write $r=\alpha X+Y$ and $d=\beta X+\gamma$, for $X,Y,\beta,\gamma$ non-negative integers satisfying $Y<\alpha$ and $\gamma<X$.
If $\dist(\mathcal{C})=2\alpha$, then 
\[
|\mathcal{C}|\leq \frac{d!}{(\beta!)^{X+1}(\beta+1)^\gamma} .
\]
\end{enumerate}
\end{theorem}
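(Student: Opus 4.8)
The plan is to reduce both statements to a purely combinatorial optimization problem about multiplicity vectors $(m_1,\ldots,m_{\ell+1})$ of compositions of $d$, using the formulas from \cref{prop:dist-permutation}. Recall that $\mathcal{C}=\Sym(d)\cdot\varepsilon$ is completely determined (up to the relevant data) by the partition $d = m_1+\cdots+m_{\ell+1}$ into $\ell+1$ positive parts together with the strictly decreasing sequence $\tilde\varepsilon_1 > \cdots > \tilde\varepsilon_{\ell+1}=0$ in $\{0,\ldots,r\}$, and that
\[
|\mathcal{C}| = \frac{d!}{m_1!\cdots m_{\ell+1}!}, \qquad
\dist(\mathcal{C}) = 2\min_{1\le j<i\le \ell+1}(\tilde\varepsilon_i - \tilde\varepsilon_j)
\]
--- note $\tilde\varepsilon_i-\tilde\varepsilon_j$ with $i>j$ is negative, so the minimum of these differences over all pairs is $\tilde\varepsilon_{\ell+1}-\tilde\varepsilon_1 = -\tilde\varepsilon_1$; more usefully, since the $\tilde\varepsilon$'s are a chain, the binding constraint is the smallest gap between consecutive values, so $\dist(\mathcal{C}) = 2\min_{1\le s\le\ell}(\tilde\varepsilon_s - \tilde\varepsilon_{s+1})$.

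For part (1): the $\ell$ consecutive gaps $\tilde\varepsilon_s-\tilde\varepsilon_{s+1}$ are positive integers summing to $\tilde\varepsilon_1 \le r$. If every gap were $\ge \delta+1$, their sum would be $\ge \ell(\delta+1) = \delta\ell + \ell > \delta\ell + Z = r$, a contradiction. Hence some gap is $\le\delta$, so $\dist(\mathcal{C}) = 2\min_s(\tilde\varepsilon_s-\tilde\varepsilon_{s+1}) \le 2\delta$. This is the easy half and is essentially a one-line pigeonhole argument.

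For part (2): now $\dist(\mathcal{C})=2\alpha$ forces every consecutive gap to be $\ge\alpha$, so $r \ge \tilde\varepsilon_1 \ge \ell\alpha$, i.e.\ $\ell \le r/\alpha$, and since $\ell$ is an integer and $r = \alpha X + Y$ with $0\le Y<\alpha$, we get $\ell \le X$, i.e.\ the number of parts is $\ell+1 \le X+1$. Thus $|\mathcal{C}| = d!/(m_1!\cdots m_{\ell+1}!)$ is a multinomial coefficient with \emph{at most} $X+1$ parts summing to $d$, and we must maximize it. The standard fact is that $d!/(m_1!\cdots m_k!)$ with $k$ parts summing to $d$ is maximized when the parts are as equal as possible (i.e.\ each $m_i \in\{\lfloor d/k\rfloor, \lceil d/k\rceil\}$); and among $k\le X+1$, taking $k=X+1$ and balanced parts is best since refining a part only increases the multinomial coefficient. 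Writing $d = \beta X + \gamma$ with $0\le\gamma<X$ — so that with $k=X+1$ parts the balanced distribution has... — here I need to be careful: $d = (X+1)q' + r'$ would be the clean Euclidean division by $X+1$, but the statement divides by $X$. I would reconcile this by checking that $d=\beta X+\gamma$ with $\gamma<X$ means the balanced partition into $X+1$ parts consists of $\gamma$ parts equal to $\beta+1$ and $X+1-\gamma$ parts equal to $\beta$ — verifying $\gamma(\beta+1) + (X+1-\gamma)\beta = \beta(X+1)+\gamma = \beta X + \beta + \gamma = d + \beta$, which is off, so the bookkeeping needs adjustment; one likely has $(X+1-\gamma)$ replaced appropriately or a part equal to $\beta+\gamma$, matching the $(\beta+1)^\gamma$ in the denominator of the claimed bound $d!/((\beta!)^{X+1}(\beta+1)^\gamma)$ — indeed $(\beta!)^{X+1}(\beta+1)^\gamma = (\beta!)^{X+1-\gamma}\big((\beta+1)!\big)^{\gamma}\cdot(\text{correction})$, suggesting the balanced partition is $\gamma$ copies of $\beta+1$ and $X+1-\gamma$ copies of... no: $(\beta!)^{X+1}(\beta+1)^\gamma$ equals $(\beta!)^{X+1-\gamma}((\beta+1)!)^\gamma$ only if $(\beta!)^\gamma(\beta+1)^\gamma = ((\beta+1)!)^\gamma$, which holds. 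So the denominator is $(\beta!)^{X+1-\gamma}\big((\beta+1)!\big)^{\gamma}$, i.e.\ the balanced partition of $d$ into $X+1$ parts is: $\gamma$ parts of size $\beta+1$ and $X+1-\gamma$ parts of size $\beta$, requiring $\gamma(\beta+1)+(X+1-\gamma)\beta = d$, i.e.\ $\beta(X+1)+\gamma = d$, i.e.\ $\beta X + \beta + \gamma = d$; so one actually needs $d = \beta X + (\beta+\gamma)$ and the hypothesis $d=\beta X+\gamma$ should be read with the convention that after writing $d = \beta X + \gamma$ one then also has $X+1$ parts — I would recompute this carefully, but the shape of the argument is fixed regardless.

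\textbf{Main obstacle.} The genuine content is the extremal claim that the multinomial coefficient $d!/(m_1!\cdots m_k!)$ over all $k\le X+1$ and all positive compositions of $d$ is maximized by the balanced composition with exactly $X+1$ parts. This has two pieces: (i) for fixed $k$, balancing the parts maximizes the coefficient (swap argument: replacing parts $a\ge b+2$ by $a-1,b+1$ strictly increases $d!/(\prod m_i!)$ since $a!/(a-1)! = a > b+1 = (b+1)!/b!$); and (ii) increasing the number of parts (splitting a part into two) never decreases the coefficient, so one should use as many parts as allowed, namely $X+1$. Piece (ii) needs the constraint $\ell\le X$ derived above to cap the number of parts. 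Assembling (i)+(ii) and then matching the resulting optimum to the closed form $d!/((\beta!)^{X+1}(\beta+1)^\gamma)$ via the Euclidean-division bookkeeping is the only delicate step; everything else is immediate from \cref{prop:dist-permutation}. I expect the write-up to be short once the indexing convention in the statement is pinned down.
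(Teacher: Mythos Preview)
Your approach is exactly the paper's: reduce via \cref{prop:dist-permutation} to the consecutive-gap formula, use pigeonhole on the $\ell$ gaps for (1), and for (2) bound $\ell\le X$ from $\tilde\varepsilon_1\le r$ and then maximize the multinomial by taking $\ell=X$ and balancing the $m_i$'s. The paper's write-up is equally terse on the extremal step, simply asserting that the maximum occurs when ``$\ell$ is as large as possible and the $m_i$'s are all roughly the same'', and then stating that $\gamma$ of the $m_i$'s equal $\beta+1$.

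Your confusion over the final bookkeeping is legitimate and not a gap in your argument: as you compute, $\gamma$ parts of size $\beta+1$ and $X{+}1-\gamma$ parts of size $\beta$ sum to $\beta(X{+}1)+\gamma$, not $\beta X+\gamma$, so the parametrization in the statement is evidently meant to be Euclidean division of $d$ by $X{+}1$ rather than by $X$. The paper's proof does not check this arithmetic either; it just writes down the formula $d!/\bigl((\beta!)^{X+1-\gamma}((\beta+1)!)^{\gamma}\bigr)$ and simplifies. So you have the complete argument, and the only remaining issue is a labeling convention in the theorem statement itself, which the paper shares.
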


\begin{proof}
We start by proving (1). For this, write $r=\delta\ell+Z$ and assume without loss of generality that $\varepsilon$ is such that $\dist(\Ccal)$ is maximal. Thanks to \cref{prop:dist-permutation}, maximizing the minimum distance of $\Ccal$ is the same as maximizing the minimum $\eta$ of the set $\{\tilde{\varepsilon}_i-\tilde{\varepsilon}_j \mid i>j\}$. This is clearly achieved for $\eta=\delta$. We now prove (2). To this end, assume that $\dist(\Ccal)=2\alpha$. 
Thanks to \cref{prop:dist-permutation}, we know that $\min\{\tilde{\varepsilon}_i-\tilde{\varepsilon}_j \mid i>j\}=\alpha$ and we now need to determine $\varepsilon$ for which 
\[
|\mathcal{C}|=\frac{d!}{m_1!m_2!\cdots m_{\ell+1}!}
\]
is maximal, i.e.\ for which $m_1!m_2!\cdots m_{\ell+1}!$ is minimal. This happens when $\ell$ is as large as possible and the $m_i$'s are all roughly the same (i.e. the same or differing by $1$). In view of this, $\ell=X$ and $\{m_1,\ldots,m_{X+1}\}\subseteq\{\beta,\beta+1\}$. More precisely, the number of $m_i$'s that are equal to $\beta+1$ is $\gamma$ and so \cref{lem:card-permutation} yields
\[
|\Ccal|\leq \frac{d!}{(\beta!)^{X+1-\gamma}((\beta+1)!)^{\gamma}}=\frac{d!}{(\beta!)^{X+1}(\beta+1)^\gamma}.
\]
This concludes the proof.
\end{proof}

\noindent
It is not difficult to see, from the proof of \cref{th:permutation}, how one can build optimal codes in this context, i.e.\ permutation codes achieving the bounds from \cref{th:permutation}. As for the case of regular spherical codes, optimal permutation codes are not unique; cf.\ \cref{sec:sperner-codes}. Another thing that is worth mentioning is that optimal permutation codes are far from being optimal in the sense of \cref{cor:general-bound}. We remark that, similar bounds to those of \cref{th:permutation} are proven for a different type of permutation codes in \cite{MN20}.


\begin{question}
Are there other interesting generalizations of permutation codes in the context of buildings? What about group codes; cf.\ \cite[Chapter~8]{EriZin/01}?
\end{question}

\noindent
In the following remark we stress how, in terms of storage and decoding, permutation codes stand out among spherical codes (in accordance with the Euclidean setting).

\begin{remark}[A note on storage and decoding]\label{rmk:decoding}
Let $\mathcal{C}$ be any spherical code in $V_r$. Then the elements of $\mathcal{C}$ can be encoded in a vector of $(d\times d)$-matrices with coefficients in $R$ where the row-span of each matrix identifies an element $[U]$ of $\Ccal$ via returning its $\tilde{U}$ representative. A convenient choice would be to communicate these matrices in row echelon form. In the special case when $\Ccal$ is a permutation code, it however suffices to store an element of $\partial\mathcal{E}_r^{(d)}$ to give full information on the code $\Ccal$.

For what concerns decoding, the lack of additional structure makes it difficult to give a straightforward algorithm for the decoding of general spherical codes of modules, even in the case where they are known to be Sperner codes.
However, thanks to \cref{rmk:trop-dist} and in agreement with the Euclidean case, the decoding of ${\bf e}$-permutation codes is relatively simple. To illustrate this, we fix a permutation code $\Ccal$ and a vector $\eta=(\eta_1,\ldots,\eta_d)\in\mathcal{E}_r^{(d)}$ (note that actually $\eta$ can be taken in $\Z^d$ as the following algorithm allows us also to work in balls of larger radius; cf.\ \cref{sec:dist-balls-buildings}). 
We want to find $\varepsilon^*\in \partial\mathcal{E}_r^{(d)}$ such that $U_{\varepsilon^*}\in\Ccal$ and $\dist([U_{\eta}],[U_{\varepsilon^*}])=\dist([U_{\eta}],\Ccal)$. We follow the steps below: 
\begin{enumerate}
    \item Let $\eta'\in\Z^d$ and $\sigma\in\Sym(d)$ be such that $\eta_1'\geq \ldots \geq \eta_d'$ and $\eta'=\sigma\cdot \eta$.
    \item Define $\tilde{\eta}=\eta'-\eta_d'{\bf 1}$.
    \item Choose $\varepsilon\in\Ccal$ and identify $h,k\in\{1,\ldots, d\}$ such that $\varepsilon_h=0$ and $\varepsilon_k=r$.
    \item Define $\varepsilon'$  and $\tau\in\Sym(d)$ to satisfy 
    $\varepsilon'=\tau\cdot\varepsilon$ and $\varepsilon_1'=0$ and $\varepsilon_d'=r$. 
    \item Set $\varepsilon^*=\sigma^{-1}\cdot \varepsilon'$.
\end{enumerate}
We see from its construction that the element $\varepsilon^*$ might not be unique. It is, however, not difficult to design an algorithm avoiding choices, once $\Ccal$ is given.
\end{remark}

\section{Spherical codes in Bruhat--Tits buildings}\label{sec:spherical-in-BT}

\noindent
In this section we rephrase the results of this paper in terms of buildings. As we will see, Bruhat--Tits buildings are a way of talking about \emph{lattices} and via these objects we can consider balls (in the sense of \cref{sec:module-distance}) ``of any radius'' at the same time. Moreover, it is worth mentioning that, on top of their central role in the theory of reductive groups, buildings have many different  applications, for instance in optimization \cite{BFGOWW19,HH21}, statistics \cite{EM21,ET19}, and coding theory \cite{LNVC18}. Though the employment of buildings in the study and construction of codes is not new, this seems to be the first time spherical codes in buildings are considered. In the applications of flags to network coding, spherical buildings are used. Such strategy, first introduced in \cite{LNVC18}, has found further developments in \cite{AGNP21,AGNPSE20,AGNPSE21,Kurz}
and variations in \cite{FN21}. Moreover, Bruhat--Tits buildings 
 also make their appearance in the study of holographic codes \cite{Mar18} as well as in the study of valued rank-metric codes \cite{EHNS/21}. 
 
 \subsection{From chain rings to valued fields}\label{sec:notation+}

\noindent
We choose a discretely valued field $(K,\val)$, with valuation ring $\Ocal_K$, uniformizer $\pi$, and unique maximal ideal $\mathfrak{m}_K=\Ocal_K\pi\neq 0$, in such a way that 
$R\cong \Ocal_K/\mathfrak{m}_K^r$; cf.\ \cite[\S 1]{AA/22}.
With a slight abuse of notation, we set ${\bf e}=(e_1,\ldots,e_d)$ to be the standard basis of $K^d$ and we write $\Ocal_K^d$ for the free $\Ocal_K$-module $\Ocal_K^d=\Ocal_Ke_1\oplus\ldots\oplus\Ocal_Ke_d$.
We will use the bar notation for the subobjects of $V_r$: if $L\subseteq\Ocal_K^d$, then $\overline{L}$ denotes the image of $L$ in $V_r$ under the natural projection $\Ocal_K^d\rightarrow V_r$. Up to very small variations, our notation is compatible with the one from \cite{EGS/21}.

\subsection{Lattices and buildings}

An \emph{$\Ocal_K$-lattice} (or simply lattice) in $K^d$ is a free $\Ocal_K$-submodule of maximal rank $d$. The (\emph{homothety}) \emph{class} of a lattice $L$ in $K^d$ is
 \[
 [L]=\{cL \mid c\in K\setminus\{0\}\}=\{\pi^nL \mid n\in\Z\},
 \]
  while $\End_{\Ocal_K}(L)$ denotes the endomorphism ring of $L$ as an $\Ocal_K$-submodule of $K^d$, i.e.\ the collection of $\Ocal_K$-linear maps $K^d\rightarrow K^d$ that stabilize $L$. 
  Note that any two homothetic lattices have the same endomorphism ring. Moreover, lattices in $K^d$ form one orbit under the natural action of $\GL_d(K)$ and so it will often not be restrictive to assume (up to base change) that a given lattice $L$ is equal to $\Ocal_K^d$. Additionally, each element of $\mathcal{L}(V_r)$ can be obtained from a lattice $\pi^r\Ocal_K^d \subseteq L\subseteq \Ocal_K^d$, via  projecting $L$ to $V_r$:
  \begin{equation}\label{eq:projection}
  \Ocal_K^d \supseteq L \longmapsto \overline{L}=U_L \subseteq V_r.
  \end{equation}
 We stress that the notions of equivalence for lattices and modules are compatible by means of the last projection.
In line with the content of this paper, we define the affine building of $\SL_d(K)$ via its lattice class model \cite{AbramenkoNebe,Garret} and refer the interested reader to \cite{AbramenkoBrown} for the more general description.

\begin{definition}\label{def:building}
The \emph{affine building} $\Bcal _d(K)$ is an infinite simplicial
complex such that  
\begin{enumerate}[label=$(\arabic*)$]
    \item the vertex set is $\mathcal{B}_d^0=\{ [L] \mid L \mbox{ is an $\Ocal _K$-lattice in } K^{d} \}. $
    \item $\{ [L_1], \ldots , [L_s] \} $ is a simplex
in $\Bcal _d(K)$ if and only if, up to permutation of the indices and choice of representatives, one has $L_1 \supset L_2 \supset \cdots \supset L_s \supset \pi L_1$.
\end{enumerate}
The {\em standard apartment} of $\mathcal{B}_d(K)$ is the subset $\mathcal{A}$ of $\mathcal{B}^0_d(K)$ of all lattice classes with representatives of the form 
\[
L_u=\Ocal_K\pi^{u_1}e_1\oplus\ldots\oplus\Ocal_K\pi^{u_d}e_d, \textup{ where } u=(u_1,\ldots,u_d)\in\Z^d.
\]
\end{definition}
\noindent
More generally, one could define an apartment for any frame choice in $K^d$, cf.\ \cite[Section~2]{EGS/21}. 
Since \eqref{eq:projection} respects homothety classes, 
the new terminology allows us to consider the codes from \cref{sec:permutation} as one-apartment codes in buildings.

\begin{example}\label{ex:buildings}
The rings from \cref{ex:d=2} and \cref{ex:d=3} can both be expressed as quotients of a $p$-adic ring: in the first case $R\cong\Ocal_K/\mathfrak{m}_K^r=\Z_2/(2\Z_2)^5$ while in the second case $R\cong\Ocal_K/\mathfrak{m}_K^r=\Z_5/(5\Z_5)^2$. When $d=2$ or $d=3$, local pictures of $\B_d(\Q_2)$ can be found in \cite[Figures~2-5]{BekSol22}. 
\end{example}

\subsection{Distance and balls}\label{sec:dist-balls-buildings}
	
The following distance was introduced in \cite[Definition~3.1]{EGS/21}. In view of \cref{th:isometry}, we use the same notation as in \cref{def:dist-mod}.

\begin{definition}\label{def:dist}
	Let $[L_1], [L_2] \in {\mathcal B}_d^0(K)$ be two 
	homothety classes of lattices. 
	Then 
$$\dist([L_1],[L_2])= \min \{ s \mid 
	\mbox{ there are } L_1'\in [L_1], L_2' \in [L_2] \mbox{ with } 
	\pi ^{s } L_1' \subseteq L_2' \subseteq L_1' \} .$$
	\end{definition}
	
\noindent 
As proven in \cite[Lemma~3.2]{EGS/21}, the map $\dist:\mathcal{B}_d^0(K)\times\mathcal{B}_d^0(K)\rightarrow\Z$ defines a distance on $\mathcal{B}_d^0(K)$. In view of this, it makes sense to define balls in $\mathcal{B}_d^0(K)$.

	\begin{definition} \label{def:ball} 
		Let $[L]$ be a lattice class in $\mathcal{B}_d^0(K)$. Then  the {\em (closed) ball of radius $r$ and center $[L]$} is
		\[\B_r([L])= \{ [L'] \in {\mathcal B}_d^0(K) \mid 
		\dist ([L],[L'] ) \leq r \} \]
and its \emph{boundary} is
\[
\partial\B_r([L])=\B_r([L])\setminus \B_{r-1}([L])=\{[L']\in\mathcal{B}_d^0(K) \mid \dist([L'],[L])=r\}.
\]
If $[L]=[\Ocal_K^d]$, we write simply $\B_r$ and $\partial\B_r$  for $\B_r([\Ocal_K^d])$ and  $\partial\B_r([\Ocal_K^d])$, respectively.
	\end{definition} 
	
	\begin{example}
 Assume $d=2$. Then $\mathcal{B}_2(K)$ is a $(q+1)$-regular tree and $\dist$ equals the graph distance on $\mathcal{B}_2(K)$. \cref{fig:sperner-tree} represents $\B_5$ as a subset of $\mathcal{B}_2(\Q_2)$. In the same figure, the red points constitute $\partial\B_3$. For more on buildings as trees, see for instance \cite{SerreTrees}.
\end{example}
	
	\noindent
	Balls in the affine building $\mathcal{B}_d(K)$ naturally arise as the collections of stable lattice classes of ball orders \cite[Section~5]{EGS/21} and can be modeled by means of the submodules of $V_r$.
	
	\begin{theorem}\label{th:isometry}
	The following are isometric:
	\begin{enumerate}[label=$(\arabic*)$]
	    \item $\mathcal{L}^0(V_r)$ and $\B_r$,
	    \item $\partial\mathcal{L}(V_r)$ and $\partial\B_r$,
	    \item $\mathcal{L}^0_{\bf e}(V_r)$ and $\B_r\cap \mathcal{A}$,
	    \item $\partial\mathcal{L}_{\bf e}(V_r)$ and $\partial\B_r\cap\mathcal{A}$.
	\end{enumerate}
	\end{theorem}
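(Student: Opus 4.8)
The plan is to construct an explicit isometry $\Phi\colon\mathcal{L}^0(V_r)\to\B_r$ and check it restricts correctly in the other three cases. The natural candidate is the map induced by \eqref{eq:projection}: given a class $[U]\in\mathcal{L}^0(V_r)$, choose the representative $U=\tilde U$ with $\tilde U=\tilde{\tilde U}$, lift it to the unique lattice $L$ with $\pi^r\Ocal_K^d\subseteq L\subseteq\Ocal_K^d$ and $\overline{L}=\tilde U$ (explicitly $L=q^{-1}(\tilde U)$ for $q\colon\Ocal_K^d\to V_r$ the projection), and set $\Phi([U])=[L]$. First I would verify this is well defined and a bijection onto $\B_r$: the lattices $L$ with $\pi^r\Ocal_K^d\subseteq L\subseteq\Ocal_K^d$ are exactly those with $\dist([L],[\Ocal_K^d])\leq r$ by \cref{def:dist}, and projection modulo $\pi^r$ sets these in bijection with $\mathcal{L}(V_r)$; one then checks that two such lattices lie in the same homothety class in $\mathcal{B}_d^0(K)$ if and only if their images are homothetic in $\mathcal{L}^0(V_r)$, which follows from the characterization of $\tilde U$ as the unique maximal submodule with $\pi^{m_U}\tilde U=U$ and the fact that homothety by $\pi^n$ with $0\le n\le r-m_U$ exhausts each class.

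Next I would prove $\Phi$ is an isometry, i.e.\ $\dist([U_1],[U_2])=\dist(\Phi([U_1]),\Phi([U_2]))$. Writing $L_i=q^{-1}(\tilde U_i)$, the key observation is that $\pi^s L_1\subseteq L_2$ holds if and only if $\pi^s\tilde U_1\subseteq\tilde U_2$, because $\pi^r\Ocal_K^d$ lies in both $L_1$ and $L_2$ so passing to the quotient $V_r=\Ocal_K^d/\pi^r\Ocal_K^d$ loses no information for the inclusions in question (one has $0\le s\le r$ on both sides by \cref{prop:ball-sphere-modules} and \cref{def:dist}). Feeding this into \cref{def:dist-mod} and \cref{def:dist}, the quantities $n_{12}$ and $n_{21}$ match on the nose, and since the building distance of \cref{def:dist} can be rewritten symmetrically as $\min\{s+t\mid \pi^s\tilde L_1\subseteq \tilde L_2,\ \pi^t\tilde L_2\subseteq\tilde L_1\}$ — this is essentially \cite[Lemma~3.2]{EGS/21} — the two distances agree. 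This gives part $(1)$.

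For part $(2)$, I would invoke \cref{prop:ball-sphere-modules}, which identifies $\partial\mathcal{L}(V_r)=\partial\B_r([V_r])$ inside $\mathcal{L}^0(V_r)$, together with the analogous statement for buildings: $[L]\in\partial\B_r$ exactly when $\dist([L],[\Ocal_K^d])=r$. Since $\Phi$ is an isometry sending $[V_r]=[0]$ to $[\Ocal_K^d]$, it carries the sphere of radius $r$ to the sphere of radius $r$, and $(2)$ follows by restriction. For parts $(3)$ and $(4)$, I would check that $\Phi$ identifies $\mathcal{L}^0_{\bf e}(V_r)$ with $\B_r\cap\mathcal{A}$: a module $U_\delta=\bigoplus R\pi^{\delta_i}e_i$ with $0\le\delta_i\le r$ lifts to the lattice $L_\delta=\bigoplus\Ocal_K\pi^{\delta_i}e_i$ (after the standard normalization), which is exactly a representative of a class in the standard apartment $\mathcal{A}$ from \cref{def:building}; conversely every class in $\B_r\cap\mathcal{A}$ has such a representative with coordinates in $[0,r]$ by translating by $\Z\mathbf{1}$, as in \cref{ex:Jd}. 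Then $(4)$ follows from $(3)$ by intersecting with the spheres as in $(2)$, using $\partial\mathcal{L}_{\bf e}(V_r)=\partial\mathcal{L}(V_r)\cap\mathcal{L}_{\bf e}(V_r)$.

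The main obstacle I anticipate is the bookkeeping around homothety: in $\mathcal{L}^0(V_r)$ a class has at most $r+1$ elements and is truncated (one cannot divide by $\pi$ past $V_r$), whereas in $\mathcal{B}_d^0(K)$ a class $[L]=\{\pi^nL\mid n\in\Z\}$ is genuinely infinite, so the bijection on classes must be set up so that the finite "window" $\pi^r\Ocal_K^d\subseteq L\subseteq\Ocal_K^d$ picks out exactly one lattice per class in $\B_r$ and exactly one module per class in $\mathcal{L}^0(V_r)$ compatibly. Making the inclusion-translation statement $\pi^sL_1\subseteq L_2\Leftrightarrow\pi^s\tilde U_1\subseteq\tilde U_2$ precise — in particular ruling out the degenerate cases where one of the modules is $0$ or $V_r$, which is where $[V_r]=[0]$ has $r+1$ elements — is the delicate point; everything else is a routine unwinding of the definitions in \cref{sec:module-distance} against those in \cite{EGS/21}.
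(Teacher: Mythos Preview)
Your proposal is correct and follows essentially the same route as the paper: both use the projection \eqref{eq:projection} to set up the bijection $\B_r\leftrightarrow\mathcal{L}^0(V_r)$ and then verify that the two distances agree by matching the integers $n_{12},n_{21}$ on each side, with (2)--(4) obtained by restriction. The only cosmetic difference is that the paper proves the equality $\dist([L_1],[L_2])=n_{12}+n_{21}$ by a direct two-inequality argument on lattices, whereas you factor it through the inclusion equivalence $\pi^sL_1\subseteq L_2\Leftrightarrow\pi^s\tilde U_1\subseteq\tilde U_2$ and then invoke the symmetric reformulation of \cref{def:dist}; note that your ``$\tilde L_i$'' should just read $L_i$, since no tilde operation is defined on lattices.
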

	
	\begin{proof}
We show (1). To this end, we start by observing that $[L]\in\B_r$ if and only if there exists a representative $L'\in[L]$ such that $\pi^r\Ocal_K\subseteq L'\subseteq \Ocal_K^d$. Since \cref{eq:projection} respects homothety, it is clear that $\B_r$ and $\mathcal{L}^0(V_r)$ are in bijection via $\Ocal_K^d\rightarrow V_r$. We show that the distances are also compatible. For this, let $\pi^r\Ocal_K\subseteq L_1,L_2\subseteq \Ocal_K^d$ be lattices and write $U_1=\overline{L_1}$ and $U_2=\overline{L_2}$.
Assume without loss of generality that $U_1=\tilde{U_1}$ and $U_2=\tilde{U_2}$.
Set $\alpha=\dist([L_1],[L_2])$ and let $n_{12}$ and $n_{21}$ be as in \cref{def:dist-mod}. 
It follows from the definitions of $U_1$ and $U_2$ that $L_1\supseteq \pi^{n_{21}}L_2\supseteq \pi^{n_{21}}(\pi^{n_{12}}L_1)=\pi^{n_{21}+{n_{12}}}L_1$ and in particular $\alpha\leq n_{12}+n_{21}$. 
Without loss of generality, let now $m$ be a non-negative integer such that $L_1\supseteq \pi^mL_2\supseteq \pi^{\alpha} L_1$. Then it follows from the definitions of $n_{21}$ and $n_{21}$ that $m\geq n_{21}$ and $\alpha-m\geq n_{12}$. Moreover, we have
\[
\pi^{n_{12}+n_{21}}L_1\subseteq \pi^{\alpha}L_1\subseteq \pi^mL_2, 
\]
which in turn yields that $\pi^{n_{12}+n_{21}-m}L_1\subseteq L_2$. It follows from the definition of $n_{12}$ that $m=n_{21}$ and thus we derive that $\alpha\geq n_{12}+n_{21}$.
This proves (1) and so, as a consequence, also (2),(3), and (4).	
	\end{proof}
	
\noindent
In view of the last theorem, we transport \cref{def:mindist,def:spherical-code} to the framework of Bruhat--Tits buildings.
	
\begin{definition}\label{def:spherical-code-building}
A \emph{spherical code} in $\B_r$ is a subset $\Ccal$ of $\B_r$ with $|\Ccal|\geq 2$. The \emph{minimum distance} of $\Ccal$ is 
\[
\dist(\Ccal)=\min\{\dist([L_1],[L_2]) \mid [L_1],[L_2]\in\Ccal,\ [L_1]\neq [L_2]\}.
\]
\end{definition}

\noindent
The results from \cref{sec:Sperner,sec:extremal,sec:permutation} can now be also stated in terms of spherical codes in buildings. We close this section with a connection to an earlier paper.
The following is the same as \cite[Definition~5.5]{EGS/21}.

\begin{definition}
	A {\em star configuration} $\star_r ([L]) $ with center $[L]$ and
	radius $r$ is a set 
	$$\star _r ([L]) = \{ [L_1],\ldots , [L_d], [L_{d+1}] \} $$ 
	such that the following hold:
	\begin{enumerate}[label=$(\arabic*)$]
	    \item $\pi ^r L \subseteq L_1,\ldots , L_{d+1} \subseteq L$,
	    \item for each $i\in\{1,\ldots,d+1\}$, one has $L_i / \pi^r L \cong R$,
	    \item for each $i\in\{1,\ldots,d+1\}$, one has $L= \sum _{j\neq i} L_j $. 
	\end{enumerate}
\end{definition}

\begin{proposition}\label{prop:star-is-spherical}
A star configuration $\star([\Ocal_K^d])$ with center $[\Ocal_K^d]$ and radius $r$ is a spherical code with $\mindist(\star([\Ocal_K^d]))=2r$. 
\end{proposition}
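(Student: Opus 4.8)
The goal is to show that a star configuration $\star_r([\Ocal_K^d]) = \{[L_1],\ldots,[L_{d+1}]\}$ is a spherical code in $\B_r$ of minimum distance $2r$. Via the isometry of \cref{th:isometry}(2), it suffices to work with the associated boundary submodules $U_i = \overline{L_i} \in \partial\mathcal{L}(V_r)$ (one must first check these really are boundary modules, not just elements of $\mathcal{L}(V_r)$). The plan is to translate the three defining axioms of a star configuration into module-theoretic statements about the $U_i$, and then apply \cref{prop:2r<->modules} to each pair $(U_i,U_j)$ with $i\neq j$.

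\textbf{Key steps.} First I would record what the axioms say after projecting $\Ocal_K^d \to V_r$. Axiom (1), $\pi^r L \subseteq L_i \subseteq L$, ensures $U_i \in \mathcal{L}(V_r)$. Axiom (2), $L_i/\pi^r L \cong R$, says $U_i \cong R$ as an $R$-module, i.e.\ $U_i$ is free of rank $1$; in particular $\pi^{r-1}U_i \neq 0$, so $\pi^{r-1}V_r \not\subseteq U_i$ (as $V_r$ has rank $d \geq 2$) and $U_i \not\subseteq \pi V_r$ (since $U_i$ is free of rank $1$, not contained in $\pi V_r$), which places $U_i \in \partial\mathcal{L}(V_r)$ and hence $\dist([U_i],[V_r]) = r$ by \cref{prop:ball-sphere-modules}. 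This already gives that $\star_r([\Ocal_K^d])$ is a subset of $\partial\B_r$, i.e.\ a spherical code. Second, to compute the minimum distance, fix $i \neq j$. By \cref{prop:2r<->modules}, $\dist([U_i],[U_j]) = 2r$ is equivalent to $\pi^{r-1}U_i, \pi^{r-1}U_j \not\subseteq U_i \cap U_j$. Here is where axiom (3) enters: $L = \sum_{k \neq i} L_k$. Since each $U_k$ is free of rank $1$ and $V_r$ has rank $d$, and the $d$ modules $\{U_k : k \neq i\}$ generate $V_r$ (an $R$-module requiring $d$ generators by Nakayama), the images $\overline{U_k}$ in $V_r/\pi V_r \cong (R/\mathfrak{m})^d$ must be linearly independent, i.e.\ $\{U_k : k \neq i\}$ is a basis-like free decomposition $V_r = \bigoplus_{k\neq i} U_k$. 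Running this for every omitted index, any $d$ of the $d+1$ modules $U_1,\ldots,U_{d+1}$ form a direct-sum decomposition of $V_r$; in particular for any two distinct indices $i,j$ one has $U_i \cap U_j = 0$ (they sit inside a common direct sum over a set of indices containing both $i$ and $j$). Then $\pi^{r-1}U_i \neq 0 = U_i \cap U_j$ and likewise for $j$, so \cref{prop:2r<->modules} gives $\dist([U_i],[U_j]) = 2r$. Since this holds for all pairs, $\dist(\star_r([\Ocal_K^d])) = 2r$, and \cref{lem:basic-mindist} confirms this is the maximum possible.

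\textbf{Main obstacle.} The delicate point is extracting ``any $d$ of the $U_i$ form a direct sum equal to $V_r$, and hence pairwise intersect trivially'' from the additive condition $L = \sum_{j\neq i}L_j$ together with freeness of rank $1$. One must argue carefully that a sum of $d$ rank-one free submodules of a rank-$d$ free module that equals the whole module is necessarily direct — this follows from a Nakayama/dimension count modulo $\mathfrak{m}$, but it is worth spelling out: reducing mod $\pi$, the $d$ one-dimensional images span the $d$-dimensional space $V_r/\pi V_r$, hence are a basis there, and then a lifting argument (again Nakayama) shows the lift is a free basis of $V_r$. Once that structural fact is in hand, the trivial-intersection claim and the appeal to \cref{prop:2r<->modules} are routine. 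A minor secondary point to not overlook is verifying $\pi^{r-1}V_r \not\subseteq U_i$, which uses $d\geq 2$ so that $\pi^{r-1}V_r$ has rank-like ``size'' strictly larger than the rank-one module $\pi^{r-1}U_i$.
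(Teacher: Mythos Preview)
Your proposal is correct and reaches the same conclusion via \cref{prop:2r<->modules}, but the route differs from the paper's. The paper simply declares that, without loss of generality, the lattices are in the explicit standard form
\[
L_i = \Ocal_K e_i + \pi^r\Ocal_K^d \ (1\le i\le d),\qquad L_{d+1}=\Ocal_K(e_1+\cdots+e_d)+\pi^r\Ocal_K^d,
\]
and then the verification that $\pi^{r-1}U_i,\pi^{r-1}U_j\not\subseteq U_i\cap U_j$ is immediate from these coordinates. You instead work directly from the three axioms of a star configuration and extract the structural fact that any $d$ of the $U_k$ give a direct-sum decomposition of $V_r$, via a Nakayama argument mod~$\pi$, whence $U_i\cap U_j=0$.

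What each buys: the paper's coordinate reduction is quicker on the page but the ``without loss of generality'' step is doing real work---it implicitly uses exactly the freeness/Nakayama reasoning you write out, to move an arbitrary star configuration to the standard one by a basis change. Your argument is more self-contained and makes the role of axiom~(3) transparent; it also shows why the conclusion is intrinsic and not an artifact of the chosen coordinates. One small remark on your write-up: when you argue $U_i\cap U_j=0$, you should note that since $d\ge 2$ there is always a third index $m\notin\{i,j\}$, so both $U_i$ and $U_j$ appear as summands in the decomposition $V_r=\bigoplus_{k\neq m}U_k$; you do this implicitly but it is worth making explicit.
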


\begin{proof}
Write $\star([\Ocal_K^d])=\{[L_1],\ldots,[L_{d+1}]\}$.  In view of conditions (1)-(2)-(3) above, up to a convenient base change, we assume without loss of generality that
\[
L_i=\begin{cases}
\Ocal_Ke_i+\pi^r\Ocal_K^d & \textup{ if } 1\leq i\leq d, \\
\Ocal_K(e_1+\ldots+e_d)+\pi^r\Ocal_K^d & \textup{ if } i=d+1.
\end{cases}
\]
It is clear that $\star([\Ocal_K^d])$ is a spherical code in $\B_r$.
Fix now $i\neq j$. Then \Cref{th:isometry,prop:2r<->modules} yield $\dist([L_i],[L_j])=2r$ and, the choice of $i,j$ being arbitrary, it follows that $\dist(\star([L]))=2r$.
\end{proof}

\noindent
The next corollary follows in a straightforward way from \Cref{def:XY}, with the combination of \Cref{lem:basic-mindist,th:isometry,prop:star-is-spherical}.

\begin{corollary}
One has $\dist(d;R;d+1)=2r$.
\end{corollary}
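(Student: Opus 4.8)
The statement to prove is $\dist(d;R;d+1)=2r$. Recall from \Cref{def:XY} that $\dist(d;R;d+1)=\max\{\dist(\mathcal{C}) \mid \mathcal{C}\subseteq\partial\mathcal{L}(V_r),\ |\mathcal{C}|\geq d+1\}$, so there are two inequalities to establish. The upper bound $\dist(d;R;d+1)\leq 2r$ is immediate: by \Cref{lem:basic-mindist}, every spherical code $\mathcal{C}$ in $V_r$ satisfies $\mindist(\mathcal{C})\leq 2r$, and this applies in particular to any code with $|\mathcal{C}|\geq d+1$ (such codes exist, e.g.\ since $|\partial\mathcal{L}(V_r)|$ is easily seen to exceed $d$). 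For the lower bound, I would exhibit a single spherical code of cardinality exactly $d+1$ and minimum distance $2r$; the star configuration supplies exactly this.

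\textbf{Key steps, in order.} First, transport to the building side via \Cref{th:isometry}(2), which gives an isometry $\partial\mathcal{L}(V_r)\cong\partial\B_r$; this lets me speak of spherical codes in $\B_r$ interchangeably with spherical codes in $V_r$. Second, invoke \Cref{prop:star-is-spherical}: a star configuration $\star([\Ocal_K^d])$ with center $[\Ocal_K^d]$ and radius $r$ is a spherical code in $\B_r$ with $|\star([\Ocal_K^d])|=d+1$ and $\mindist(\star([\Ocal_K^d]))=2r$. Pulling this back through the isometry yields a spherical code $\mathcal{C}$ in $V_r$ with $|\mathcal{C}|=d+1$ and $\dist(\mathcal{C})=2r$, whence $\dist(d;R;d+1)\geq 2r$. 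Third, combine with \Cref{lem:basic-mindist} to get equality.

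\textbf{Expected obstacle.} There is essentially no obstacle here: the statement is a corollary, and every ingredient — \Cref{lem:basic-mindist}, \Cref{th:isometry}, \Cref{prop:star-is-spherical} — has already been proven in the excerpt. The only minor point of care is confirming that the quantity $\dist(d;R;d+1)$ is well-defined, i.e.\ that $\partial\mathcal{L}(V_r)$ really does contain at least $d+1$ elements so that the maximum is taken over a nonempty set; this follows because the Grassmannian $\Gr(1,V_r)\subseteq\partial\mathcal{L}(V_r)$ already has more than $d$ elements (by \eqref{eq:grass}, its cardinality is $\binom{d}{1}_{q^{-1}}q^{r(d-1)}>d$), or simply because the star configuration itself furnishes $d+1$ elements of $\partial\mathcal{L}(V_r)$. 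Assembling these observations gives the result in a couple of lines.
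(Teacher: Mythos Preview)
Your proposal is correct and follows essentially the same route as the paper, which explicitly states that the corollary ``follows in a straightforward way from \Cref{def:XY}, with the combination of \Cref{lem:basic-mindist,th:isometry,prop:star-is-spherical}.'' Your additional remark on well-definedness of $\dist(d;R;d+1)$ is a nice touch but not strictly needed, since the star configuration itself already witnesses that the set over which the maximum is taken is nonempty.
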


\section{Counting elements of balls}\label{sec:counting}

\noindent
This section is meant to add to the understanding of balls of modules, resp.\ balls in buildings in terms of their elements' count. The results of this section are self contained and do not explicitly extend results from previous sections though they call for some new observations and questions; cf.\ \cref{rmk:asymptotics,rmk:sphere-pack-bounds}.

We work here under the assumption of \cref{sec:notation+}, though we do not necesarily assume that the residue field of $K$ is finite.
We leverage on results from \cite{Voll10}, in particular its Section~3,  to give a polynomial counting the lattice classes in the ball $\B_r$. More precisely, we define $b^{(d)}_r(X)\in\Z[X]$ such that, if  $q=\lvert  \Ocal_K/\mathfrak{m}_K  \rvert $ is finite, then $\lvert \B_r \rvert =b^{(d)}_r(q)$. We do so by writing $b^{(d)}_r(X)=\sum_{\varepsilon\in\mathcal{E}_r^{(d)}}b_{\varepsilon}(X)$ where, contrarily to what is done in \cref{sec:permutation}, here $\mathcal{E}_r^{(d)}$ parametrizes the elementary divisor types of lattices  $\pi^r\Ocal_K^d\subset L\subseteq \Ocal_K^d$ up to homothety.
	The role of the polynomial $b_{\varepsilon}$ will be to count all lattice classes with the same elementary divisors. We fix $\varepsilon\in\mathcal{E}_r^{(d)}$ and proceed to define $b_{\varepsilon}(X)$. For this, write $\ell=-1+\lvert \{\varepsilon_1,\ldots,\varepsilon_{d}\} \rvert \geq 0$ and $\{\varepsilon_1,\ldots,\varepsilon_{d}\}=\{\tilde{\varepsilon}_1>\ldots >\tilde{\varepsilon}_{\ell+1}=0\}$.  	Now, for each $s\in\{1,\ldots, \ell\}$, define 
	\[
	i_s=\lvert \{i\in\{1,\ldots,d\} \mid \varepsilon_i\geq\tilde{\varepsilon}_s\} \rvert \ \textup{ and}\ 
	r_{i_s}=\tilde{\varepsilon}_s-\tilde{\varepsilon}_{s+1}.
	\]
	 We set, moreover  $\Lambda_{\varepsilon}=\Ocal_K^{d\times d}\cap\End_{\Ocal_K}(L_{\varepsilon})$ and $I=I(\varepsilon)=\{i_1<\ldots <i_{\ell}\}$.  In terms of these parameters, the endomorphism ring $\End_{\Ocal_K}(L_{\varepsilon})$ is denoted $\Gamma_{I,{\bf r}}$ in \cite{Voll10} and is explicitly described in \cite[Section~3.1]{Voll10}.
	In accordance with \cite[Section~3]{Voll10}, we finally define 
	\[
	b_{\varepsilon}(X)=\binom{d}{I}_{X^{-1}}X^{\sum_{\iota\in I}r_{\iota}\iota(d-\iota)}.
	\]
	
	\begin{figure}[h]
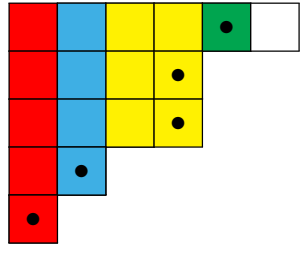

\centering
\ydiagram[*(yellow) \bullet]
  {0,3+1,3+1}
  *[*(white)]{5+1}
  *[*(Green) \bullet ]{4+1}
   *[*(red) \bullet ]{0,0,0,0,1}
    *[*(CornflowerBlue) \bullet ]{0,0,0,1+1}
*[*(CornflowerBlue)]{1+1,1+1,1+1,1+1}
*[*(yellow)]{2+2,2+1,2+1}
*[*(red)]{5,4,3,2,1}
\caption{In this figure $\varepsilon=(5,4,3,3,1,0)$. The different colors represent the different $\tilde{\varepsilon}_j$'s. The horizontal shifts represent the $i_j$'s while the vertical dots represent the $r_{i_j}$'s. Concretely, $\ell=4$ and  $(i_1,i_2,i_3,i_4)=(1,2,4,5)$ and $(r_{i_1},r_{i_2},r_{i_3},r_{i_4})=(1,1,2,1)$.}
        \label{fig:tableau}
\end{figure}	

    \noindent	
	The next result is a direct consequence of the work in \cite[Section~3]{Voll10}; cf.\ in particular \cite[Equation~(26)]{Voll10}.
	
    \begin{proposition}\cite[Section~3]{Voll10}\label{prop:ball-voll}
    Let 
    $[L]\in\mathcal{B}_d^0(K)$. Then the following hold: 
    \begin{enumerate}[label=$(\arabic*)$]
    \item\label{it:ball-voll1} for each $\varepsilon\in\mathcal{E}_r^{(d)}$, one has that $b_{\varepsilon}(X)$ is a monic integral polynomial of degree
    \[\deg b_{\varepsilon}(X)=\sum_{\iota\in I(\varepsilon)}r_\iota\iota(d-\iota)=|\Ocal_K^{d\times d}:\Lambda_{\varepsilon}|.\]
    \item\label{it:ball-voll2} 
    one has $\lvert\B_r([L])\rvert =b_r^{(d)}(q)=\sum_{\varepsilon\in\mathcal{E}_r^{(d)}}b_{\varepsilon}(q)$.
    
    \end{enumerate}
    \end{proposition}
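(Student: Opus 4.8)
The plan is to extract both statements essentially verbatim from the machinery set up in \cite[Section~3]{Voll10}, with the only real work being the translation of notation and the identification of the relevant index data. First I would recall that a lattice class $[L]\in\B_r([\Ocal_K^d])$ is, by \cref{th:isometry} and the discussion preceding it, the same datum as a homothety class of lattices $\pi^r\Ocal_K^d\subseteq L'\subseteq\Ocal_K^d$, and such a class is determined up to $\GL_d(\Ocal_K)$-action by its elementary divisor type, i.e.\ by a unique $\varepsilon\in\mathcal{E}_r^{(d)}$ (the partition of the cokernel, normalized so that $\varepsilon_d=0$). This gives the disjoint-union decomposition of $\B_r$ indexed by $\mathcal{E}_r^{(d)}$, and reduces both parts to: (a) counting the lattice classes with a fixed elementary divisor type $\varepsilon$, which is the number of $\GL_d(\Ocal_K)$-translates of $L_\varepsilon$, equivalently the index $|\GL_d(\Ocal_K):\GL_d(\Ocal_K)\cap \End_{\Ocal_K}(L_\varepsilon)^\times|$; and (b) identifying this count with $b_\varepsilon(q)$.

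For part \ref{it:ball-voll1}, the key step is to match the index data $(I,{\bf r})=(I(\varepsilon),(r_\iota)_{\iota\in I})$ defined in the excerpt with the parabolic-type data $\Gamma_{I,{\bf r}}$ of \cite[Section~3.1]{Voll10}: the set $I=\{i_1<\cdots<i_\ell\}$ records the positions at which the strictly decreasing sequence $\tilde\varepsilon_1>\cdots>\tilde\varepsilon_{\ell+1}=0$ jumps, and $r_{i_s}=\tilde\varepsilon_s-\tilde\varepsilon_{s+1}$ records the size of each jump; this is precisely what \cref{fig:tableau} illustrates. Once this dictionary is in place, the formula
\[
b_\varepsilon(X)=\binom{d}{I}_{X^{-1}}X^{\sum_{\iota\in I}r_\iota\iota(d-\iota)}
\]
is literally the polynomial appearing in \cite[Section~3]{Voll10} counting cosets of the corresponding congruence-type subgroup, so monicity and integrality are inherited from there (the Gaussian-binomial prefactor $\binom{d}{I}_{X^{-1}}$ is, after clearing denominators, a monic integral polynomial of degree $0$ in the sense that it contributes the constant term of the product correctly; one checks the leading term comes entirely from the $X$-power). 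The degree equals $\sum_{\iota\in I}r_\iota\iota(d-\iota)$ by a direct count, and this in turn equals $|\Ocal_K^{d\times d}:\Lambda_\varepsilon|$ because $\Lambda_\varepsilon=\Ocal_K^{d\times d}\cap\End_{\Ocal_K}(L_\varepsilon)$ is exactly the block-upper-triangular congruence order whose colength in $\Ocal_K^{d\times d}$ is the sum, over each off-diagonal block of shape $(i_s,d-i_s)$ appearing with ``depth'' $r_{i_s}$, of $r_{i_s}i_s(d-i_s)$; this is the content of \cite[Section~3.1]{Voll10} and can also be seen directly from the description of $\End_{\Ocal_K}(L_\varepsilon)$ in the standard basis.

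For part \ref{it:ball-voll2}, once \ref{it:ball-voll1} is known it suffices to sum over $\varepsilon\in\mathcal{E}_r^{(d)}$: since the classes with distinct elementary divisor types are disjoint and exhaust $\B_r([L])$ (using that $\B_r([L])$ is $\GL_d(K)$-equivariantly identified with $\B_r([\Ocal_K^d])$, so the center may be taken to be $[\Ocal_K^d]$), one gets $|\B_r([L])|=\sum_{\varepsilon}b_\varepsilon(q)=b_r^{(d)}(q)$, which is \cite[Equation~(26)]{Voll10} rewritten in the present notation. I expect the main obstacle to be purely bookkeeping: making the notational correspondence with \cite{Voll10} precise and airtight — in particular verifying that the normalization $\varepsilon_d=0$ together with the homothety quotient in $\mathcal{L}^0(V_r)$ matches Voll's parametrization of lattices up to scaling without an off-by-one in $r$ or a double count — rather than anything mathematically deep.
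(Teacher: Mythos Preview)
Your proposal is correct and takes essentially the same approach as the paper, which does not give a proof but simply states that the result is a direct consequence of \cite[Section~3]{Voll10}, in particular \cite[Equation~(26)]{Voll10}. Your write-up is in fact more detailed than the paper's treatment, spelling out the dictionary between $\varepsilon$, the data $(I(\varepsilon),{\bf r})$, and Voll's $\Gamma_{I,{\bf r}}$ that the paper leaves to the reader.
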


	\begin{example}\label{ex:counting}
	 For $[L]\in\mathcal{B}_3^0(K)$, we have
	 \begin{align*}
 \lvert \B_2([L]) \rvert =& b^{(3)}_2(q) \\ =&  b_{(0,0,0)}(q)+(b_{(1,0,0)}(q)+b_{(1,1,0)}(q)) + (b_{(2,0,0)}(q)+b_{(2,1,0)}(q)+b_{(2,2,0)}(q))\\
	      = & 1+2(q^2+q+1)+(2(q^4+q^3+q^2)+(q^4+2q^3+2q^2+q))\\ 
	      = & 3q^4+4q^3+6q^2+3q+3.
	 \end{align*}
	\end{example}
	
	

\begin{definition}
Let $\rev:\Z^d\rightarrow\Z^d$ be the involution defined by 
\[\varepsilon=(\varepsilon_1,\ldots,\varepsilon_d)\longmapsto\rev(\varepsilon)=(\varepsilon_d,\ldots,\varepsilon_1).\]
\end{definition}

\begin{lemma}\label{lem:ereve}
Let $\lambda$ be a non-negative integer and let $\varepsilon,\varepsilon'\in\mathcal{E}_r^{(d)}$. The following hold:
\begin{enumerate}[label=$(\arabic*)$]
\item\label{it:ereve1} If $\varepsilon+\rev{\varepsilon'}=\lambda{\bf 1}$, then $\deg b_{\varepsilon}(X)=\deg b_{\varepsilon'}(X)$.
\item\label{it:ereve2} If $k\in\{1,\ldots,d\}$ is such that 
\[
\varepsilon-\varepsilon'=(\delta_{ik}\lambda)_{i=1,\ldots,d}
\]
then $\deg b_{\varepsilon}(X)=\deg b_{\varepsilon'}(X)+(d+1-2k)\lambda$.
\end{enumerate}
\end{lemma}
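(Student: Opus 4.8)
The strategy is to reduce both parts to the degree formula $\deg b_{\varepsilon}(X)=\sum_{\iota\in I(\varepsilon)}r_\iota\,\iota(d-\iota)$ from \cref{prop:ball-voll}\ref{it:ball-voll1}, and to track how the combinatorial data $(\ell, \tilde\varepsilon_1>\dots>\tilde\varepsilon_{\ell+1}=0, I=\{i_1<\dots<i_\ell\}, r_{i_s}=\tilde\varepsilon_s-\tilde\varepsilon_{s+1})$ change under the two prescribed operations. For part \ref{it:ereve1}, I would first observe that if $\varepsilon+\rev(\varepsilon')=\lambda{\bf 1}$, then $\varepsilon_i = \lambda - \varepsilon'_{d+1-i}$ for all $i$, so the distinct values of $\varepsilon'$ are $\lambda-\tilde\varepsilon_1<\dots<\lambda-\tilde\varepsilon_{\ell+1}$; hence $\varepsilon'$ has the same $\ell$, and if $\tilde\varepsilon'_1>\dots>\tilde\varepsilon'_{\ell+1}=0$ are its distinct values normalized to end in $0$, one checks $\tilde\varepsilon'_s = \tilde\varepsilon_{\ell+2-s} - \tilde\varepsilon_{\ell+1} + \dots$ — more cleanly, $\tilde\varepsilon'_s = \tilde\varepsilon_1 - \tilde\varepsilon_{\ell+2-s}$ after reindexing. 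The key point is that the multiset of gaps $\{r_{i_1},\dots,r_{i_\ell}\}$ is reversed, and simultaneously each index $i_s$ (counting coordinates $\geq \tilde\varepsilon_s$) becomes $d - i_{\ell+1-s}$ for $\varepsilon'$, because reversing the vector swaps ``how many coordinates are large'' with ``how many are small.'' Since $\iota(d-\iota)$ is symmetric under $\iota\mapsto d-\iota$, the sum $\sum_s r_{i_s} i_s(d-i_s)$ is invariant. I would write this out as a single reindexed sum to conclude equality of degrees.

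For part \ref{it:ereve2}, the hypothesis $\varepsilon-\varepsilon' = \lambda e_k$ (the $k$-th standard basis vector scaled by $\lambda$) means $\varepsilon$ and $\varepsilon'$ differ only in coordinate $k$, with $\varepsilon_k = \varepsilon'_k + \lambda$. Here I must be a little careful about how the combinatorial data shift, since changing one coordinate can merge or split value-classes; but the cleanest route is the \emph{telescoping} reformulation of the degree. Note that $\sum_{\iota\in I(\varepsilon)} r_\iota\,\iota(d-\iota)$ can be rewritten by summing over ``levels'': writing $f(t)=|\{i:\varepsilon_i\geq t\}|$, one has $\deg b_\varepsilon(X) = \sum_{t=1}^{r} f(t)\,(d-f(t))$, because as $t$ runs from $\tilde\varepsilon_{s+1}+1$ up to $\tilde\varepsilon_s$ the value $f(t)$ is constantly $i_s$, contributing exactly $r_{i_s} \cdot i_s(d-i_s)$. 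This level-sum form makes part \ref{it:ereve1} transparent too (reversing $\varepsilon$ replaces $f(t)$ by $d-f(t)$ at each level, up to relabeling levels, leaving $f(d-f)$ fixed). For \ref{it:ereve2}, raising coordinate $k$ from $\varepsilon'_k$ to $\varepsilon'_k+\lambda$ adds $1$ to $f(t)$ precisely for $t$ in the range $\varepsilon'_k+1 \le t \le \varepsilon'_k+\lambda$, a set of $\lambda$ levels. On each such level, if the value before was $f$, the contribution changes from $f(d-f)$ to $(f+1)(d-f-1) = f(d-f) + (d - 2f - 1)$. The main obstacle — and the thing to handle carefully — is identifying $f$ on exactly those $\lambda$ levels: since after the change coordinate $k$ equals $\varepsilon_k$ and $k$ is given as the position such that $\varepsilon-\varepsilon'=\lambda e_k$, one needs that on each of these levels the number of coordinates $\ge t$ (counting after the bump) is exactly $k$; equivalently $f(t)=k-1$ before the bump for $t$ in that window. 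I would argue this by noting that for these levels $t$, the coordinates $\ge t$ are exactly those $\varepsilon_i$ with $i<k$ among the sorted order (using that $\varepsilon$ is weakly decreasing and the bump is at position $k$), giving $f(t)=k-1$, hence the per-level increment is $(d-2(k-1)-1) = d+1-2k$, and summing over the $\lambda$ levels gives the claimed $(d+1-2k)\lambda$.

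\textbf{Main obstacle.} The genuinely delicate step is \ref{it:ereve2}: one must pin down that the $\lambda$ affected levels are precisely those on which the ``how many coordinates are at least $t$'' count equals $k-1$ (before) or $k$ (after), and that this count is constant across the whole window. This relies on the normalization that $\varepsilon,\varepsilon'\in\mathcal{E}_r^{(d)}$ are weakly decreasing with last coordinate $0$, so that the coordinate being increased sits at position $k$ in sorted order both before and after the bump — a monotonicity/consistency check that I would state as a short lemma or inline observation. Everything else (the telescoping identity for $\deg b_\varepsilon$, the symmetry $\iota(d-\iota)$, the reversal bookkeeping in \ref{it:ereve1}) is routine once the level-sum reformulation is in place.
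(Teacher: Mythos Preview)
Your proposal is correct. Both you and the paper begin by replacing the structural formula $\sum_{\iota\in I(\varepsilon)} r_\iota\,\iota(d-\iota)$ with an expression that depends directly on the entries of $\varepsilon$; you choose the level-sum $\deg b_\varepsilon(X)=\sum_{t\ge 1} f(t)\bigl(d-f(t)\bigr)$ with $f(t)=|\{i:\varepsilon_i\ge t\}|$, whereas the paper uses the equivalent pairwise-difference sum $\deg b_\varepsilon(X)=\sum_{1\le i<j\le d}(\varepsilon_i-\varepsilon_j)$, read off from the index $|\Ocal_K^{d\times d}:\Lambda_\varepsilon|$ in \cref{prop:ball-voll}\ref{it:ball-voll1}. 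For part~\ref{it:ereve1} the two formulations are equally direct. For part~\ref{it:ereve2}, however, the pairwise form eliminates precisely the obstacle you flag: splitting the sum into pairs not involving $k$, pairs $(k,j)$ with $j>k$, and pairs $(i,k)$ with $i<k$ gives the increment $(d-k)\lambda-(k-1)\lambda=(d+1-2k)\lambda$ by inspection, with no need to argue that $f(t)$ is constant on the affected window. Your verification that $f(t)=k-1$ there is still valid---it follows from $\varepsilon'_{k-1}\ge\varepsilon_k=\varepsilon'_k+\lambda\ge t>\varepsilon'_k\ge\varepsilon'_{k+1}$, which is forced by the hypothesis that both $\varepsilon$ and $\varepsilon'$ lie in $\mathcal{E}_r^{(d)}$---but the paper's route sidesteps it entirely.
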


\begin{proof}
(1) Assume that $\varepsilon+\rev{\varepsilon'}=\lambda{\bf 1}$, equivalently, for all $i\in\{1,\ldots,d\}$, one has $\varepsilon_i=\lambda-\varepsilon_{d-i+1}$. It follows from \Cref{prop:ball-voll}\ref{it:ball-voll1} that
\begin{align*}
\deg b_{\varepsilon}(X)  = |\Ocal_K^{d\times d}:\Lambda_{\varepsilon}|&=\sum_{1\leq i<j\leq d}\varepsilon_i-\varepsilon_j
=\sum_{1\leq i<j\leq d}\varepsilon'_{d-j+1}-\varepsilon'_{d-i+1}\\
&=\sum_{1\leq s<t\leq d}\varepsilon'_s-\varepsilon'_t=|\Ocal_K^{d\times d}:\Lambda_{\varepsilon'}|=\deg b_{\varepsilon'}(X).
\end{align*}
(2) Let $k\in\{1,\ldots,k\}$ be such that 
\[
\varepsilon_s=\begin{cases}
\varepsilon_s' & \textup{ if } s\neq k, \\
\varepsilon_s'+\lambda & \textup{ if } s=k.
\end{cases}
\]
It follows from \Cref{prop:ball-voll}\ref{it:ball-voll1} that
\begin{align*}
\deg b_{\varepsilon}(X)  = |\Ocal_K^{d\times d}:\Lambda_{\varepsilon}|&=\sum_{1\leq i<j\leq d}\varepsilon_i-\varepsilon_j\\ 
&=\sum_{\substack{1\leq i<j\leq d\\ i,j\neq k}}\varepsilon'_i-\varepsilon'_j
+\sum_{k<j\leq d}(\varepsilon_k'+\lambda-\varepsilon_j')+\sum_{1\leq i<k}(\varepsilon_i'-\varepsilon_j'-\lambda)
\\
&=\sum_{1\leq s<t\leq d}\varepsilon'_s-\varepsilon'_t+(d+1-2k)\lambda\\
 &=|\Ocal_K^{d\times d}:\Lambda_{\varepsilon'}|+(d+1-2k)\lambda=\deg b_{\varepsilon'}(X)+(d+1-2k)\lambda.
\end{align*}
\end{proof}

\noindent
The proof of the next result shows that the asymptotics of $|\B_r([L])|$ is dominated by $|\partial\B_r([L])|$, i.e.\ the dominating summands in $b_r^{(d)}(X)$ correspond to  elements of $\partial\mathcal{E}_r^{(d)}$. 

\begin{theorem}\label{th:asymptotics}
The following hold:
\begin{enumerate}[label=$(\arabic*)$]
\item If $d$ is even, then the leading term of $b_r^{(d)}(X)$ is $X^{d^2r/4}$.
\item If $d$ is odd, then the leading term of $b_r^{(d)}(X)$ is $(r+1)X^{(d^2-1)r/4}$. 
\end{enumerate}
\end{theorem}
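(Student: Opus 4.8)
The strategy is to find, for each fixed $d$ and $r$, which elementary divisor types $\varepsilon \in \mathcal{E}_r^{(d)}$ maximize $\deg b_\varepsilon(X)$ and then count how many types achieve this maximal degree; by \cref{prop:ball-voll}\ref{it:ball-voll1} each $b_\varepsilon$ is monic, so $|\partial\mathcal{E}_r^{(d)}|_{\max}$ such monic polynomials contribute and the leading term of $b_r^{(d)}(X)$ is exactly (number of degree-maximizing $\varepsilon$)$\cdot X^{\max_\varepsilon \deg b_\varepsilon}$. Recall from the proof of \cref{lem:ereve} that $\deg b_\varepsilon(X) = \sum_{1\le i<j\le d}(\varepsilon_i - \varepsilon_j)$, so the whole problem reduces to maximizing this linear functional over the discrete polytope $\mathcal{E}_r^{(d)} = \{r \ge \varepsilon_1 \ge \cdots \ge \varepsilon_{d-1}\ge \varepsilon_d = 0\}$ and identifying the face of maximizers.

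\textbf{Key steps.} First I would rewrite $\sum_{i<j}(\varepsilon_i-\varepsilon_j) = \sum_{i=1}^d (d+1-2i)\varepsilon_i = \sum_{i=1}^{d-1}(d+1-2i)\varepsilon_i$ (the $i=d$ term vanishes since $\varepsilon_d=0$), which displays the objective as a weighted sum with weights $d+1-2i$ that are positive for $i < (d+1)/2$ and negative for $i > (d+1)/2$. Since $\mathcal{E}_r^{(d)}$ forces $r\ge\varepsilon_1\ge\cdots\ge\varepsilon_{d-1}\ge 0$, to maximize we want $\varepsilon_i$ large (equal to $r$) for small $i$ and small (equal to $0$) for large $i$; the monotonicity constraint is automatically respected by any ``threshold'' vector $\varepsilon = (r,\ldots,r,0,\ldots,0)$. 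Step two: when $d = 2e$ is even, the weight $d+1-2i$ is never zero, strictly positive for $i\le e$ and strictly negative for $i\ge e+1$, so the unique maximizer is $\varepsilon^* = (\underbrace{r,\ldots,r}_{e}, 0,\ldots,0)$, giving $\deg b_{\varepsilon^*} = r\sum_{i=1}^{e}(d+1-2i) = r\cdot e^2 = d^2r/4$ and establishing case (1) with leading coefficient $1$. Step three: when $d = 2e+1$ is odd, the weight at $i = e+1$ is $d+1-2(e+1) = 0$, so the coordinate $\varepsilon_{e+1}$ is free to take any value between $\varepsilon_{e+2}=0$ (forced, since all later weights are negative we set $\varepsilon_{e+2}=\cdots=\varepsilon_{d-1}=0$) and $\varepsilon_e = r$ (forced, since all earlier weights are positive we set $\varepsilon_1=\cdots=\varepsilon_e=r$); thus the maximizers are exactly the $r+1$ vectors $(\underbrace{r,\ldots,r}_{e}, c, 0,\ldots,0)$ for $c\in\{0,1,\ldots,r\}$, each of degree $r\sum_{i=1}^{e}(d+1-2i) = r\cdot e(e+1) = r(d^2-1)/4$, which gives case (2) with leading coefficient $r+1$. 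Finally I would note that all these maximizing $\varepsilon$ lie in $\partial\mathcal{E}_r^{(d)}$ (they all have $\varepsilon_1 = r$), confirming the remark preceding the theorem that $\partial\B_r([L])$ dominates the asymptotics.

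\textbf{Main obstacle.} The only genuinely delicate point is verifying that no \emph{non-threshold} vector can tie the threshold maximum — i.e.\ that relaxing $\varepsilon_i$ below $r$ for some $i$ with positive weight, or raising $\varepsilon_i$ above $0$ for some $i$ with negative weight, strictly decreases the objective unless the weight is zero. This is immediate from the linearity of the objective and the sign pattern of the weights: any feasible $\varepsilon$ satisfies $\deg b_\varepsilon = \sum_i (d+1-2i)\varepsilon_i \le \sum_{i:\,d+1-2i>0}(d+1-2i)r + \sum_{i:\,d+1-2i<0}(d+1-2i)\cdot 0$, with equality iff $\varepsilon_i = r$ on positive-weight coordinates and $\varepsilon_i = 0$ on negative-weight coordinates, the zero-weight coordinate (in the odd case) being unconstrained — and one checks that each such choice is indeed compatible with the chain $r\ge\varepsilon_1\ge\cdots\ge\varepsilon_{d-1}\ge0$. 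Everything else is the routine arithmetic of $\sum_{i=1}^e (d+1-2i)$, which I would not grind through in detail.
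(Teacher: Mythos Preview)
Your argument is correct and in fact more direct than the paper's. Both proofs rest on the identity $\deg b_\varepsilon(X)=\sum_{i<j}(\varepsilon_i-\varepsilon_j)$, which appears in the proof of \cref{lem:ereve}, but from there they diverge. The paper first guesses the set $\mathcal{S}$ of maximizers, then partitions $\mathcal{E}_r^{(d)}\setminus\mathcal{S}$ into $\mathcal{S}_-$ and $\mathcal{S}_+$ via lexicographic order, handles $\mathcal{S}_-$ by (iterated) single-coordinate moves using \cref{lem:ereve}(2), and reduces $\mathcal{S}_+$ to $\mathcal{S}_-$ via the reversal involution of \cref{lem:ereve}(1). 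You instead rewrite the objective as the linear form $\sum_i(d+1-2i)\varepsilon_i$ and maximize it in one stroke over the box $0\le\varepsilon_i\le r$, reading off both the maximal value and the full set of maximizers from the sign pattern of the weights; compatibility with the chain constraint $\varepsilon_1\ge\cdots\ge\varepsilon_{d-1}$ is then a one-line check. Your route effectively absorbs \cref{lem:ereve}(2) (whose content is precisely that the $k$-th weight is $d+1-2k$) and makes \cref{lem:ereve}(1) and the lex-order machinery unnecessary; the paper's route, on the other hand, makes the $\rev$-symmetry of the problem explicit. Your final remark that every maximizer has $\varepsilon_1=r$, hence lies in $\partial\mathcal{E}_r^{(d)}$, recovers the paper's comment preceding the theorem.
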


\begin{proof}
We prove (2). To this end, write $d=2k+1$ and define the subset 
$\mathcal{S}$ of $\mathcal{E}_r^{(d)}$ to consist of all elements $\varepsilon$ satisfying
\[
\varepsilon_i=\begin{cases}
r & \textup{ if } i< k+1,\\
0 & \textup{ if } i> k+1.
\end{cases}
\] 
Then $\mathcal{S}$ has cardinality $r+1$. Let moreover, $\mathcal{S}_{-}$ and $\mathcal{S}_+$ denote the subsets of $\mathcal{E}_r^{(d)}$ of those elements that are smaller resp.\ bigger than elements in $\mathcal{S}$, with respect to the lexicographic order. Then $\mathcal{E}_r^{(d)}$ equals the disjoint union $\mathcal{S}_-\cup\mathcal{S}\cup\mathcal{S}_{+}$. Let now $\varepsilon\in\mathcal{E}_r^{(d)}$ and $\varepsilon^*\in\mathcal{S}$. If $\varepsilon\in\mathcal{S}_-$, then \Cref{lem:ereve}\ref{it:ereve2} yields that $\deg b_{\varepsilon}(X)<\deg b_{\varepsilon^*}(X)$. Moreover, \Cref{lem:ereve}\ref{it:ereve2} also ensures that, if $\varepsilon\in\mathcal{S}$, then $\deg b_{\varepsilon}(X)=\deg b_{\varepsilon^*}(X)$. Assume now that $\varepsilon\in\mathcal{S}_+$: we claim that $\deg b_{\varepsilon}(X)>\deg b_{\varepsilon^*}(X)$. To this end, define $\varepsilon'=\rev(r{\bf 1}-\varepsilon)$ and note that $\varepsilon'\in \mathcal{S}_-$. 
Now $\deg b_{\varepsilon}(X)>\deg b_{\varepsilon^*}(X)$ thanks to \cref{lem:ereve}\ref{it:ereve1} and so we conclude thanks to \Cref{prop:ball-voll}\ref{it:ball-voll1}.

To prove (1), one can proceed in an analogous way by defining $\mathcal{S}$ to be the singleton consisting of the vector whose first $d/2$ entries are equal to $r$ and all others are $0$.
\end{proof}

\begin{remark}[Asymptotic of balls against Sperner codes]\label{rmk:asymptotics}
We have seen in \cref{sec:sperner-codes} that, if $\mathcal{C}$ is a Sperner code with parameters $(d;R;\alpha)$ and $e=\lceil d/2\rceil$, then the cardinality of $\Ccal$ is the same as that of $\Gr(e,V_{r+\alpha-1})$. In particular, thanks to \cref{prop:ball-voll}(1), we know that the leading term of the polynomial describing $|\Ccal|$ is equal to $q^{(r+1-\alpha)e(d-e)}$. Rewriting thus compactly the degree of the leading terms from \cref{th:asymptotics} as $re(d-e)$, we get that the density of a Sperner code on $\partial\B_r$ is asymptotically equivalent (as $q\to\infty$) to
\[
q^{(1-\alpha)e(d-e)}\cdot\begin{cases}
1 & \textup{ if } d \textup{ is even}, \\
(r+1)^{-1} & \textup{ otherwise}.
\end{cases}
\]
\end{remark}

\begin{remark}[Analogue of sphere packing bounds for odd distances]\label{rmk:sphere-pack-bounds}
Let $\Ccal$ be a spherical code in $\B_r$, as defined in \cref{def:spherical-code-building}, of odd minimum distance $2\alpha+1$. In this case, it is clear that any two elements $[L]$ and $[L']$ of $\Ccal$ satisfy $\B_{\alpha}([L])\cap\B_\alpha([L'])=\emptyset$. It follows therefore that a very loose sphere packing bound on the cardinality of $\Ccal$ is given by
\[
|\Ccal|\leq \frac{|\B_{r+\alpha}|-|\B_{r-\alpha}|}{|\B_{\alpha}|-1},
\]
which indeed, thanks to \cref{th:asymptotics}, is asymptotically no better that the known trivial bound given by $|\partial\B_r|$. For a better asymptotic bound  one should compute, for $[L]\in\partial\B_r$ the size of $\B_\alpha([L])\cap \partial\B_r$ yielding the tighter
\[
|\Ccal|\leq \frac{|\partial\B_r|}{|\B_\alpha([L])\cap \partial\B_r|};
\]
compare with \cite[Theorem~1.6.1]{EriZin/01}.
What is the asymptotic behaviour of the right term of the last inequality as $q\to\infty$?
\end{remark}

    \bigskip \noindent
	{\bf Acknowledgements}.
I wish to thank Gabriele Nebe for suggesting the investigation of spherical codes in the context of buildings and for her precious feedback on the content of this paper. In addition, I am thankful to Christopher Voll for pointing out and illustrating the results from \cite{Voll10}. I am grateful to Yassine El Maazouz, Alessandro Neri, Bernd Sturmfels, and Christopher Voll for their very helpful comments on an early version of this manuscript.  I thank the two anonymous referees for their valuable reports, which helped improve the exposition of this paper. This project was supported by the Deutsche Forschungsgemeinschaft 
	(DFG, German Research Foundation) -- Project-ID 286237555 -- TRR 195.

\end{document}